\newtheorem{theorem}{Theorem}
\newtheorem{proof}{Proof}
\newtheorem{lemma}{Lemma}
\newtheorem{corollary}{Corollary}
\journal{Physical Communication Journal}
\renewcommand{\fnum@figure}{Fig. \thefigure}
\DeclarePairedDelimiterX \MeijerM[3]{\lparen}{\rparen}%
{\,#3\delimsize\vert\begin{smallmatrix}#1 \\ #2\end{smallmatrix}}
\newcommand\MeijerG[8][]{%
	G^{\,#2,#3}_{#4,#5}\MeijerM[#1]{#6}{#7}{#8}}
\newcommand\MeijerG*[7]{%
	G^{\,#1,#2}_{#3,#4}\MeijerM*{#5}{#6}{#7}}
\begin{document}

\begin{frontmatter}

%% Title, authors and addresses

%% use the tnoteref command within \title for footnotes;
%% use the tnotetext command for theassociated footnote;
%% use the fnref command within \author or \address for footnotes;
%% use the fntext command for theassociated footnote;
%% use the corref command within \author for corresponding author footnotes;
%% use the cortext command for theassociated footnote;
%% use the ead command for the email address,
%% and the form \ead[url] for the home page:
\title{Capacity Analysis for Joint Radar-Communication Capable Coherent MIMO Radars}
%\tnotetext[label1]{}
\author[label1]{Muharrem Arik}%\corref{cor1}\fnref{label2}}
\ead{marik@ku.edu.tr}
\ead[url]{https://nwcl.ku.edu.tr/marik.html}
%\fntext[label2]{}
%\cortext[cor1]{}
%\address{Address\fnref{label3}}
%\fntext[label3]{}

\author[label1,label2]{Ozgur B. Akan}%\corref{cor1}\fnref{label4}}
\ead{akan@ku.edu.tr, oba21@cam.ac.uk}
\ead[url]{https://ioe.eng.cam.ac.uk/directory/akan}

\address[label1]{Next-generation and Wireless Communications Laboratory (NWCL)\\ 
Department of Electrical and Electronics Engineering\\
Koc University, Istanbul, 34450, Turkey}
\address[label2]{Internet of Everything (IoE) Group\\
Electrical Engineering Division, Department of Engineering\\
University of Cambridge, Cambridge, CB3 0FA, UK}%\fnref{label3}}

%% use optional labels to link authors explicitly to addresses:
%% \author[label1,label2]{}
%% \address[label1]{}
%% \address[label2]{}

%\author{asff}

%\address{burasi}

\begin{abstract}
Recently, huge attention is attracted to the concept of integrating communication and radar missions within the same platform. Joint Radar-Communications (JRC) system gives an important opportunity to reduce spectrum usage and product cost while doing concurrent operation, as target sensing via radar processing and establishing communication links. A JRC-capable coherent MIMO radar system have been proposed recently in the literature. Several methods are introduced to reach dual goal as a notable null level towards the direction of interest of the radar and MIMO radar waveform orthogonality. Due to the limitations originated form the JRC operation, communication channel may encounter unwanted amplitude variations. This unwanted modulation normally affects the communication performance by its nature, due to the fades on radiated signal amplitude towards the direction of communication. However, the effect of this unintentional modulation on communication channel is yet to be investigated. In this paper, the communication channel for JRC capable phase-coded coherent MIMO radars is analyzed and investigated under additive white Gaussian noise and Rayleigh\textcolor{blue}{/Rician} fading conditions. Communication capacity is evaluated for each channel condition. The results reveal that, using the single-side limited null direction fixed waveform generation method displays the best capacity performance under all channel conditions.
\end{abstract}

\begin{keyword}
	MIMO radar, joint communication and radar sensing, channel capacity, fading channels. 
\end{keyword}\end{frontmatter}

%% \linenumbers

%% main text

\section{Introduction}
\label{intro}
Several techniques are proposed over the years to enable JRC concept with different types of radar and communication systems and different topologies \cite{surveyRFCommRadar}. Almost all types of radars and all types of modulation are becoming the subject to this research domain. Mostly, efforts are concentrated on enhancements on data rates, modulation types and transparency over radar sensing operation. 

Up to now, JRC systems are investigated mostly on radar perspective and the majority of the researchers have spent more efforts on finding out how much performance the radar system loses when communication is enabled. However, due to the limited resources and being a secondary mission on radar system, communication operation has often remained behind the scenes for a JRC system. To assess the success of a communication channel, the main issue to consider is channel capacity. Basically, the channel capacity is the information theoretic performance metric for a communication channel. There are several researches on communication channel capacity or performance limits of a JRC system. In \cite{innerBoundsRadarComm}, radar estimation rate concept was proposed to investigate JRC information limits. This rate is a measure of the amount of information collected from target in terms of information which might be exchanged uncooperatively between target and radar as a function of time. This type of metrics are used to optimize collaborative operation between radar and communication systems. There are many researches which consider communication capable radar waveform design. Each work tries to find the best waveform that provides invisible communication when flawless radar sensing operation is conducted. In \cite{highSNRchannelCap}, authors investigate high-SNR capacity of a communications channel based on some coding alphabet in the presence of AWGN. This alphabet presents a hybrid approach which constitutes a dual optimize JRC operation.
 
\begin{figure}[htb]
	\centering
	\includegraphics[width=0.50\textwidth]{./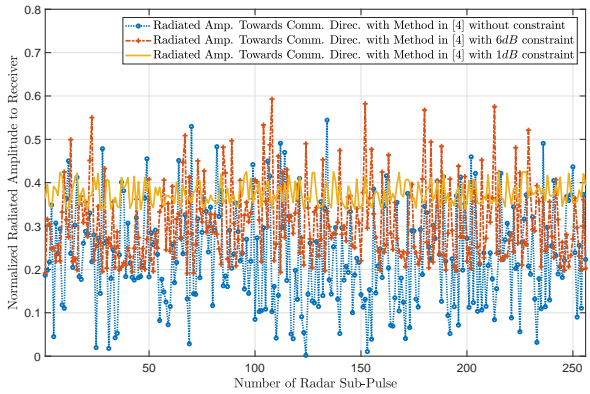}
	\caption{Unintentional amplitude modulation on radiated signal towards communication direction for JRC capable coherent MIMO radars.}
	\label{fig:unintenionalModCohRadar}
\end{figure}

Recently, a JRC-capable monostatic coherent MIMO radar system have been proposed in \cite{RealcMIMO}. Coherent MIMO radars offer fully coherent signal processing and coherent transmit beamforming which presents radiation patterns to reduce the probability of intercept. In \cite{RealcMIMO}, MIMO radar waveform coding technique in \cite{MIMOradarWaveformOrth} is modified for enabling communication without disturbing coherent MIMO radar structure. JRC ability is accomplished for coherent MIMO radar without disturbing the orthogonality and transmit beamforming requirements. Three different novel communication methods are proposed in \cite{RealcMIMO}. First method uses each sub-pulse in the radar pulse for communication. This method is directional and it may provide thousands of bits per radar pulse. Second method exploits coding mechanism through the radar pulse which is only visible at communication direction. The phase of the information vector is modulated for each radar pulse. Last communication method utilizes the progressive phase difference between orthogonal waveforms radiated from the first and the last antenna element caused by small degrees of steering on transmit beampattern. \textcolor{blue}{ Specifically, null direction fixed phase-coded coherent MIMO radar waveform generation method in \cite{RealcMIMO} produces an unintentional amplitude modulation during the transmission of the radar pulse towards the communication direction. In \cite{RealcMIMO}, in order to provide amplitude stable channel towards receiver, some constraints are introduced during the waveform generation phase. Fig.\ref{fig:unintenionalModCohRadar} displays the fading like amplitude variations during the radar pulse with $3dB$ and $6dB$ constraints. Blue curve in Fig.\ref{fig:unintenionalModCohRadar} shows the amplitude modulation without any constraint applied during waveform generation.} This modulation affects the communication performance by its nature due to possible fades on signal amplitude through the radar pulse.

In this paper, first, we give channel models for different cases, then, we analyze the distribution characteristics of the unintentional modulation on signal amplitude. After that, using this distribution, we try to reach the capacity expressions for different communication methods under additive white Gaussian noise (AWGN) and Rayleigh\textcolor{blue}{/Rician} fading conditions. Lastly, we evaluate the probability density function (pdf) of the distributions and capacity expressions for various communication methods proposed for JRC capable coherent MIMO radar. While \textcolor{blue}{analyzing} the unintentional modulation under Rayleigh\textcolor{blue}{/Rician} fading conditions, pdf of the distribution of product of a single or double truncated Rayleigh r.v. and a Rayleigh\textcolor{blue}{/Rician} r.v. is firstly given in the literature.
\begin{table}[t]
	\caption{List of major symbols and notations}
	\label{tab:notation}
	\resizebox{.9\columnwidth}{!}{
		\begin{tabular}{c l}
			\hline\\
			$M$ & Number of MIMO radar antenna element\\
			$L$ & Number of sub-pulse in a radar pulse\\
			$\textbf{s}$ & $M\times1$ transmit signal vector\\
			$P_t$ & Total transmit power budget\\
			$P_c(\theta_c)$ & Average transmit power towards communication direction\\
			$f_c$ & Carrier signal frequency\\
			$\textbf{w}_l$ & $1\times M$ transmit beamforming vector of the sub-pulse $l$\\
			$\boldsymbol{\Phi}$ & $L\times M$ \textcolor{blue}{space-time phase coding matrix}\\
			$\theta_{c}$ & Communication receiver direction\\
			$\theta_{n}$ & Null direction towards target or clutter\\
			$\textbf{a}_{tr}(\theta)$ & $M\times1$ steering vector of the transmit array towards $\theta$\\
			$G_l(\theta)$ & Radiated signal towards the direction of $\theta$\\
			$A_l$ & Magnitude of the radiated signal towards communication direction $\theta_{c}$\\
			$H_t$ & complex channel gain with amplitude $H$ and phase $\phi_H$ at time $t$\\
			$\hat{s}_{r}$ & transmitted signal towards $\theta_{c}$, $G_l(\theta_{c})s(t)$\\
			$\hat{s}^{comm}_{r}$ & $\hat{s}_{r}$ when communication direction fixed method in \cite{RealcMIMO} is used, $G_1(\theta_{c})s(t)$\\
			$\hat{s}^{null}_{r}$ & $\hat{s}_{r}$ when null direction fixed method in \cite{RealcMIMO} is used, $G_l(\theta_{c})s(t)$\\
			$\Delta$ & Positive number that defines communication pre-distortion level\\
			$\sigma_{N}^{2}$ & Additive White Gaussian Noise (AWGN) variance\\
			$\sigma_{H}^{2}$ & variance of the both real/imaginary part of the complex r.v. $H_t$\\
			$(.)^{*}$ & Complex conjugate operator\\
			$(.)^{\dagger}$ & Complex conjugate transpose operator\\
			\hline
		\end{tabular}
	}
\end{table}
The remainder is organized as follows. Firstly, the channel model is given in Sec. II. Then, amplitude distribution of the unintentional modulation is investigated in Sec. III and capacity expressions for JRC capable coherent MIMO radars are detailed in Sec. IV. Sec. V provides the numerical evaluation results. Lastly, conclusions are discussed in Sec. VI. For ease of exposition, the major notations used throughout this paper are listed in Table \ref{tab:notation}.

\section{Channel Model}
\label{SigMod}

\begin{figure}[t]
	\centering
	\includegraphics[width=0.5\textwidth]{./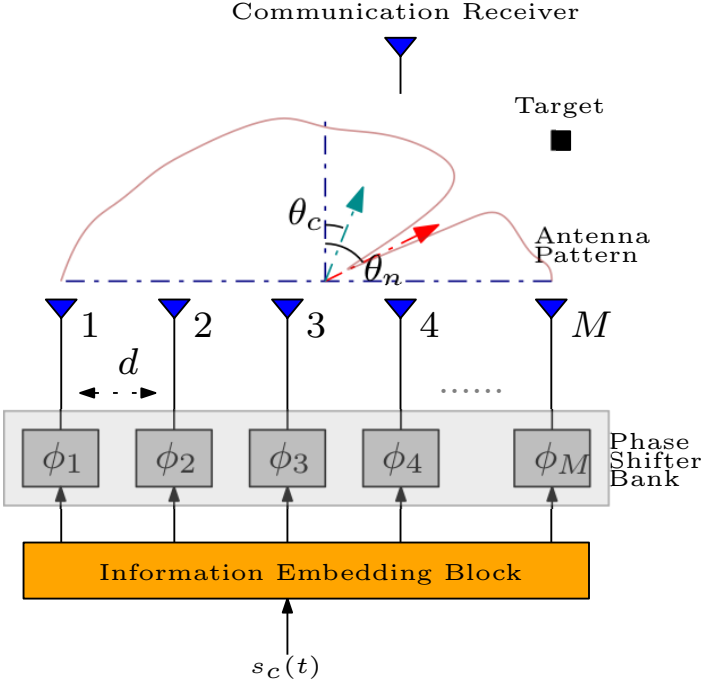}
	\caption{JRC enabled Coherent MIMO radar transmitter architecture.}
	\label{fig:jrcCoherentMIMO}
\end{figure}
We utilize the same channel model as proposed in our previous work in \cite{RealcMIMO}. \textcolor{blue}{Corresponding} JRC system is equipped with $M$ \textcolor{blue}{element} MIMO radar antenna aligned as a uniform linear array (ULA) as in Fig. \ref{fig:jrcCoherentMIMO}. In the array, each antenna element has an isotropic radiation pattern with spacing distance of $d$ in terms of a wavelength. Moreover, each antenna radiates distinct phase coded quasi-orthogonal waveforms at the same carrier frequency. $M\times1$ transmit signal vector during the $l$-th sub-pulse of the $r$-th radar pulse is given as,
\begin{equation}
\label{st}
\textbf{s}(t;r;l)=\textbf{w}_{l}^{\dagger}(r)s_{c}(t), l=1,2,...,L,
\end{equation}where $r$ and $l$ are the radar pulse and sub-pulse index, respectively and $t$ is the fast time index, $(.)^{\dagger}$ stands for the complex conjugate transpose and carrier signal can be given as $s_{c}(t)=\sqrt{P_{t}/M}e^{j2\pi f_{c}t}$, where $P_{t}$ is the total transmit power budget and $f_{c}$ is the radar carrier \textcolor{blue}{ frequency.} Then, $\textbf{w}_{l}$ is the $1\times M$ transmit beamforming weight vector of the sub-pulse $l$ as, $\textbf{w}_{l}=[e^{j\phi_{1}(l)}, e^{j\phi_{2}(l)}, ..., e^{j\phi_{M}(l)}]$, $l=1,2,...,L$, where $\phi_{m}(l)$ is the phase of the sub-pulse $l$ from the antenna element $m$.
For each radar pulse, \textcolor{blue}{angle of the} $L\times M$ space-time phase coding matrix for $M$ antennas and $L$ sub-pulses can be given as,
\begin{equation}
\label{stMatrix}
\begin{split}
\boldsymbol{\varPhi} =&\angle{\boldsymbol{\Phi}}=\left[ 
\begin{matrix}
\angle\textbf{w}_{1}\\
\angle\textbf{w}_{2}\\
\vdots\\
\angle\textbf{w}_{L}
\end{matrix}\right]
=\left[ 
\begin{matrix}
\phi_{1}(1)&\phi_{2}(1)& ... &\phi_{M}(1)\\
\phi_{1}(2)&\phi_{2}(2)& ... &\phi_{M}(2)\\
\vdots&\vdots&\vdots&\vdots\\
\phi_{1}(L)&\phi_{2}(L)& ... &\phi_{M}(L)\\
\end{matrix}\right],
\end{split}
\end{equation}where $\angle$ is the phasor angle function.

At the receiver side, single omni-directional antenna element is connected to a communication receiver located in direction $\theta_{c}$, which is assumed to be known from the receiver. Besides, JRC enabled communication receiver is equipped with a matched filter. The received signal at the communication receiver for sub-pulse $l$ can be expressed as,
\begin{equation}
\label{yt}
y(t;r;l)=h(t)\textbf{a}_{tr}^{T}(\theta_{c})\textbf{s}(t;r;l)+\eta(t),
\end{equation}where $\textbf{a}_{tr}(\theta)$ is the $M\times1$ steering vector of the transmit array toward the spatial angle $\theta$,
\begin{equation}
\textbf{a}_{tr}(\theta)=\left[1,e^{-j2\pi d\sin(\theta)},...,e^{-j2\pi (M-1) d\sin(\theta)}\right]^{T}
\end{equation}and $h(t)$ is the complex channel coefficient which reflects the propagation gain between the transmit coherent MIMO array and the communication receiver. Then, $\eta(t)$ is the additive white Gaussian noise with zero mean and variance $\sigma_{N}^{2}$. By combining (\ref{st}) and (\ref{yt}), the baseband received signal for sub-pulse $l$ can be written as,
\begin{equation}
\label{yt2}
y_{b}(t;r;l)=h(t)\sqrt{P_{t}/M}\textbf{a}^T_{tr}(\theta_{c})\textbf{w}_{l}^{\dagger}(r)+\eta(t),
\end{equation}After this point, we have dropped the $r$ index in the equations for the simplicity. Using information embedding strategies via manipulating transmit signal vector $\textbf{s}(t;l)$, communication symbols can be buried into the radar signal.
\begin{figure}[t]
	\centering
	\includegraphics[width=0.5\textwidth]{./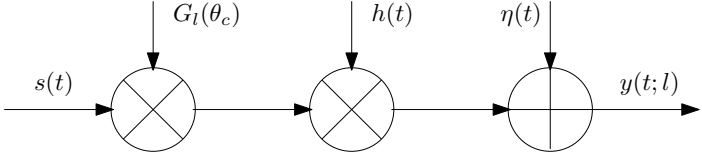}
	\caption{Channel Model for JRC capable coherent MIMO radar}
	\label{fig:cMIMOChannelModel_NullF}
\end{figure}

For the coherent MIMO radar case, the channel model for JRC capable coherent MIMO radar is expressed by,
\begin{equation}
\label{ytch}
\begin{split}
y(t;l)=&h(t)\hat{s}_{r}(t;l)+\eta(t)\\
=&h(t)\textbf{a}_{tr}^{T}(\theta_{c})\textbf{s}(t;l)+\eta(t)\\
=&h(t)G_l(\theta_{c})s(t)+\eta(t),
\end{split}
\end{equation}where $G_{l}(\theta_{c})=\textbf{a}^T_{tr}(\theta_{c})\textbf{w}_{l}^{\dagger}(r)$, $s(t)$ is the channel input \textcolor{blue}{and can be given as information modulated carrier signal, $s(t)=e^{\chi_k}s_c(t)$, where $e^{\chi_k}$ is the complex information symbol.} $\hat{s}_{r}(t;l)$ can be defined as transmitted signal towards communication direction from the MIMO array. When communication fixed waveform generation method in \cite{RealcMIMO} is applied, channel model can be simplified as,
\textcolor{blue}{
\begin{equation}
\label{ytch_com}
\begin{split}
y(t;l)=&h(t)\hat{s}^{comm}_{r}(t;l)+\eta(t)\\
=&h(t)G_1(\theta_{c})s(t)+\eta(t),
\end{split}
\end{equation}}where $\hat{s}^{comm}_{r}(t;l)$ is defined as transmitted signal towards communication direction from the MIMO array when communication direction fixed waveform generation method in \cite{RealcMIMO} is used. $G_{1}(\theta_{c})$ is defined as the radiated signal towards communication direction $\theta_{c}$ for the first sub-pulse. 

Null direction fixed method in \cite{RealcMIMO} aims to fix the radiation signal at null direction using the technique in \cite{MIMOradarWaveformOrth, MIMOradarWaveformOrthRadarConf}, while satisfying higher signal levels for the communication direction. This method slightly changes the channel model for the communication due to the undesired variations on the amplitude level expressed in Eq.(\ref{ytch_null}). This unintentional modulation shows random behavior and for the sake of simplicity, we have assumed that this unintentional modulation is the part of the channel effect. Therefore, we consider the channel model as in Fig.\ref{fig:cMIMOChannelModel_NullF} \textcolor{blue}{and it is} given by,
\begin{equation}
\label{ytch_null}
\begin{split}
y(t;l)=&h(t)e^{j\varsigma(l)}\textbf{a}^T_{tr}(\theta_{c})\textbf{w}_{l}^{\dagger}(r)s(t)+\eta(t)\\
=&h(t)e^{j\varsigma(l)}G_l(\theta_{c})s(t)+\eta(t)\\
=&h(t)e^{j\varsigma(l)}A_le^{-j\bar{\phi_l}}s(t)+\eta(t)\\
=&h(t)A_le^{-j\bar{\phi_1}}s(t)+\eta(t),
\end{split}
\end{equation}where $\hat{s}^{null}_{r}(t;l)$ is the channel input when null direction fixed waveform generation method in \cite{RealcMIMO} is used and $A_l=|G_{l}(\theta_{c})|$ is defined as amplitude of the radiated signal towards communication direction $\theta_{c}$. Although null direction fixed method introduces unintentional modulation, a phase correction procedure is proposed to fix the phase toward receiver. In \cite{RealcMIMO}, the correction factor $\varsigma(l)$ is given as the phase angle differences between radiated signal at the first and the $l$th sub-pulse, $\angle{G_{1}(\theta_{c})}-\angle{G_{l}(\theta_{c})}$ for $l=2,...,L$. After applying this procedure, the phase component is fixed \textcolor{blue}{to $e^{-j\bar{\phi_1}}$ along} the radar pulse. This phase correction leads phase stable radiated signal. However, radiated signal amplitude towards communication direction still contains unintentional variations which may cause a degradation on communication performance.
\section{Amplitude Distribution of the Unintentional Modulation}
\label{AmpDistribution}
In order to reach the channel capacity of the null direction \textcolor{blue}{fixed} method, it is necessary to characterize the distribution of the unintentional modulation. First, we \textcolor{blue}{must} rewrite the radiated signal $G_l(\theta_{c})$. When null direction fixed method is applied, the radiated signal $G_l$ becomes,
\begin{equation}
\label{transmitSignal}
\begin{split}
G_{l}(\theta_{c})=&\textbf{a}^T_{tr}(\theta_{c})\textbf{w}_{l}^{\dagger}(r)\\
=&\left[1,e^{-j2\pi d\sin(\theta_{c})},...,e^{-j2\pi (M-1) d\sin(\theta_{c})}\right]\\
&\times\left[
\begin{matrix}
e^{-j[\phi_{I_{1}(l)}+2\pi (I_{1}(l)-1)d\sin{\theta_{n}}]}\\
\vdots\\ 
e^{-j[\phi_{I_{M}(l)}+2\pi (I_{M}(l)-M)d\sin{\theta_{n}}]}
\end{matrix}
\right]\\
=&\sum_{m=1}^{M}e^{-j\left[\phi_{I_m(l)}+2\pi d\left[m(\sin(\theta_{c})-\sin(\theta_{n}))+I_m(l)\sin(\theta_{n})-\sin(\theta_{c})\right]\right]}
\end{split}
\end{equation}where $\{I_{1},I_{2},...,I_{M}\}$ are the indexes of a random permutation sequence of $\{1,2,...,M\}$ and $\phi_{I_{m}}(l)$ is the new phase component of the $m$-th antenna element for the $l$-th sub-pulse \cite{RealcMIMO}. Then, this unintentional modulation caused by $G_l(\theta_{c})$ can be rewritten as a sum of individual exponentials as,
\begin{equation}
\label{sumofexponentials}
G_l(\theta_{c})=\sum_{m=1}^{M}e^{-j\phi_m(l)}=A_le^{-j\bar{\phi_l}}.
\end{equation}For different $l$ values, $\bar{\phi}$ is a phase variable in the range $(-\pi,\pi]$ and $A_l$ is the radiated signal amplitude variable in the range $[0,M]$.
\begin{theorem}
	\label{theoremRay}
	Amplitude of the radiated signal towards communication direction for JRC enable\textcolor{blue}{d} coherent MIMO radar when null direction fixed method \textcolor{blue}{is} applied presents a Rayleigh distribution with the following probability density function, 
	\begin{equation}
	\begin{split}
	p_{A_l}(\varrho)=\frac{2\varrho}{M}e^{-\varrho^2/M},\quad \varrho>0, \iff & L>M\gg1 \text{ , }\\
	&\theta_{c}\ne\theta_{n} \text{ and } L,M\in\mathbb{Z}^+,
	\end{split}
	\end{equation}where $M$ is the number of transmit antenna elements in the MIMO array, $l=1,...,L$ is the sub-pulse index and $L$ is the total number of sub-pulse in a radar pulse.
\end{theorem}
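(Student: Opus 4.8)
\emph{Proof sketch.} The plan is to treat $A_l=|G_l(\theta_c)|$, with $G_l(\theta_c)=\sum_{m=1}^{M}e^{-j\phi_m(l)}$ from Eq.~(\ref{sumofexponentials}), as the modulus of a sum of $M$ unit phasors and to run a central-limit argument on its real and imaginary parts. Writing $G_l(\theta_c)=X_l+jY_l$ with $X_l=\sum_{m=1}^{M}\cos\phi_m(l)$ and $Y_l=-\sum_{m=1}^{M}\sin\phi_m(l)$, the whole statement reduces to showing that, in the regime $L>M\gg1$ with $\theta_c\neq\theta_n$, the pair $(X_l,Y_l)$ behaves like a zero-mean circularly symmetric Gaussian vector with per-component variance $M/2$; once that is in hand, $A_l=\sqrt{X_l^2+Y_l^2}$ is Rayleigh and its scale is pinned down by the sanity check $\mathbb{E}[A_l^2]=\mathbb{E}[|G_l(\theta_c)|^2]=M$ (the cross terms $\mathbb{E}[e^{-j(\phi_m-\phi_{m'})}]$, $m\neq m'$, being negligible), which yields exactly $p_{A_l}(\varrho)=\frac{2\varrho}{M}e^{-\varrho^2/M}$.

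First I would fix the statistics of the individual phases. From Eq.~(\ref{transmitSignal}) each $\phi_m(l)$ is a coding phase $\phi_{I_m(l)}$ plus a geometric term $2\pi d[\,m(\sin\theta_c-\sin\theta_n)+I_m(l)\sin\theta_n-\sin\theta_c\,]$, where $\{I_1(l),\dots,I_M(l)\}$ is the uniformly random permutation of $\{1,\dots,M\}$ supplied by the waveform construction of \cite{RealcMIMO}. Because $\theta_c\neq\theta_n$ makes the coefficient $\sin\theta_c-\sin\theta_n$ of the position index nonzero, the permutation randomly re-pairs the position-dependent factor $e^{-j2\pi d m(\sin\theta_c-\sin\theta_n)}$ with the value-dependent factor $e^{-j[\phi_{I_m(l)}+2\pi d I_m(l)\sin\theta_n]}$, so the resulting phases $\{\phi_m(l)\}_{m=1}^{M}$ are well approximated, modulo $2\pi$, by i.i.d.\ uniform variables on $(-\pi,\pi]$; in particular $\mathbb{E}[e^{-j\phi_m(l)}]=0$, $\mathbb{E}[\cos^2\phi_m(l)]=\mathbb{E}[\sin^2\phi_m(l)]=\tfrac12$ and $\mathbb{E}[\cos\phi_m(l)\sin\phi_m(l)]=0$. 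This is also where the side conditions earn their keep: $L>M$ with $L,M\in\mathbb{Z}^+$ is precisely the admissibility condition under which the space--time code and its random permutation in \cite{RealcMIMO} are defined and quasi-orthogonal, while $\theta_c=\theta_n$ would degenerate the geometric term and leave the null-fixing constraint pinning $A_l$ near a constant instead of letting it fluctuate.

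With the phase model in place, the second step is the central-limit estimate. Since $X_l$ and $Y_l$ are sums of $M$ bounded, zero-mean summands whose only mutual dependence comes from a single random permutation, $(X_l/\sqrt{M},Y_l/\sqrt{M})$ converges as $M\to\infty$ to a bivariate normal vector; its limiting covariance is diagonal because $\mathbb{E}[\cos\phi_m\sin\phi_{m'}]=0$ for all $m,m'$, and each diagonal entry equals $1/2$. Hence $(X_l,Y_l)\sim\mathcal{N}(\mathbf{0},\tfrac{M}{2}\mathbf{I}_2)$ to leading order, and the standard polar change of variables gives the Rayleigh density with $\sigma^2=M/2$, i.e.\ the claimed $p_{A_l}$. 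The reverse implication I would argue by tightness of each hypothesis: without $M\gg1$, $A_l$ is a finite phasor sum supported on $[0,M]$ and is not Rayleigh; without $\theta_c\neq\theta_n$ or the integrality/ordering conditions, either the fluctuation or the very definition of the code collapses.

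The step I expect to be the real obstacle is justifying that the summands of $G_l(\theta_c)$ are ``random enough'' for a Gaussian limit: the phases are not literally i.i.d.\ but deterministic images of one uniform random permutation, and they carry a deterministic, angle-dependent geometric drift. Making the limit rigorous therefore calls for Hoeffding's combinatorial central limit theorem for sums of the form $\sum_m a_{m,\pi(m)}$ (applied to $\mathrm{Re}$ and $\mathrm{Im}$ separately, with $a_{m,k}=e^{-j2\pi d m(\sin\theta_c-\sin\theta_n)}e^{-j[\phi_k+2\pi d k\sin\theta_n]}$), together with a Weyl-type equidistribution check that the geometric drift does not concentrate the phase --- both of which lean on $\theta_c\neq\theta_n$ and $M\gg1$. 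Everything past that point (the variance bookkeeping and the polar change of variables) is routine.
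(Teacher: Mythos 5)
Your proposal is correct in substance and shares the paper's central reduction --- everything hinges on showing that the phases $\phi_m(l)$ in $G_l(\theta_c)=\sum_{m=1}^{M}e^{-j\phi_m(l)}$ are approximately uniform on $(-\pi,\pi]$, after which the modulus of a sum of $M$ unit phasors is Rayleigh with $\mathbb{E}[A_l^2]=M$ --- but the two arguments execute both halves differently. For the uniformity step, the paper splits $\phi_m(l)$ into a permutation-randomized part $\tilde{h}(m,l)$ and a deterministic linear drift $z(m)$, convolves their empirical (Dirac-delta) densities, expands the periodic delta in a Fourier series, and argues that for large $M,L$ the cosine harmonics cancel in antipodal pairs $\phi_{m_1}(l_1)\approx\phi_{m_2}(l_2)+\pi$, leaving the constant $1/(2\pi)$; you instead appeal to the permutation re-pairing the geometric and coding factors plus a Weyl-type equidistribution check. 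For the final step, the paper simply cites Beckmann's random-walk result, whereas you carry out the bivariate CLT on $(\mathrm{Re}\,G_l,\mathrm{Im}\,G_l)$ explicitly and --- this is the genuinely valuable addition --- flag that the $M$ summands are functions of a single random permutation, so the ordinary CLT does not literally apply and one needs Hoeffding's combinatorial CLT for $\sum_m a_{m,\pi(m)}$. The paper never addresses this dependence (Beckmann's result presumes independent uniform phases), so your route is arguably the more defensible one; conversely, the paper's Fourier-delta computation gives a concrete, checkable expression for $f_\Phi(\phi)$ (its Eq.~(\ref{pdfDelta2})) that is later validated numerically, which your equidistribution appeal does not. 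Neither argument rigorously establishes the stated ``if and only if'' --- both only sketch why the hypotheses are needed --- and both fix the scale by the same variance bookkeeping $\mathbb{E}[A_l^2]=M$.
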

\begin{proof}
	See the \ref{ProofUnintAmpProof}.
\end{proof}
\section{Capacity Analysis}
\label{capacity}
In this part, \textcolor{blue}{channel} capacity \textcolor{blue}{is analyzed with the assumption} that channel distribution information is known by the receiver. Then, we assumed that, the receiver perfectly knows the Channel State Information (CSI). For two waveform generation methods in \cite{RealcMIMO}, the channel capacity is expressed in detail. Then, capacity expressions are given for AWGN and Rayleigh\textcolor{blue}{Rician} channels.
\subsection{Communication Direction Fixed Method - Stable Channel}
This method guarantees the radiated signal is kept constant during the radar pulse. Therefore, the channel becomes an AWGN channel since $h(t)=1$ and $G_1(\theta_{c})$ is constant during the radar pulse. Then, the channel model in Eq.(\ref{ytch_com}) can be rewritten as,
\begin{equation}
y(t;l)=G_1(\theta_{c})s(t)+\eta(t)=A_1e^{-j\bar{\phi_1}}s(t)+\eta(t).
\end{equation}The channel capacity for the communication direction fixed method with AWGN channel can be defined in nats as, $C_s=B\ln(1+\gamma)$, where $B$ is the channel bandwidth and $\gamma$ is the channel Signal-Noise Ratio (SNR). SNR can be given as, $\gamma = P_c(\theta_c)/(\sigma^{2}_{N}B)$, where $P_c(\theta_c)$ is the average transmit power towards communication direction, $\sigma^{2}_{N}$ is the variance of the zero mean AWGN. For the communication direction fixed method, $P_c$ is equal to $(A_0^2P_t)/M$, where $A_0$ is the radiated signal amplitude through communication direction of the reference transmit beampattern, $|G_0(\theta_{c})|$. Then, the capacity \textcolor{blue}{expression} becomes,
\begin{equation}
\label{CsnrComm2}
C_s=B\ln\left(1+\frac{A_0^2P_t}{\sigma^{2}_{N}BM}\right)=B\ln\left(1+\frac{A_0^2c_\gamma}{M}\right).
\end{equation}where $c_\gamma=P_t/(\sigma^{2}_{N}B)$. $A_0$ can be selected as, the probability of $A_l$ is greater than $A_0$ is equal to $p_0$, $Pr[A_l\ge A_0]=p_0$. Since $A_l$ has Rayleigh distribution with the pdf in Theorem \ref{theoremRay}, $p_0$ can be given as,
\begin{equation}
\label{CsnrComm3}
p_0=\int_{A_0}^{\infty}\frac{2\varrho}{M}e^{-\varrho^2/M}d\varrho=-e^{\frac{-\varrho^2}{M}}\bigg\rvert_{A_0}^{\infty}=e^{\frac{-A_0^2}{M}}.
\end{equation}Then, $A_0$ \textcolor{blue}{becomes} $\sqrt{-M\ln(p_0)}$. Hence, the capacity formula can be rewritten as,
\begin{equation}
\label{CsnrComm4}
%\begin{split}
C_s=B\ln\left(1+\frac{(\sqrt{-M\ln(p_0)})^2c_\gamma}{M}\right)=B\ln\left(1-c_\gamma\ln(p_0)\right).
%\end{split}
\end{equation}
\subsection{Null Direction Fixed Method - Unintentionally Modulated Channel}
Since the amplitude distribution of $A_l$, \textcolor{blue}{which is equal to} $|G_l(\theta_{c})|$, shows Rayleigh distribution for the null direction fixed method, the unintentional modulation affects the channel as \textcolor{blue}{the} Rayleigh fading does. Then, using Eq.(\ref{ytch_null}) the channel can be represented as,
\begin{equation}
\label{ndfmUnintChMod}
y(t;l)=h(t)A_le^{-j\bar{\phi_1}}s(t)+\eta(t),
\end{equation}for $l=1,...,L$. The channel has the Rayleigh distribution by the following pdf, if there is no constraint applied in the null direction method, i.e. $0\ge A_l\ge\infty$, 
\begin{equation}
p_{A_l}(\varrho)=\frac{2\varrho}{M}e^{-\varrho^2/M}, \varrho>0,
\end{equation}where $A_l=|G_l(\theta_{c})|$. In \cite{RealcMIMO}, null direction method requires applying amplitude constraint as \textcolor{blue}{$A^{min}\le A_l\le A^{max}$, where $A^{min}=A_1/\varDelta$ and $A^{max}=A_1\varDelta$.} Since, the distribution is modified to a double truncated Rayleigh distribution, the probability density distribution is changed to a double truncated Rayleigh pdf as,
\textcolor{blue}{
\begin{equation}
\label{amppdf}
%\begin{split}
p_{A_l}(\varrho)=\frac{\frac{2\varrho}{M}e^{-\varrho^2/M}}{\int_{A^{min}}^{A^{max}}\frac{2\varrho}{M}e^{-\varrho^2/M}d\varrho}=\frac{2\varrho e^{-\varrho^2/M}}{\alpha(\tilde{A}_{1},\tilde{A}_{2}) M},\quad A^{min}\le \varrho\le A^{max},
%\end{split}
\end{equation}}where \textcolor{blue}{$\alpha(\tilde{A}_{1},\tilde{A}_{2})=e^{-(A^{min})^2/M}-e^{-(A^{max})^2/M}$, $\tilde{A}_{1}=(A^{min})^2/M$ and $\tilde{A}_{2}=(A^{max})^2/M$.} By the transformation theorem for single random variables, the channel power gain ${A_l}^2$ has an exponential distribution with the mean $M$. The pdf of ${A_l}^2$ is
\textcolor{blue}{
\begin{equation}
\label{snrpdf1}
p_{{A_l}^2}(\bar{a})=\frac{1}{\alpha(\tilde{A}_{1},\tilde{A}_{2}) M}e^{-\bar{a}/M},\quad (A^{min})^2\le \bar{a}\le (A^{max})^2.
\end{equation}}Using the (\ref{snrpdf1}), pdf of the SNR, $\gamma$ becomes,
\textcolor{blue}{
\begin{equation}
\label{snrpdf2}
p_{\gamma}(a)=\frac{1}{\alpha(\tilde{A}_{1},\tilde{A}_{2}) c_\gamma}e^{-a/c_\gamma},\quad c_\gamma\tilde{A}_{1}\le a\le c_\gamma\tilde{A}_{2},
\end{equation}}where $c_\gamma=P_t/(\sigma^{2}_{N}B)$. 
\subsubsection{Capacity for AWGN channels}
For this case, $h(t)$ is selected as $1$ in Eq.(\ref{ndfmUnintChMod}). Then, fast fading (ergodic) conditions can be applied, since the channel has encountered all possible fades during the radar pulse when the radar pulse is long enough. The ergodic channel capacity can be expressed as,
\begin{equation}
\label{CsnrNull}
\begin{split}
C_{rp}(A_{1},A_{2})&=\int_{\frac{c_\gamma {A_{1}}^2}{M}}^{\frac{c_\gamma {A_{2}}^2}{M}}B\ln(1+a)p_{\gamma}(a)da\\
&=\int_{\frac{c_\gamma {A_{1}}^2}{M}}^{\frac{c_\gamma {A_{2}}^2}{M}}B\ln(1+a)\left[\frac{\frac{1}{c_\gamma}e^{-a/c_\gamma}}{(e^{-A_{1}^2/M}-e^{-A_{2}^2/M})}\right]da.
\end{split}
\end{equation}For the numerator integral, let $u=\log_e(1+a)$ and $dv=\exp\left(\frac{-a}{c_\gamma}\right)$. By using integral by parts we get,
\begin{equation}
\label{CsnrNull2}
\begin{split}
C_{rp}=&\frac{B}{\alpha(\tilde{A_1},\tilde{A_2})}\left[-\ln(1+a)e^{\frac{-a}{c_\gamma}}\bigg\rvert_{c_\gamma \tilde{A_1}}^{c_\gamma \tilde{A_2}}-\int_{c_\gamma \tilde{A_1}}^{c_\gamma \tilde{A_2}} \frac{-e^{\frac{-a}{c_\gamma}}}{1+a}da\right]\\
=&\frac{B}{\alpha(\tilde{A_1},\tilde{A_2})}\bigg[\left[\ln(1+c_\gamma \tilde{A_1})e^{-\tilde{A_1}}-\ln(1+c_\gamma\tilde{A_2}) e^{-\tilde{A_2}}\right]\\
&+e^{\frac{1}{c_\gamma}}\int_{\frac{M+c_\gamma {A_{1}}^2}{c_\gamma M}}^{\frac{M+c_\gamma {A_{2}}^2}{c_\gamma M}}\frac{1}{\bar{\gamma}}e^{-\bar{\gamma}}d\bar{\gamma}\bigg]\\
=&\frac{B}{\alpha(\tilde{A_1},\tilde{A_2})}\bigg[\left[\ln(1+c_\gamma \tilde{A_1})e^{-\tilde{A_1}}-\ln(1+c_\gamma \tilde{A_2})e^{-\tilde{A_2}}\right]\\
&+e^{\frac{1}{c_\gamma}}\left[E_1\left(\frac{1}{c_\gamma}+\tilde{A_1}\right)-E_1\left(\frac{1}{c_\gamma}+\tilde{A_2}\right)\right]\bigg]
\end{split}
\end{equation}where $E_1(x)=\int_{x}^{+\infty}\frac{e^{-t}}{t}dt,(x>0)$ is the exponential integral for $n=1$. 
The delta constraint in \cite{RealcMIMO} is given as $A_0/\varDelta\le A_l\le A_0\varDelta$ at the null direction method. Hence, the limits, $A_{1}$ and $A_{2}$, are given as $A_0/\varDelta$ and $A_0\varDelta$. Eq.(\ref{CsnrNull2}) can be rewritten for $A_{1}=A_0/\varDelta=\sqrt{-M\ln(p_0)}/\varDelta$ and $A_{2}=A_0\varDelta=\sqrt{-M\ln(p_0)}\varDelta$,
\begin{equation}
\label{CsnrNullspecial}
\begin{split}
C_{rp}(\frac{A_0}{\varDelta},A_0\varDelta)=&\frac{B}{\alpha(\frac{A_0}{\varDelta},A_0\varDelta)}\bigg[
\bigg(\ln\left(1-\frac{c_\gamma\ln(p_0)}{\varDelta^2}\right)p_0^{-\frac{1}{\varDelta^2}}\\
&-\ln\left(1-c_\gamma\ln(p_0)\varDelta^2\right)p_0^{-\varDelta^2}\bigg)\\
&+e^{\frac{1}{c_\gamma}}\bigg[E_1\left(\frac{1}{c_\gamma}-\frac{\ln(p_0)}{\varDelta^2}\right)-E_1\left(\frac{1}{c_\gamma}+\ln(p_0)\varDelta^2\right)\bigg]\bigg].
\end{split}
\end{equation}If there is no given constraint at the null direction method, the limits, $A_{1}$ and $A_{2}$, can be given as $0$ and $M$. Eq.(\ref{CsnrNull2}) can be rewritten for $A_{1}=0$ and $A_{2}=M$,
\begin{equation}
\label{CsnrNullspecialprop1P1}
\begin{split}
C_{rp}(0,M)=&\frac{B}{1-e^{-M}}\bigg[-\ln\left(1+c_\gamma M\right)e^{-M}\\
&+e^{\frac{1}{c_\gamma}}\bigg[E_1\left(\frac{1}{c_\gamma}\right)-E_1\left(\frac{1}{c_\gamma}+M\right)\bigg]\bigg],
\end{split}
\end{equation}and if $M\gg1$, the capacity can be approximated to the ergodic channel capacity of the Rayleigh fading channel with known CSI as,
\begin{equation}
\label{CsnrNullspecialprop1}
C_{rp}(0,M)\approx C_{RCSI}=-B\exp\left(\frac{1}{c_\gamma}\right)Ei\left(\frac{-1}{c_\gamma}\right),
\end{equation}when $M\gg1$, where $c_\gamma$ is the average SNR which is $c_\gamma=P_t/\sigma^{2}_{N}$ and $Ei(x)=-\int_{-x}^{+\infty}\frac{e^{-t}}{t}dt,(x>0)$. 
If the constraint is given as $A_l\ge A_0$ at the null direction method, the limits, $A_{1}$ and $A_{2}$, becomes $A_0$ and $M$. Eq.(\ref{CsnrNull2}) can be rewritten for $A_{1}=A_0=\sqrt{-M\ln(p_0)}$ and $A_{2}=M$,
\begin{equation}
\label{CsnrNullspecialprop2P1}
\begin{split}
C_{rp}(A_0,M)=&\frac{B}{p_0-e^{-M}}\bigg[\big(\ln\left(1-c_\gamma\ln(p_0)\right)p_0-\ln(1+c_\gamma M)e^{-M}\big)\\ &+e^{\frac{1}{c_\gamma}}\left[E_1\left(\frac{1}{c_\gamma}-\ln(p_0)\right)-E_1\left(\frac{1}{c_\gamma}+M\right)\right]\bigg]
\end{split}
\end{equation}and if $M\gg1$, the capacity can be approximated as,
\begin{equation}
\label{CsnrNullspecialprop2}
\begin{split}
\bar{C}_{rp}(A_0,M)\approx& B\bigg[\ln\left(1-c_\gamma\ln(p_0)\right)\\
&+\left(\frac{1}{p_0}\right)\exp\left(\frac{1}{c_\gamma}\right)E_1\left(\frac{1}{c_\gamma}-\ln(p_0)\right)\bigg],
\end{split}
\end{equation}where $c_\gamma$ is the average SNR \textcolor{blue}{and given as} $P_t/\sigma^{2}_{N}B$.

\subsection{Null Direction Fixed Method Under Fading Channel}
In this part, we consider the capacity of the JRC capable coherent MIMO radar when the null direction fixed method is used for waveform generation, where a single-user transmission is occurred over \textcolor{blue}{Rician and }Rayleigh fading channel. \textcolor{blue}{For directional communications as sidelobe based JRC systems, most of the cases, receiver has a major LoS component along with multi-path reflections. For these cases, the fading channel is assumed as Rician fading channel. However, for some applications, receiver may lost its LoS component due to the mobility. This cases is considered as Rayleigh fading channel in our derivations. Therefore, the fading channel is selected for two different conditions as Rayleigh and Rician fading channels.} \textcolor{blue}{Moreover,} slow and fast fading cases are investigated. Slow fading case occurs when fading channel coherence time $T_c$ is bigger than communication symbol duration $T_s$, i.e. $T_c\gg T_s$. Then, fast fading (ergodic) case is \textcolor{blue}{also} investigated. We assumed that CSI is known by the receiver only and the receiver is able to get the information of the fading states. The main channel model in Eq.(\ref{ytch_null}) can be rewritten as,
\begin{equation}
\label{NonChmodel}
%\begin{split}
Y_t=H_tG_l(\theta_{c})S_t+N_t=He^{-j\phi_H}A_le^{-j\bar{\phi_1}}S_t+N_t,
%\end{split}
\end{equation}where $S_t$ is the channel input, $Y_t$ is the channel output, and $N_t$ independent AWGN random variable with mean zero and variance and $\sigma_N$ at time $t$. $H_t$ is a complex channel gain with amplitude $H$ and phase $\phi_H$ at time t. The phase $\phi_H$ is uniformly distributed in $[0, 2\pi)$, and the signal amplitude $H$ is a random variable with \textcolor{blue}{Rician} and Rayleigh pdf. The null direction fixed method also fixes the phase towards communication direction applying a correction vector. Hence, phase component of the $G_l(\theta_{c})$ becomes constant over the radar pulse as $e^{j\bar{\phi_1}}$. The channel model in Eq.(\ref{NonChmodel}) becomes,
\begin{equation}
\label{NonChmodel2}
Y_t=Ze^{-j\phi_Z}S_t+N_t,
\end{equation}where $Z$ is the product of two random variable $A_l$ and $H$ and the phase $\phi_Z$ is again uniformly distributed in $[0, 2\pi)$. Eq.(\ref{NonChmodel2}) becomes similar to a fading channel model. 
Before attempting to investigate capacity expressions for null direction fixed method under fading channel, pdf of the new random variable $Z$ needs to be introduced. $A_l$, is a random variable with truncated Rayleigh distribution which is given in Eq.(\ref{amppdf}). 

\textcolor{blue}{For Rayleigh fading cases, the pdf of $H$ can be given as,}
\begin{equation}
\label{amppdfH}
p_{H}(h)=\frac{h}{\sigma_H^2}e^{\frac{-h^2}{2\sigma_H^2}},\quad h\ge 0,
\end{equation}where $\sigma_H$ is the variance of the both real and imaginary part of the complex r.v. $H_t$. 
\begin{lemma}
	\label{LemmaDRayleigh}
	Any random variable $X$, which is a product of two random variable, i.e. $X=X_1X_2$, which follows double truncated Rayleigh ($X_1$) and Rayleigh ($X_2$) distribution, presents a distribution with the following probability density function (pdf), 
	\begin{equation}
	\label{pdftheorem}
	p_{X}(x)=\frac{x\left[\Gamma\left(0,\beta_1;\frac{x^2}{4{\sigma_X}^2}\right)-\Gamma\left(0,\beta_2;\frac{x^2}{4{\sigma_X}^2}\right)\right]}{\alpha(\beta_1,\beta_2)2{\sigma_X}^2},\quad x\ge 0,  
	\end{equation}where $\beta_1=\frac{(X^{min}_1)^2}{2\sigma_{X_1}^2}$, $\beta_2=\frac{(X^{max}_1)^2}{2\sigma_{X_1}^2}$ and $\sigma_{X}$ is equal to $\sigma_{X_1}\sigma_{X_2}$ and $\alpha(a,b)=e^{-a}-e^{-b}$. $\sigma_{X_1}$ and $\sigma_{X_2}$ are the variances of the both real and the imaginary part of the complex random variables $X_1$ and $X_2$, respectively. $X^{min}_1$ and $X^{max}_1$ are the lower and the higher limits of the double truncated Rayleigh distributed r.v. $X_1$, i.e. $X^{min}_1\le X_1 \le X^{max}_1$. $\Gamma(a,x;b)$ is the generalized incomplete gamma function \cite{GenIncGamma} as $\Gamma(a,x;b)=\int_{x}^{\infty}t^{a-1}e^{-t-bt^{-1}}dt$. 
\end{lemma}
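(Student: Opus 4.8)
The plan is to obtain $p_X$ from the standard product-of-independent-variables formula and then to identify the resulting one-dimensional integral as a difference of two generalized incomplete gamma functions. I would first record the two marginals in the paper's normalization: by \eqref{amppdf} (with $2\sigma_{X_1}^2$ playing the role of $M$) the double-truncated Rayleigh variable $X_1$ has density $p_{X_1}(x_1)=x_1 e^{-x_1^2/(2\sigma_{X_1}^2)}/\big(\alpha(\beta_1,\beta_2)\sigma_{X_1}^2\big)$ on $[X_1^{min},X_1^{max}]$, and, as in \eqref{amppdfH}, the Rayleigh variable $X_2$ has density $p_{X_2}(x_2)=x_2 e^{-x_2^2/(2\sigma_{X_2}^2)}/\sigma_{X_2}^2$ on $[0,\infty)$. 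Since $X_1$ and $X_2$ are independent (the unintentional amplitude modulation $A_l$ and the propagation gain $H$ arise from unrelated mechanisms), $p_X(x)=\int p_{X_1}(x_1)\,p_{X_2}(x/x_1)\,x_1^{-1}\,dx_1$; for $x>0$ the argument $x/x_1$ always lies in the support of $X_2$, so the range of integration is simply $[X_1^{min},X_1^{max}]$. Substituting and pulling the $x$-dependent factor out of the integral gives $p_X(x)=\frac{x}{\alpha(\beta_1,\beta_2)\sigma_{X_1}^2\sigma_{X_2}^2}\int_{X_1^{min}}^{X_1^{max}} x_1^{-1} e^{-x_1^2/(2\sigma_{X_1}^2)}\,e^{-x^2/(2\sigma_{X_2}^2 x_1^2)}\,dx_1$.

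The decisive step is the change of variable $t=x_1^2/(2\sigma_{X_1}^2)$, which is monotone on the support, sends $X_1^{min}\mapsto\beta_1$ and $X_1^{max}\mapsto\beta_2$, and obeys $x_1^{-1}\,dx_1=\tfrac{1}{2t}\,dt$. Under it the first exponential becomes $e^{-t}$ and the second becomes $e^{-x^2/(4\sigma_X^2 t)}$ with $\sigma_X=\sigma_{X_1}\sigma_{X_2}$, so the integral collapses to $\tfrac12\int_{\beta_1}^{\beta_2}t^{-1}e^{-t-(x^2/4\sigma_X^2)t^{-1}}\,dt$. Writing $\int_{\beta_1}^{\beta_2}=\int_{\beta_1}^{\infty}-\int_{\beta_2}^{\infty}$ and comparing with $\Gamma(a,x;b)=\int_x^{\infty}t^{a-1}e^{-t-bt^{-1}}\,dt$ at $a=0$, $b=x^2/(4\sigma_X^2)$, the bracket becomes $\Gamma\!\big(0,\beta_1;x^2/4\sigma_X^2\big)-\Gamma\!\big(0,\beta_2;x^2/4\sigma_X^2\big)$. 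Collecting the constants and using $\sigma_{X_1}^2\sigma_{X_2}^2=\sigma_X^2$ then reproduces \eqref{pdftheorem} exactly.

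I do not expect a genuine obstacle here: the argument is elementary once the product formula and the substitution are in place. The two points that need care are the bookkeeping of the limits under $t=x_1^2/(2\sigma_{X_1}^2)$ --- so that the clean identity $\Gamma(a,x;b)-\Gamma(a,y;b)=\int_x^{y}t^{a-1}e^{-t-bt^{-1}}\,dt$ applies directly --- and matching the Rayleigh normalization ($2\sigma^2$ versus $M$) so that $\beta_1$ and $\beta_2$ emerge as stated. As a consistency check I would verify $\int_0^{\infty}p_X(x)\,dx=1$: interchanging the $x$- and $t$-integrations and using $\int_0^{\infty}x\,e^{-x^2/(4\sigma_X^2 t)}\,dx=2\sigma_X^2 t$ reduces the double integral to $\alpha(\beta_1,\beta_2)^{-1}\int_{\beta_1}^{\beta_2}e^{-t}\,dt=1$, which at the same time confirms the normalizing constant $\alpha(\beta_1,\beta_2)=e^{-\beta_1}-e^{-\beta_2}$.
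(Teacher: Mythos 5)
Your proposal is correct and follows essentially the same route as the paper's own proof: the Mellin convolution (product-density) formula, the substitution $v=x_1^2/(2\sigma_{X_1}^2)$ mapping the truncation limits to $\beta_1,\beta_2$, and the split $\int_{\beta_1}^{\beta_2}=\int_{\beta_1}^{\infty}-\int_{\beta_2}^{\infty}$ to identify the generalized incomplete gamma functions. The paper additionally records an alternative derivation via Mellin transforms and contour integration, but your argument matches its primary one; the normalization check you append is a sensible extra sanity check.
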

\begin{proof}
	See the \ref{ProofProductDistTrunRay}.
\end{proof}Using the Lemma.\ref{LemmaDRayleigh}, pdf of the $Z$ in Eq.(\ref{NonChmodel2}) can be rewritten as,
\begin{equation}
p_{Z}(z)=\frac{z\left[\Gamma\left(0,\tilde{A_1};\frac{z^2}{M2\sigma_H^2}\right)-\Gamma\left(0,\tilde{A_2};\frac{z^2}{M2\sigma_H^2}\right)\right]}{\alpha(\tilde{A_1},\tilde{A_2})M\sigma_H^2},\quad z\ge 0, 
\end{equation}where $\tilde{A_1}=A_{1}^2/M$, $\tilde{A_2}=A_{2}^2/M$. By the transformation theorem for single random variables, the pdf of the channel power gain $Z^2$ becomes,
\begin{equation}
\label{snrpdf1F}
p_{Z^2}(\bar{z})=\frac{\left[\Gamma\left(0,\tilde{A_1};\frac{\bar{z}}{2M\sigma_H^2}\right)-\Gamma\left(0,\tilde{A_2};\frac{\bar{z}}{M2\sigma_H^2}\right)\right]}{\alpha(\tilde{A_1},\tilde{A_2})2M\sigma_H^2},\quad \bar{z}\ge 0. 
\end{equation}Using the (\ref{snrpdf1F}), pdf of the SNR, $\gamma$ becomes,
\begin{equation}
\label{snrpdf2F}
p_{\gamma}(a)=\frac{\left[\Gamma\left(0,\tilde{A_1};\frac{a}{c_\gamma2\sigma_H^2}\right)-\Gamma\left(0,\tilde{A_2};\frac{a}{c_\gamma2\sigma_H^2}\right)\right]}{\alpha(\tilde{A_1},\tilde{A_2})c_\gamma 2\sigma_H^2},\quad a\ge 0,
\end{equation}where $c_\gamma=P_t/(\sigma^{2}_{N}B)$. We have also investigated the pdf of the product of single side truncated Rayleigh and Rayleigh random variables.
\begin{corollary}
	Any random variable $X$, which is a product of two r.v., $X=X_1X_2$, which follows single side truncated Rayleigh ($X_1$), $X_1\ge X^{min}_1$, and Rayleigh ($X_2$) distribution, presents a distribution with the following probability density function (pdf),
	\begin{equation}
	\label{singleRaypdf}
	p_{X}(x)=\frac{x}{e^{-\beta_1}{\sigma_X}^2}\Gamma\left(0,\beta_1;\frac{x^2}{4{\sigma_X}^2}\right),\quad x\ge 0,  
	\end{equation}
\end{corollary}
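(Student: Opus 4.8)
The plan is to obtain the corollary as the single-sided limit of Lemma~\ref{LemmaDRayleigh}. A single-side truncated Rayleigh random variable supported on $X_1\ge X^{min}_1$ is exactly the $X^{max}_1\to\infty$ limit of a double truncated Rayleigh random variable on $[X^{min}_1,X^{max}_1]$, so I would start from the pdf in Eq.(\ref{pdftheorem}) and send $\beta_2=(X^{max}_1)^2/(2\sigma_{X_1}^2)\to\infty$ with $\beta_1$ held fixed.

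Two elementary limits do all the work. First, $e^{-\beta_2}\to 0$, so the normalization collapses, $\alpha(\beta_1,\beta_2)=e^{-\beta_1}-e^{-\beta_2}\to e^{-\beta_1}$. Second, the term $\Gamma\!\left(0,\beta_2;\tfrac{x^2}{4\sigma_X^2}\right)$ vanishes: for $b\ge 0$ one has $\Gamma(0,y;b)=\int_y^\infty t^{-1}e^{-t-b/t}\,dt\le\int_y^\infty t^{-1}e^{-t}\,dt=E_1(y)\to 0$ as $y\to\infty$, so the second generalized incomplete gamma contribution in the numerator of Eq.(\ref{pdftheorem}) drops out. Substituting both limits into Eq.(\ref{pdftheorem}) leaves exactly the expression in Eq.(\ref{singleRaypdf}).

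As an independent check one can derive the same pdf directly from the product-density formula $p_X(x)=\int_0^\infty w^{-1}p_{X_1}(x/w)\,p_{X_2}(w)\,dw$, using $p_{X_1}(x_1)=\frac{x_1}{\sigma_{X_1}^2e^{-\beta_1}}e^{-x_1^2/(2\sigma_{X_1}^2)}$ for $x_1\ge X^{min}_1$ together with the standard Rayleigh density for $X_2$. The constraint $x/w\ge X^{min}_1$ caps the integration at $w=x/X^{min}_1$, and a change of variables in $w$ converts the integral to $\int_{\beta_1}^\infty t^{-1}e^{-t-(x^2/4\sigma_X^2)t^{-1}}\,dt=\Gamma\!\left(0,\beta_1;\tfrac{x^2}{4\sigma_X^2}\right)$, after which the prefactors collect into the stated constant. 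The only point requiring care, on either route, is the bookkeeping of the substitution: arranging that the lower limit lands precisely on $\beta_1=(X^{min}_1)^2/(2\sigma_{X_1}^2)$ and the cross term on $x^2/(4\sigma_X^2)$ with $\sigma_X=\sigma_{X_1}\sigma_{X_2}$. There is no analytical obstacle once the change of variables is set up, so I would present the limiting argument from the lemma as the primary proof and use the direct computation only as a consistency check on the normalization constant.
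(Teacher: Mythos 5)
Your primary argument is exactly the paper's own proof: it replaces $\beta_2$ by $\infty$ in Eq.(\ref{pdftheorem}), using $\alpha(\beta_1,\infty)=e^{-\beta_1}$ and $\Gamma(a,\infty;b)=0$; your bound $\Gamma(0,y;b)\le E_1(y)\to 0$ merely makes the second limit rigorous, and your Mellin-convolution cross-check mirrors the derivation already given for the lemma. One minor remark: the limit actually yields $2\sigma_X^2$ in the denominator (consistent with Eq.(\ref{pdftheorem}) and with how the result is later applied to $p_Z$), so the displayed Eq.(\ref{singleRaypdf}) appears to be missing a factor of $2$ --- a typo in the paper rather than a gap in your argument.
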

\begin{proof}
	By replacing $\beta_2$ with $\infty$ in Eq.(\ref{pdftheorem}), $\alpha(\beta_1,\infty)$ and $\Gamma(a,\infty;b)$ becomes $e^{-\beta_1}$ and $0$, respectively. Then, related distribution pdf can be simply defined as in Eq.(\ref{singleRaypdf}).
\end{proof}Using the Eq.(\ref{singleRaypdf}), pdf of the $Z$ for single side truncated case in Eq.(\ref{NonChmodel2}) can be rewritten as,
\begin{equation}
p_{Z}(z)=\frac{z}{e^{\frac{A_0^2}{M}}M\sigma_H^2}\Gamma\left(0,\frac{A_0^2}{M};\frac{z^2}{M2\sigma_H^2}\right),\quad z\ge 0,
\end{equation}by doing the same transformation step in Eq.(\ref{snrpdf1F},\ref{snrpdf2F}) we get,
\begin{equation}
\label{snrpdf3F}
p_{\gamma}(a)=\frac{1}{e^{\frac{A_0^2}{M}}c_\gamma 2\sigma_H^2}\Gamma\left(0,\frac{A_0^2}{M};\frac{a}{c_\gamma2\sigma_H^2}\right),\quad a\ge 0.
\end{equation}Without constraint in null direction fixed method, random variable $Z$ becomes product of two Rayleigh variable and channel becomes double Rayleigh fading channel. Integral in Eq.(\ref{mellinconvint1}) with limits $\beta_1=0$ and $\beta_2=\infty$, can be expressed from \cite[p.~370]{BookTablesMath} as,
\begin{equation}
\label{pdfDoubleRay}
p_{Z}(z)=\frac{2z}{M\sigma_H^2}K_0\left(\frac{2z}{\sqrt{M2\sigma_H^2}}\right),\quad z\ge 0,
\end{equation}where $K_0(.)$ is the zeroth order modified Bessel function of the second kind. Then, pdf of the SNR becomes,
\begin{equation}
\label{pdfDoubleRaysnr}
p_{\gamma}(a)=\frac{1}{c_\gamma\sigma_H^2}K_0\left(\sqrt{\frac{2a}{c_\gamma\sigma_H^2}}\right),\quad a\ge 0.
\end{equation}

\textcolor{blue}{The same derivations must be done for Rician fading cases, the pdf of $H$ can be given as,}
\textcolor{blue}{
\begin{equation}
\label{amppdfH2}
p_{H}(h)=\frac{h}{\sigma_H^2}e^{\frac{-(h^2+\mu^2)}{2\sigma_H^2}}I_0\left(\frac{h\mu}{\sigma_H^2}\right),\quad h\ge 0,
\end{equation}where $I_0(.)$ is the modified Bessel function of the first kind with order zero and $\mu$ is the non-centrality parameter. The shape parameter $K$ can be given as, $\frac{\mu^2}{2\sigma_H^2}$ and the scale parameter $\Omega$ is written as $\mu^2+2\sigma_H^2$, which is defined as the total power received from multi-paths.}
\begin{lemma}
	\label{LemmaDRayleigh2}
	\textcolor{blue}{Any random variable $X$, which is a product of two random variable, i.e. $X=X_1X_2$, which follows double truncated Rayleigh ($X_1$) and Rician ($X_2$) distribution, presents a distribution with the following probability density function (pdf), }
	\textcolor{blue}{
	\begin{equation}
	\label{pdftheorem2}
	p_{X}(x)=\frac{xe^{\frac{-\mu^2}{2\sigma_{X_2}^2}}}{2{\sigma_X}^2}\sum_{i=0}^{\infty}\frac{\Gamma\left(-i,\beta_1;\frac{x^2}{4{\sigma_X}^2}\right)-\Gamma\left(-i,\beta_2;\frac{x^2}{4{\sigma_X}^2}\right)}{\alpha(\beta_1,\beta_2)i!i!(2\sigma_{X_2}^2\sigma_{X}^2)^{i}\left(\frac{x\mu}{2}\right)^{-2i}},\quad x\ge 0,  
	\end{equation}where $\beta_1=\frac{(X^{min}_1)^2}{2\sigma_{X_1}^2}$, $\beta_2=\frac{(X^{max}_1)^2}{2\sigma_{X_1}^2}$ and $\sigma_{X}$ is equal to $\sigma_{X_1}\sigma_{X_2}$ and $\alpha(a,b)=e^{-a}-e^{-b}$. $\sigma_{X_1}$ and $\sigma_{X_2}$ are the variances of the both real and the imaginary part of the complex random variables $X_1$ and $X_2$, respectively. $X^{min}_1$ and $X^{max}_1$ are the lower and the higher limits of the double truncated Rayleigh distributed r.v. $X_1$, i.e. $X^{min}_1\le X_1 \le X^{max}_1$. $\Gamma(a,x;b)$ is the generalized incomplete gamma function \cite{GenIncGamma} as $\Gamma(a,x;b)=\int_{x}^{\infty}t^{a-1}e^{-t-bt^{-1}}dt$.}
\end{lemma}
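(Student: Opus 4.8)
The plan is to mirror the argument used for Lemma~\ref{LemmaDRayleigh}; the only genuinely new ingredient is the modified Bessel factor that turns a Rayleigh law into a Rician one. First I would write the density of the product $X=X_1X_2$ of two independent nonnegative random variables as the standard product‑density (Mellin‑convolution) integral
\[
p_X(x)=\int \frac{1}{x_1}\,p_{X_1}(x_1)\,p_{X_2}\!\left(\frac{x}{x_1}\right)dx_1 ,
\]
where, because $X_1$ is double truncated Rayleigh, the $x_1$‑integration runs over $[X_1^{min},X_1^{max}]$ for every $x\ge 0$. Substituting the normalised truncated Rayleigh density of $X_1$ (with the factor $\alpha(\beta_1,\beta_2)$) and the Rician density Eq.(\ref{amppdfH2}) of $X_2$, the factor $x_1$ in the numerator of each density cancels against the Jacobian $1/x_1$ and one power of $x_1$ in the denominator, so $x$ and the constant $e^{-\mu^2/(2\sigma_{X_2}^2)}/\big(\alpha(\beta_1,\beta_2)\sigma_{X_1}^2\sigma_{X_2}^2\big)$ come out of the integral.

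Next I would expand the Bessel function by its defining series, $I_0(y)=\sum_{i\ge 0}(y/2)^{2i}/(i!)^2$, and exchange summation and integration (legitimate by Tonelli, since every summand is nonnegative on the integration interval). The $i$‑th term then contains the integral $\int_{X_1^{min}}^{X_1^{max}} x_1^{-(2i+1)}\exp\!\big(-x_1^2/(2\sigma_{X_1}^2)-x^2/(2\sigma_{X_2}^2x_1^2)\big)dx_1$. The change of variable $t=x_1^2/(2\sigma_{X_1}^2)$ (with $\sigma_X^2=\sigma_{X_1}^2\sigma_{X_2}^2$) converts it into $\int_{\beta_1}^{\beta_2}t^{-i-1}e^{-t-\frac{x^2}{4\sigma_X^2}t^{-1}}dt$; splitting the finite range as $\int_{\beta_1}^{\infty}-\int_{\beta_2}^{\infty}$ and comparing with $\Gamma(a,x;b)=\int_x^{\infty}t^{a-1}e^{-t-bt^{-1}}dt$ identifies this with $\Gamma\!\big(-i,\beta_1;\frac{x^2}{4\sigma_X^2}\big)-\Gamma\!\big(-i,\beta_2;\frac{x^2}{4\sigma_X^2}\big)$. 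Gathering the leftover powers of $2$, $x$, $\mu$, $\sigma_{X_1}$ and $\sigma_{X_2}$ into the single factor $(x\mu/2)^{2i}/\big[(i!)^2(2\sigma_{X_2}^2\sigma_X^2)^i\big]$ reproduces Eq.(\ref{pdftheorem2}) exactly. As a sanity check, letting $\mu\to 0$ annihilates all $i\ge 1$ terms and collapses the formula to the pdf of Lemma~\ref{LemmaDRayleigh}, i.e. Eq.(\ref{pdftheorem}).

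The remainder is routine bookkeeping, and the only points that deserve care are the interchange of sum and integral (handled by nonnegativity, as noted) and the well‑posedness and convergence of the result: since $X_1^{min}>0$ one has $\beta_1>0$, so each $\Gamma(-i,\beta_1;\cdot)$ is a finite integral, and together with the $1/(i!)^2$ coefficients and the decay of $\Gamma(-i,\beta_1;\cdot)$ in $i$ the series converges for all $x\ge 0$; that it integrates to one over $x\ge 0$ can be verified by reversing the steps, exactly as for Lemma~\ref{LemmaDRayleigh}. I expect the change of variables and the constant‑chasing to be the only slightly delicate step, with no conceptual obstacle beyond it.
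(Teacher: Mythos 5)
Your proposal is correct and follows essentially the same route as the paper's proof in \ref{ProofProductDistTrunRayAndRice}: Mellin convolution of the truncated Rayleigh and Rician densities, series expansion of $I_0$, term-by-term integration, the substitution $v=x_1^2/(2\sigma_{X_1}^2)$, and the split $\int_{\beta_1}^{\beta_2}=\int_{\beta_1}^{\infty}-\int_{\beta_2}^{\infty}$ to recognize the generalized incomplete gamma functions $\Gamma(-i,\beta_j;x^2/(4\sigma_X^2))$. Your added remarks on the Tonelli justification for the interchange, the convergence of the series, and the $\mu\to 0$ consistency check with Lemma~\ref{LemmaDRayleigh} go slightly beyond what the paper records but do not change the argument.
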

\begin{proof}
	\textcolor{blue}{See the \ref{ProofProductDistTrunRayAndRice}.}
\end{proof}\textcolor{blue}{Using the Lemma.\ref{LemmaDRayleigh2}, pdf of the $Z$ in Eq.(\ref{NonChmodel2}) can be rewritten as,}
\textcolor{blue}{
\begin{equation}
p_{Z}(z)=\frac{ze^{\frac{-\mu^2}{2\sigma_{H}^2}}}{M{\sigma_H}^2}\sum_{i=0}^{\infty}\frac{\Gamma\left(-i,\tilde{A_1};\kappa\right)-\Gamma\left(-i,\tilde{A_2};\kappa\right)}{\alpha(\tilde{A_1},\tilde{A_2})i!i!(M\sigma_{H}^4)^{i}\left(\frac{z\mu}{2}\right)^{-2i}},\quad z\ge 0,
\end{equation}where $\kappa=\frac{z^2}{2M{\sigma_H}^2}$. By the transformation theorem for single random variables, the pdf of the channel power gain $Z^2$ becomes,}
\textcolor{blue}{
\begin{equation}
\label{snrpdf1F2}
p_{Z^2}(\bar{z})=\frac{e^{\frac{-\mu^2}{2\sigma_{H}^2}}}{2M{\sigma_H}^2}\sum_{i=0}^{\infty}\frac{\Gamma\left(-i,\tilde{A_1};\bar{\kappa}\right)-\Gamma\left(-i,\tilde{A_2};\bar{\kappa}\right)}{\alpha(\tilde{A_1},\tilde{A_2})i!i!\left(\frac{\mu}{2\sigma_{H}^2}\right)^{-2i}\left(\frac{\bar{z}}{M}\right)^{-i}},\quad \bar{z}\ge 0.
\end{equation}Using the (\ref{snrpdf1F2}), pdf of the SNR, $\gamma$ becomes,}
\textcolor{blue}{
\begin{equation}
\label{snrpdf2F2}
p_{\gamma}(a)=\frac{e^{\frac{-\mu^2}{2\sigma_{H}^2}}}{c_\gamma 2{\sigma_H}^2}\sum_{i=0}^{\infty}\frac{\Gamma\left(-i,\tilde{A_1};\chi\right)-\Gamma\left(-i,\tilde{A_2};\chi\right)}{\alpha(\tilde{A_1},\tilde{A_2})i!i!\left(\frac{\mu}{2\sigma_{H}^2}\right)^{-2i}\left(\frac{a}{c_\gamma}\right)^{-i}},\quad a\ge 0,
\end{equation}where $c_\gamma=P_t/(\sigma^{2}_{N}B)$ and $\chi=\frac{a}{c_\gamma 2{\sigma_H}^2}$. We have also investigated the pdf of the product of single side truncated Rayleigh and Rayleigh random variables.}
\begin{corollary}
	\textcolor{blue}{Any random variable $X$, which is a product of two r.v., $X=X_1X_2$, which follows single side truncated Rayleigh ($X_1$), $X_1\ge X^{min}_1$, and Rayleigh ($X_2$) distribution, presents a distribution with the following probability density function (pdf),}
	\textcolor{blue}{
	\begin{equation}
	\label{singleRaypdf2}
	p_{X}(x)=\frac{xe^{\frac{-\mu^2}{2\sigma_{X_2}^2}}}{e^{-\beta_1}2{\sigma_X}^2}\sum_{i=0}^{\infty}\frac{\Gamma\left(-i,\beta_1;\frac{x^2}{4{\sigma_X}^2}\right)}{i!i!(2\sigma_{X_2}^2\sigma_{X}^2)^{i}}\left(\frac{x\mu}{2}\right)^{2i},\quad x\ge 0,   
	\end{equation}}
\end{corollary}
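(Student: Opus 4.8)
The plan is to derive this corollary as the degenerate case of Lemma~\ref{LemmaDRayleigh2} in which the upper truncation point of $X_1$ is pushed to infinity, mirroring the way the preceding corollary followed from Lemma~\ref{LemmaDRayleigh}. First I would observe that the single-side truncated Rayleigh density $p_{X_1}(x_1)=\frac{x_1}{e^{-\beta_1}\sigma_{X_1}^{2}}e^{-x_1^{2}/(2\sigma_{X_1}^{2})}$ supported on $x_1\ge X^{min}_1$ is both the pointwise and the $L^1$ limit, as $X^{max}_1\to\infty$ (equivalently $\beta_2=(X^{max}_1)^2/(2\sigma_{X_1}^2)\to\infty$), of the double truncated Rayleigh density entering Lemma~\ref{LemmaDRayleigh2}. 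Since $X=X_1X_2$ is a product and the density of a product depends continuously, in the $L^1$ sense, on the density of each factor, $p_X(x)$ in the single-side truncated case equals the limit of the closed form in Eq.~\eqref{pdftheorem2} as $\beta_2\to\infty$.

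Next I would evaluate that limit termwise. The normalizer obeys $\alpha(\beta_1,\beta_2)=e^{-\beta_1}-e^{-\beta_2}\to e^{-\beta_1}$. For the generalized incomplete gamma terms, with $b=x^2/(4\sigma_X^2)>0$ we have $0\le\Gamma(-i,\beta_2;b)=\int_{\beta_2}^{\infty}t^{-i-1}e^{-t-b t^{-1}}\,dt\le\int_{\beta_2}^{\infty}t^{-i-1}e^{-t}\,dt\to0$ for every $i\ge0$, so $\Gamma(-i,\beta_1;b)-\Gamma(-i,\beta_2;b)\to\Gamma(-i,\beta_1;b)$. Substituting $\alpha(\beta_1,\infty)=e^{-\beta_1}$ and $\Gamma(-i,\infty;b)=0$ into Eq.~\eqref{pdftheorem2}, and moving the factor $\left(\tfrac{x\mu}{2}\right)^{-2i}$ from the denominator into the numerator as $\left(\tfrac{x\mu}{2}\right)^{2i}$, yields exactly Eq.~\eqref{singleRaypdf2}.

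The step requiring care is interchanging the limit $\beta_2\to\infty$ with the infinite summation over $i$. I would justify this by dominated convergence: each summand is nonnegative, and because $\Gamma(-i,\beta_1;b)-\Gamma(-i,\beta_2;b)$ increases monotonically to $\Gamma(-i,\beta_1;b)$ as $\beta_2\to\infty$, every summand is dominated uniformly in $\beta_2$ by its $\beta_2=\infty$ value; the dominating series $\sum_{i\ge0}\Gamma(-i,\beta_1;b)\left(\tfrac{x\mu}{2}\right)^{2i}\big/\big(i!\,i!\,(2\sigma_{X_2}^{2}\sigma_X^{2})^{i}\big)$ converges, being, up to the fixed prefactor $1/e^{-\beta_1}$, a genuine probability density obtained by term-by-term integration of the $I_0$ power series of the Rician factor. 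As an independent cross-check one can instead re-run the derivation of \ref{ProofProductDistTrunRayAndRice} from the start with the single-side truncated density of $X_1$, i.e. evaluate $p_X(x)=\int_0^{\infty}\tfrac{1}{x_2}\,p_{X_1}\!\left(\tfrac{x}{x_2}\right)p_{X_2}(x_2)\,dx_2$ after expanding $I_0$ in its series; integrating each term produces the generalized incomplete gamma functions with the single lower limit $\beta_1$, reproducing Eq.~\eqref{singleRaypdf2} directly.
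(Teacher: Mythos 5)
Your proposal is correct and follows essentially the same route as the paper: the paper's own proof simply substitutes $\beta_2=\infty$ into Eq.~(\ref{pdftheorem2}), using $\alpha(\beta_1,\infty)=e^{-\beta_1}$ and $\Gamma(a,\infty;b)=0$, exactly as you do. Your additional justification of the termwise limit via monotone/dominated convergence, and the cross-check by rerunning the Mellin convolution with the single-side truncated density, are sound refinements of the same argument rather than a different approach.
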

\begin{proof}
	\textcolor{blue}{By replacing $\beta_2$ with $\infty$ in Eq.(\ref{pdftheorem2}), $\alpha(\beta_1,\infty)$ and $\Gamma(a,\infty;b)$ becomes $e^{-\beta_1}$ and $0$, respectively. Then, related distribution pdf can be simply defined as in Eq.(\ref{singleRaypdf2}).}
\end{proof}\textcolor{blue}{Using the Eq.(\ref{singleRaypdf2}), pdf of the $Z$ for single side truncated case in Eq.(\ref{NonChmodel2}) can be rewritten as,}
\textcolor{blue}{
\begin{equation}
p_{Z}(z)=\frac{ze^{\frac{-\mu^2}{2\sigma_{H}^2}}}{e^{-\tilde{A_1}}M{\sigma_H}^2}\sum_{i=0}^{\infty}\frac{\Gamma\left(-i,\tilde{A_1};\frac{z^2}{2M{\sigma_H}^2}\right)}{i!i!(M\sigma_{H}^4)^{i}}\left(\frac{z\mu}{2}\right)^{2i},\quad z\ge 0,
\end{equation}by doing the same transformation step in Eq.(\ref{snrpdf1F},\ref{snrpdf2F}) we get,}
\textcolor{blue}{
\begin{equation}
\label{snrpdf3F2}
p_{\gamma}(a)=\frac{e^{\frac{-\mu^2}{2\sigma_{H}^2}}}{e^{-\tilde{A_1}}c_\gamma 2{\sigma_H}^2}\sum_{i=0}^{\infty}\frac{\Gamma\left(-i,\tilde{A_1};\frac{a}{c_\gamma 2{\sigma_H}^2}\right)}{i!i!\left(\frac{\mu}{2\sigma_{H}^2}\right)^{-2i}\left(\frac{a}{c_\gamma}\right)^{-i}},\quad a\ge 0,
\end{equation}Without constraint in null direction fixed method, random variable $Z$ becomes product of Rayleigh and Rician variable. Integral in Eq.(\ref{mellinconvint1}) with limits $\beta_1=0$ and $\beta_2=\infty$, can be expressed from \cite[p.~58]{BookProbGaussian} as,}
\textcolor{blue}{\begin{equation}
\label{pdfDoubleRay2}
p_{Z}(z)=\frac{2ze^{\frac{-\mu^2}{2\sigma_{H}^2}}}{M\sigma_{H}^2}\sum_{i=0}^{\infty}\frac{\left(\frac{\mu\sqrt{z}}{2\sigma_{H}^2}\right)^{2i}\left(\frac{\sigma_{H}}{\sqrt{M/2}}\right)^{i}}{i!i!}K_i\left(\frac{2z}{\sqrt{2M\sigma_{H}^2}}\right),\quad z\ge 0,
\end{equation}where $K_i(.)$ is the i-th order modified Bessel function of the second kind. Then, pdf of the SNR becomes,}
\textcolor{blue}{
\begin{equation}
\label{pdfDoubleRaysnr2}
p_{\gamma}(a)=\frac{e^{\frac{-\mu^2}{2\sigma_{H}^2}}}{c_\gamma\sigma_{H}^2}\sum_{i=0}^{\infty}\frac{\left(\frac{\mu a^{1/4}}{2\sigma_{H}^2}\right)^{2i}\left(\frac{\sigma_{H}}{\sqrt{c_\gamma/2}}\right)^{i}}{i!i!}K_i\left(\sqrt{\frac{2a}{c_\gamma\sigma_{H}^2}}\right),\quad a\ge 0,
\end{equation}}

\subsubsection{Capacity Under Slow fading, $T_c\gg T_s$}
For the slow fading case, communication symbol duration ($T_s$) is smaller than the fading channel coherence time ($T_c$). For slowly varying channels where the instantaneous SNR $\gamma_{i}$ is assumed to be constant for a large number of symbols, outage capacity needs to be presented. Over a big time period, channel data rate can reach up to capacity without fading case with negligible error. However, since the transmitter does not know the instantaneous SNR, a constant transmission data rate required, and it is independent from the instantaneous received SNR. 

Particularly, a design parameter $p_{out}$ has to be selected which expresses a probability that the system can be in outage. In other words, $p_{out}$ is the probability that the system cannot correctly decrypt the transmitted communication symbols. The cause of the outage is the slow fading conditions. Hence, it is independent from unintentional modulation parameter $A_l$. Then, $p_{out}$ can be defined as $Pr[\gamma_i<\gamma_{min}|A_l^2]$, where $\gamma_{min}$ is the minimum SNR level that the received symbols cannot be correctly decoded with probability $1$. At this level, the system declares an outage. The pdf of the $\gamma_{i}|A_l^2$ is,
\begin{equation}
\label{snrpdfinstan}
p_{\gamma_i|A_l^2}(a)=\frac{1}{\bar{\gamma}A_l^2}e^{-a/\bar{\gamma}A_l^2},\quad a>0,
\end{equation}where $\bar{\gamma}$ \textcolor{blue}{is} $(E[|h|^2]c_\gamma)/M$, and $E[X]$  shows the expected value of a random variable \textcolor{blue}{and given as $\int_{-\infty}^{\infty}xp_X(x)dx$.} $h$ is the instantaneous channel gain resulting from fading \textcolor{blue}{and} average channel power gain for Rayleigh fading \textcolor{blue}{$E[|h|^2]$ is} $2\sigma_H^2$ \textcolor{blue}{and for Rician fading, $E[|h|^2]=\mu^2+2\sigma_H^2$, where $\mu$ is the non-centrality parameter.} Then, $p_{out}$ can be defined as,
\begin{equation}
\label{poutage}
%\begin{split}
p_{out}=Pr[\gamma_i<\gamma_{min}|A_l]=1-\int_{\gamma_{min}}^{\infty}p_{\gamma_i|A_l}(a)da=1-e^{-\frac{\gamma_{min}}{\bar{\gamma}A_l^2}},
%\end{split}
\end{equation}Taking out $\gamma_{min}$ from Eq.(\ref{poutage}),
\textcolor{blue}{
\begin{equation}
\label{snrminoutage}
%\begin{split}
\gamma_{min}=-\bar{\gamma}\ln(1-p_{out})A_l^2=-\frac{E[|h|^2]c_\gamma}{M}\ln(1-p_{out})A_l^2,
%\end{split}
\end{equation}}Then, the capacity with outage expression can be written as,
\begin{equation}
\label{CapacitySlow1}
C^{out}_{sf} = \int B\ln(1+a)p_{\gamma_{min}}(a)da,
\end{equation}where the pdf of the $\gamma_{min}$ as,
\begin{equation}
\label{snrpdf2DF}
p_{\gamma_{min}}(a)=\frac{e^{-a/\tilde{\gamma}}}{\alpha(\tilde{A_1},\tilde{A_2})\tilde{\gamma}},\quad \tilde{\gamma}\tilde{A_1}\le a\le \tilde{\gamma}\tilde{A_2},
\end{equation}where \textcolor{blue}{$\tilde{\gamma}=-E[|h|^2]c_\gamma\ln(1-p_{out})$.} Also, average outage rate \cite{ergodicOutRate} can be defined as, 
\begin{equation}
\label{OutRateSlow1}
R^{out}_{sf} = (1-p_{out})C^{out}_{sf}.
\end{equation}Then, the ergodic capacity for coherent JRC system under slow fading can be expressed by replacing $c_\gamma$ with $\tilde{\gamma}$ in Eq.(\ref{CsnrNull}),
\begin{equation}
\label{CsnrNull2DF}
\begin{split}
C^{out}_{sf}=&\frac{B}{\alpha(\tilde{A_1},\tilde{A_2})}\bigg[\left[\ln(1+\tilde{\gamma}\tilde{A_1})e^{-\tilde{A_1}} -\ln(1+\tilde{\gamma}\tilde{A_2})e^{-\tilde{A_2}}\right]\\
&+e^{\frac{1}{\tilde{\gamma}}}\left[E_1\left(\frac{1}{\tilde{\gamma}}+\tilde{A_1}\right)-E_1\left(\frac{1}{\tilde{\gamma}}+\tilde{A_2}\right)\right]\bigg].
\end{split}
\end{equation}Capacity expressions in Eq.(\ref{CsnrNullspecial}) and Eq.(\ref{CsnrNullspecialprop1P1}) for coherent JRC methods with different constraints can be easily modified by replacing $c_\gamma$ with $\tilde{\gamma}$.

\begin{figure*}[htb]
	\centering
	\subfigure[]
	{
		\includegraphics[width=0.3\textwidth]{./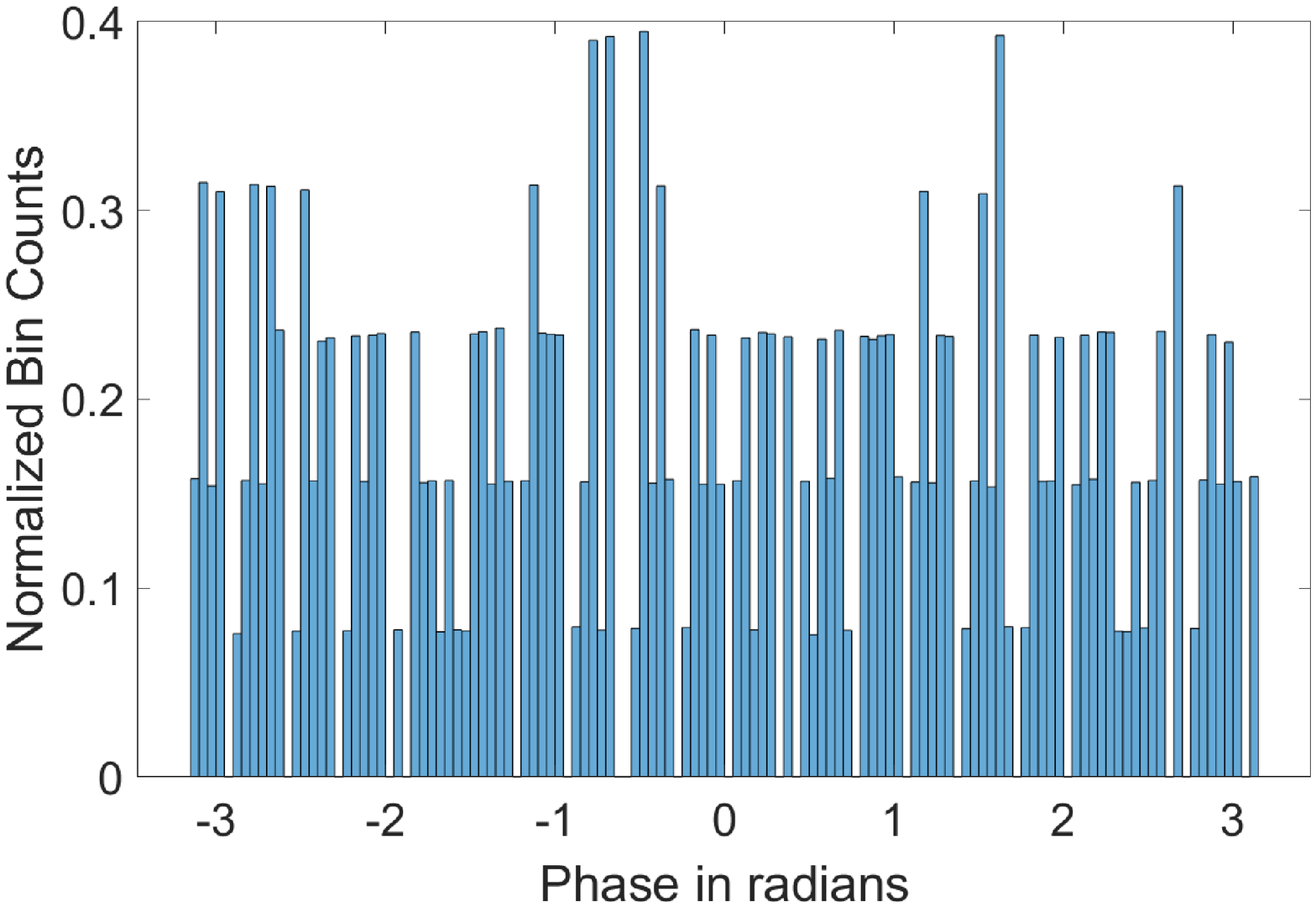}
		%		\caption{Unintentional Modulation Phase Distribution for $M=16$.}
		\label{fig:PhDistM16}
	}
	\subfigure[]
	{
		\includegraphics[width=0.3\textwidth]{./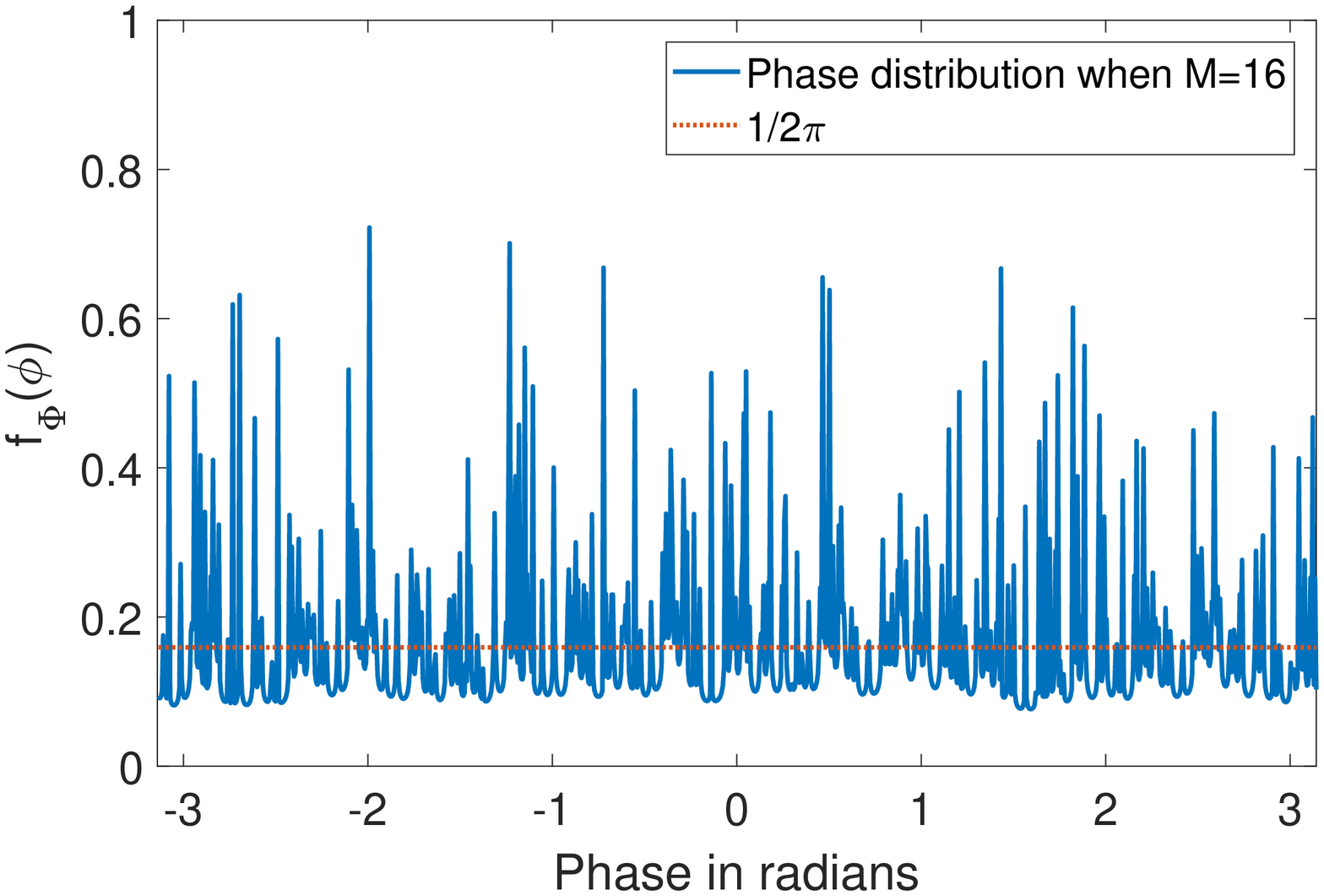}
		%		\caption{Unintentional Modulation Phase Distribution for $M=16$.}
		\label{fig:PhDistM16f}
	}
	\subfigure[]
	{
		\includegraphics[width=0.3\textwidth]{./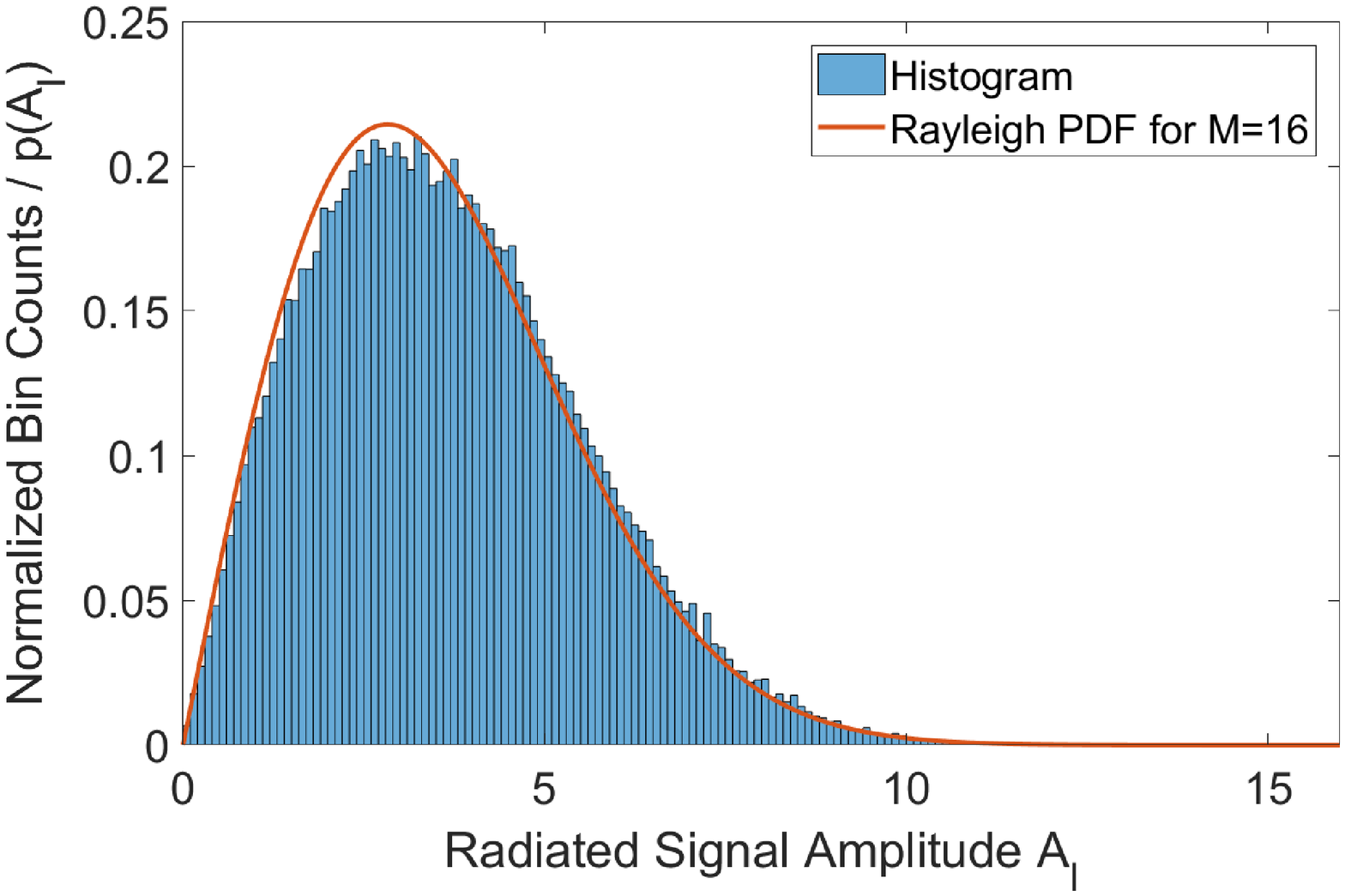}
		%		\caption{Unintentional Modulation Amplitude Distribution for $M=16$.}
		\label{fig:AmpDistM16}
	}
	\subfigure[]
	{
		\includegraphics[width=0.3\textwidth]{./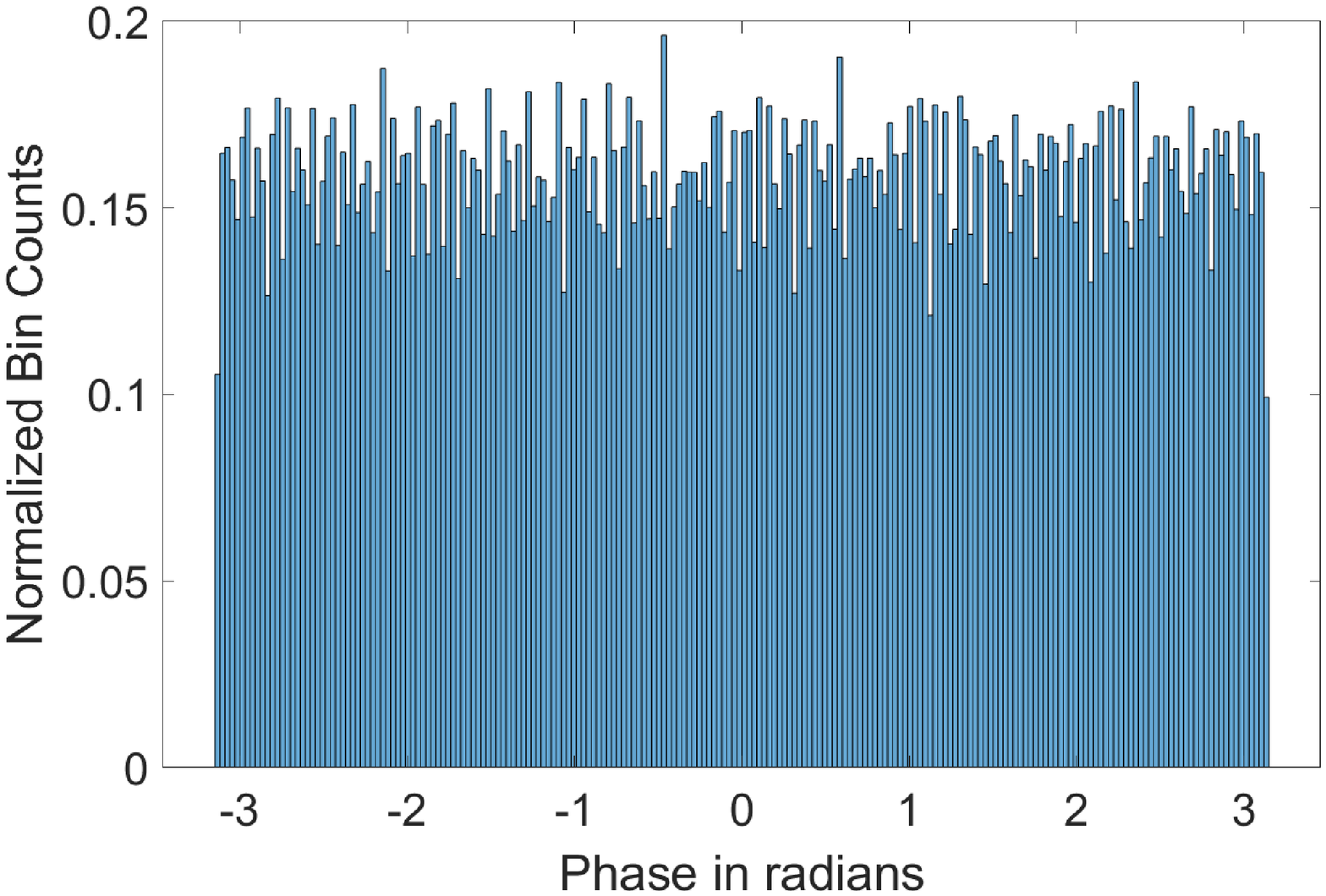}
		%		\caption{Unintentional Modulation Phase Distribution for $M=100$.}
		\label{fig:PhDistM100}
	}
	\subfigure[]
	{
		\includegraphics[width=0.3\textwidth]{./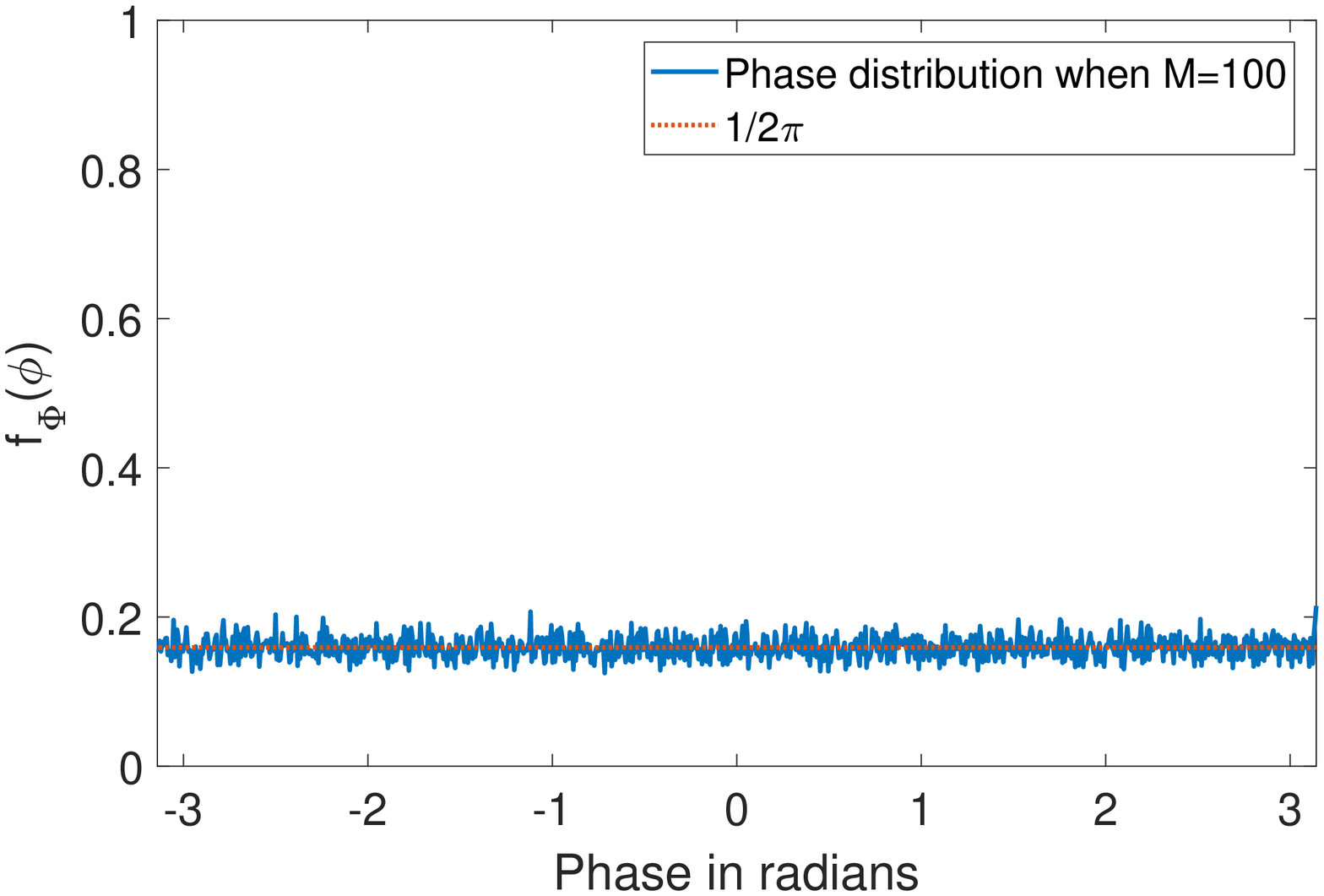}
		%		\caption{Unintentional Modulation Phase Distribution for $M=100$.}
		\label{fig:PhDistM100f}
	}
	\subfigure[]
	{
		\includegraphics[width=0.3\textwidth]{./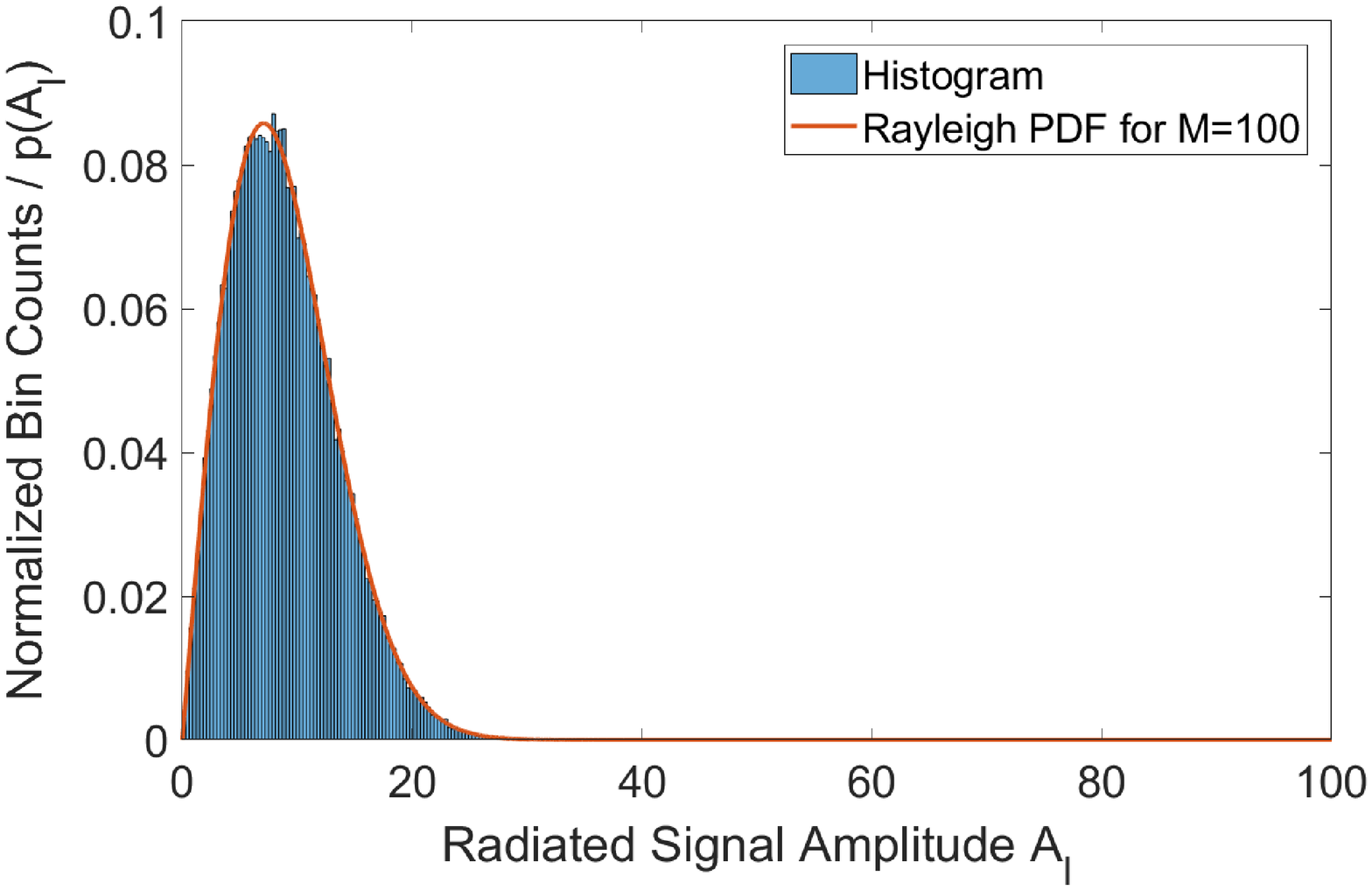}
		%		\caption{Unintentional Modulation Amplitude Distribution for $M=100$.}
		\label{fig:AmpDistM100}
	}	
	\caption{(a,d) Histogram, (b,e) probability density function in Eq.(\ref{pdfDelta11}) of phase distribution of all antennas, $\phi_m(l)$, and (c,f) histogram of the amplitude distribution of radiated signal $A_l$, for $M=16$ and $M=100$ and $l=1,2,...,L$, in the MIMO array for communication angle is selected as $\theta_{c}=-22$ degrees and Rayleigh pdf in Theorem.(\ref{theoremRay}) curve (c) for $M=16$, (f) for $M=100$.}
	\label{fig:resultsPhaseDistM16}
\end{figure*}

\subsubsection{Capacity Under Fast fading, $T_c\le T_s$}
For the fast fading case, communication symbol duration ($T_s$) is larger than the channel coherence time ($T_c$). The ergodic channel capacity can be reached \textcolor{blue}{for Rayleigh and Rician fading cases and different limit conditions using the related pdf expressions}, after the channel has encountered all possible fades and it can be expressed as,
\begin{equation}
\label{CtdrayFast}
\bar{C}_{rf}=\int_{0}^{\infty}B\ln(1+a)p_\gamma(a)da, \quad\quad a\ge 0
\end{equation}
Capacity expressions for the different communication methods can be easily obtained via modifying Eq.(\ref{CtdrayFast}) for $A_1$ and $A_2$ constraints. Moreover, \textcolor{blue}{the expression} in Eq.(\ref{CtdrayFast}) can be solved numerically \textcolor{blue}{ for all cases. With only exception,} capacity expression for null direction fixed method without constraint, i.e. double Rayleigh fading case, can be given using the pdf in Eq.(\ref{pdfDoubleRaysnr}) in a closed form equation as,
\begin{equation}
\label{CtdrayFast2}
\bar{C}_{rf}=\frac{1}{c_\gamma 2\sigma_H^2}\MeijerG[\Bigg]{3}{1}{1}{3}{-1}{-1,-1,0}{\frac{1}{c_\gamma 2\sigma_H^2}}
\end{equation}where $G_{m,n}^{p,q}$ is the Meijer G-function defined using a contour integral as,
\begin{equation}
\label{meijerGFunc}
\begin{split}
&\MeijerG[\Bigg]{m}{n}{p}{q}{a_1,\dots,a_p}{b_1,\dots,b_q}{z}=\frac{1}{j2\pi}\\
&\quad\quad\times\int\limits_{\mathrm{C}}\frac{\prod_{i=1}^{m}\Gamma(b_i+s)\prod_{i=1}^{n}\Gamma(1-a_i-s)}{\prod_{i=n+1}^{p}\Gamma(a_i+s)\prod_{i=m+1}^{q}\Gamma(1-b_i-s)}z^{-s}ds,
\end{split}
\end{equation}where $j=\sqrt{-1}$ and $\Gamma(.)$ is the gamma function \cite{BookHandMath}.

\section{Numerical Analysis}
\label{NumAnalysis}
In this section, first we investigate the amplitude distribution of the unintentional modulation given in Sec.\ref{AmpDistribution}. Then, we display and compare the results of the channel capacity for the JRC capable coherent MIMO radars when the actual channel distribution is AWGN or Rayleigh\textcolor{blue}{Rician}.

\subsection{Unintentional Modulation Amplitude Distribution Analysis}
\label{UnIntModAnalysis}
In Sec.(\ref{AmpDistribution}), we have proved that the probability distribution of the amplitude distribution of the unintentional modulation follows a Rayleigh distribution. In this proof, the main requirement to reach this distribution indicates that the phase distribution along the radar pulse must converge to uniform distribution when the number of antenna elements has relatively large values.

\begin{figure}[htb]
	\centering
	\subfigure[]
	{
		\includegraphics[width=0.22\textwidth]{./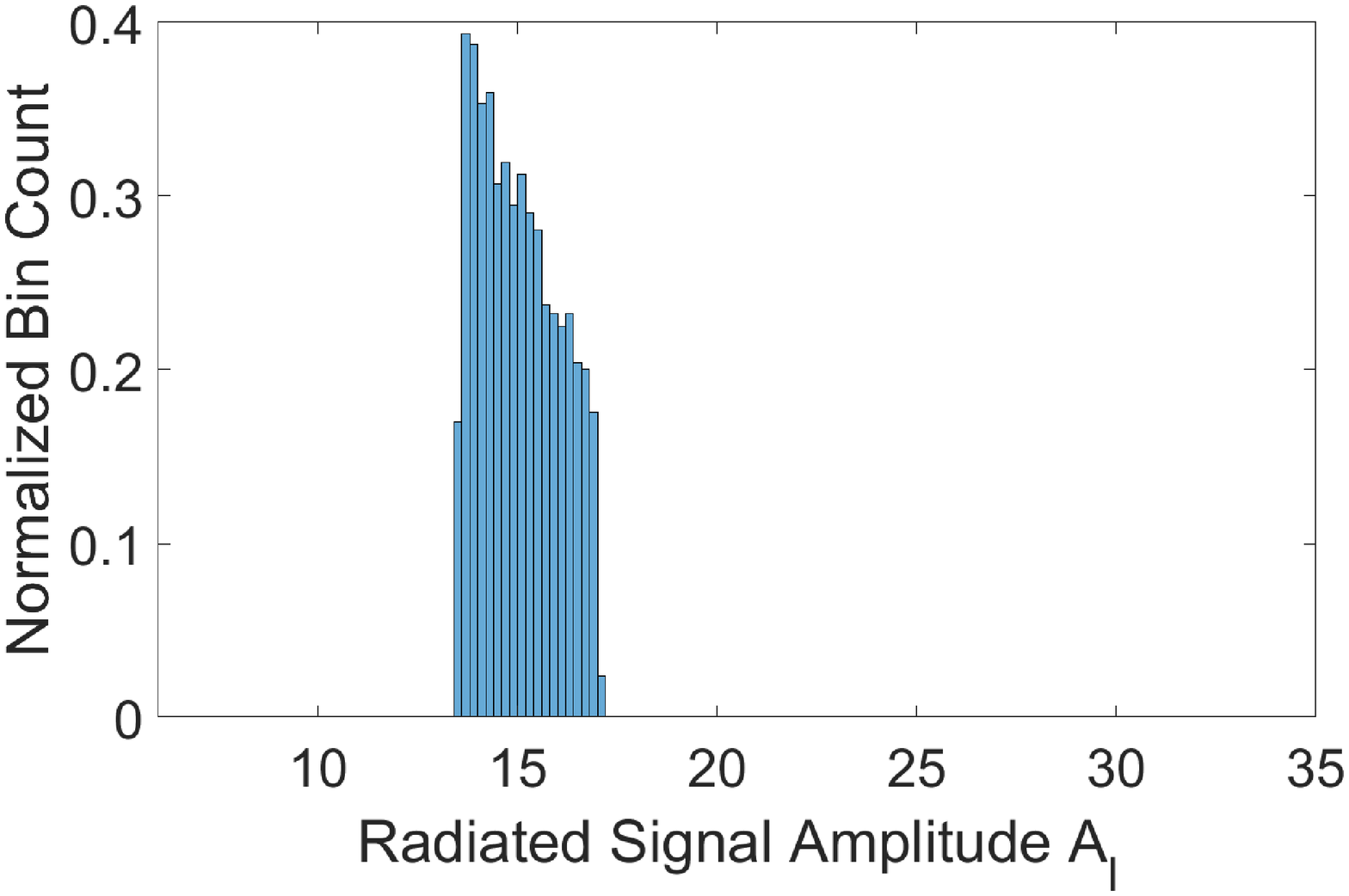}
		%		\caption{Unintentional Modulation Amplitude Distribution for $M=100$ with $\Delta=1$dB.}
		\label{fig:PhDistM100delta1dB}
	}
	\subfigure[]
	{
		\includegraphics[width=0.22\textwidth]{./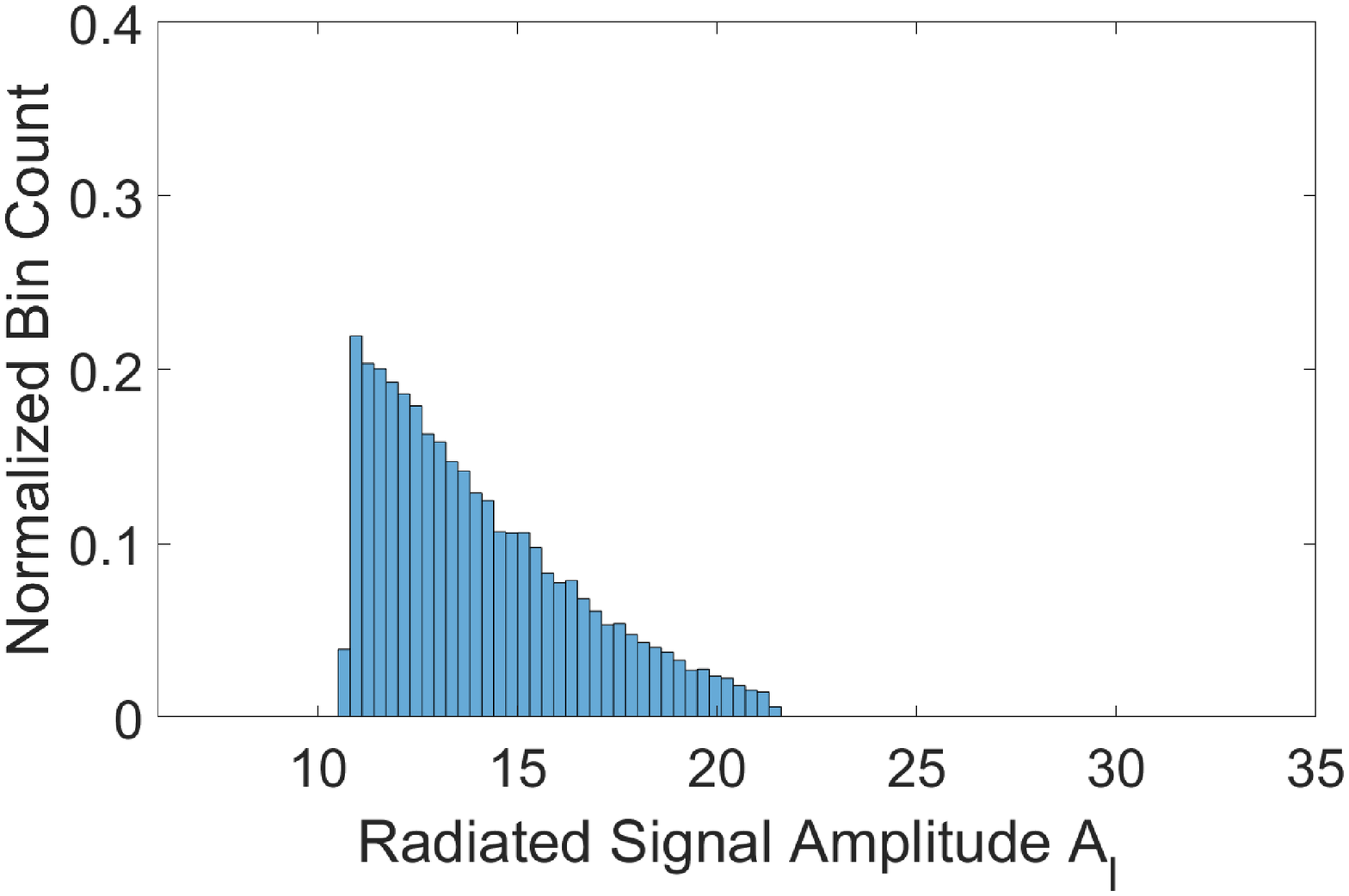}
		%		\caption{Unintentional Modulation Amplitude Distribution for $M=100$ with $\Delta=3$dB.}
		\label{fig:PhDistM100delta3dB}
	}
	\subfigure[]
	{
		\includegraphics[width=0.22\textwidth]{./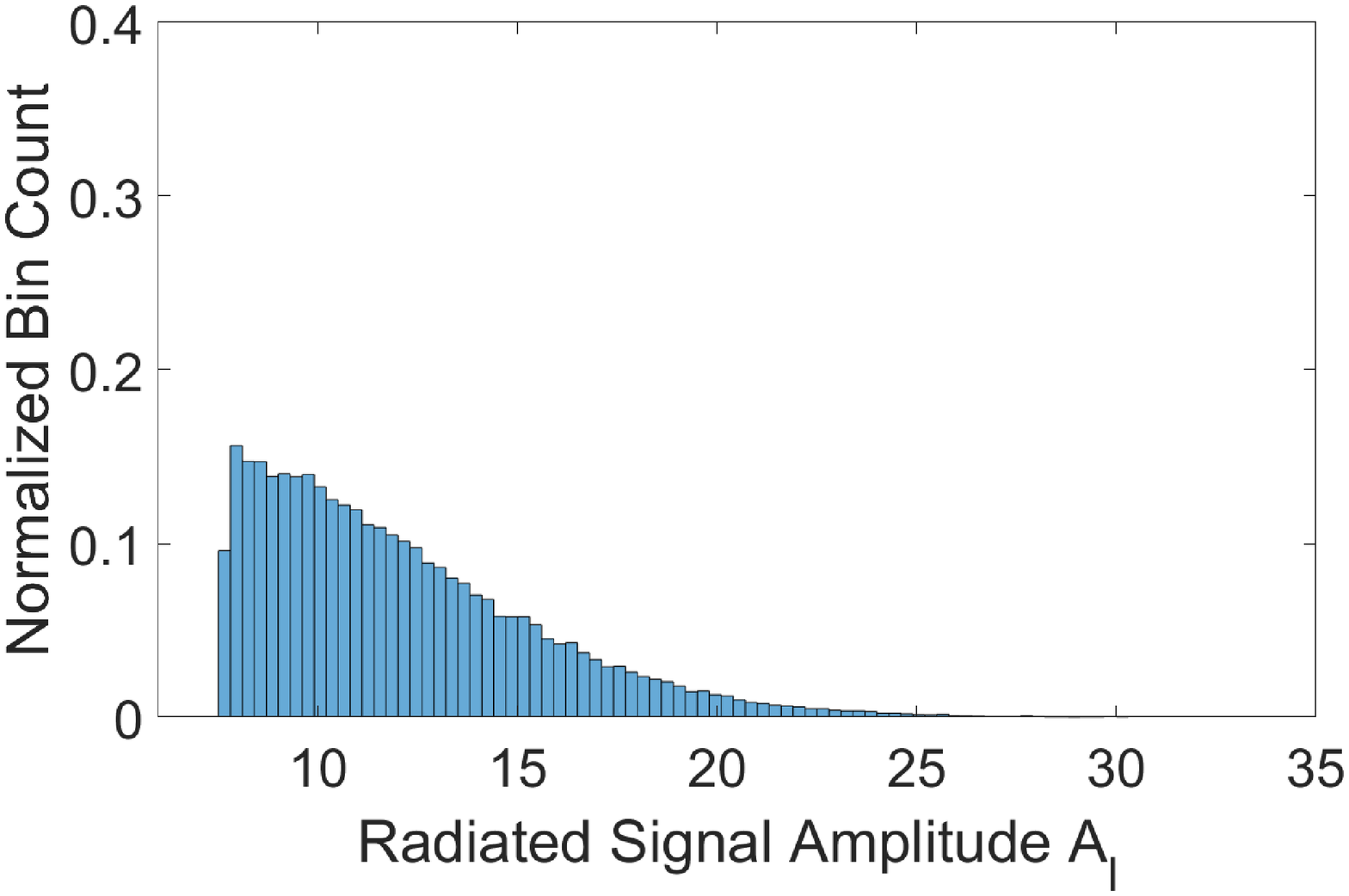}
		%		\caption{Unintentional Modulation Amplitude Distribution for $M=100$ with $\Delta=6$dB.}
		\label{fig:PhDistM100delta6dB}
	}
	\subfigure[]
	{
		\includegraphics[width=0.22\textwidth]{./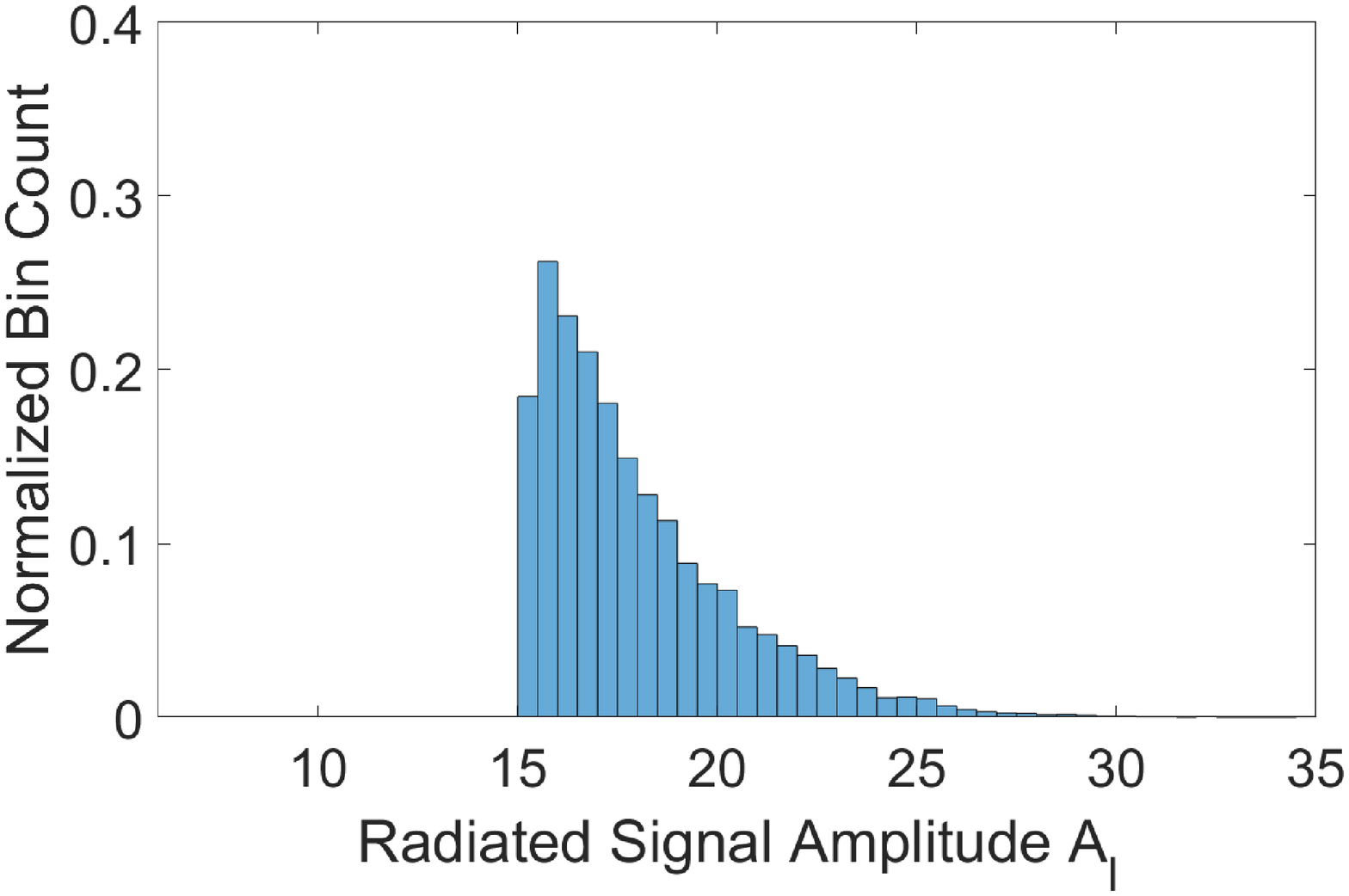}
		%		\caption{Unintentional Modulation Amplitude Distribution for $M=100$ with $p=0.1$.}
		\label{fig:PhDistM100p0d1}
	}
	\subfigure[]
	{
		\includegraphics[width=0.22\textwidth]{./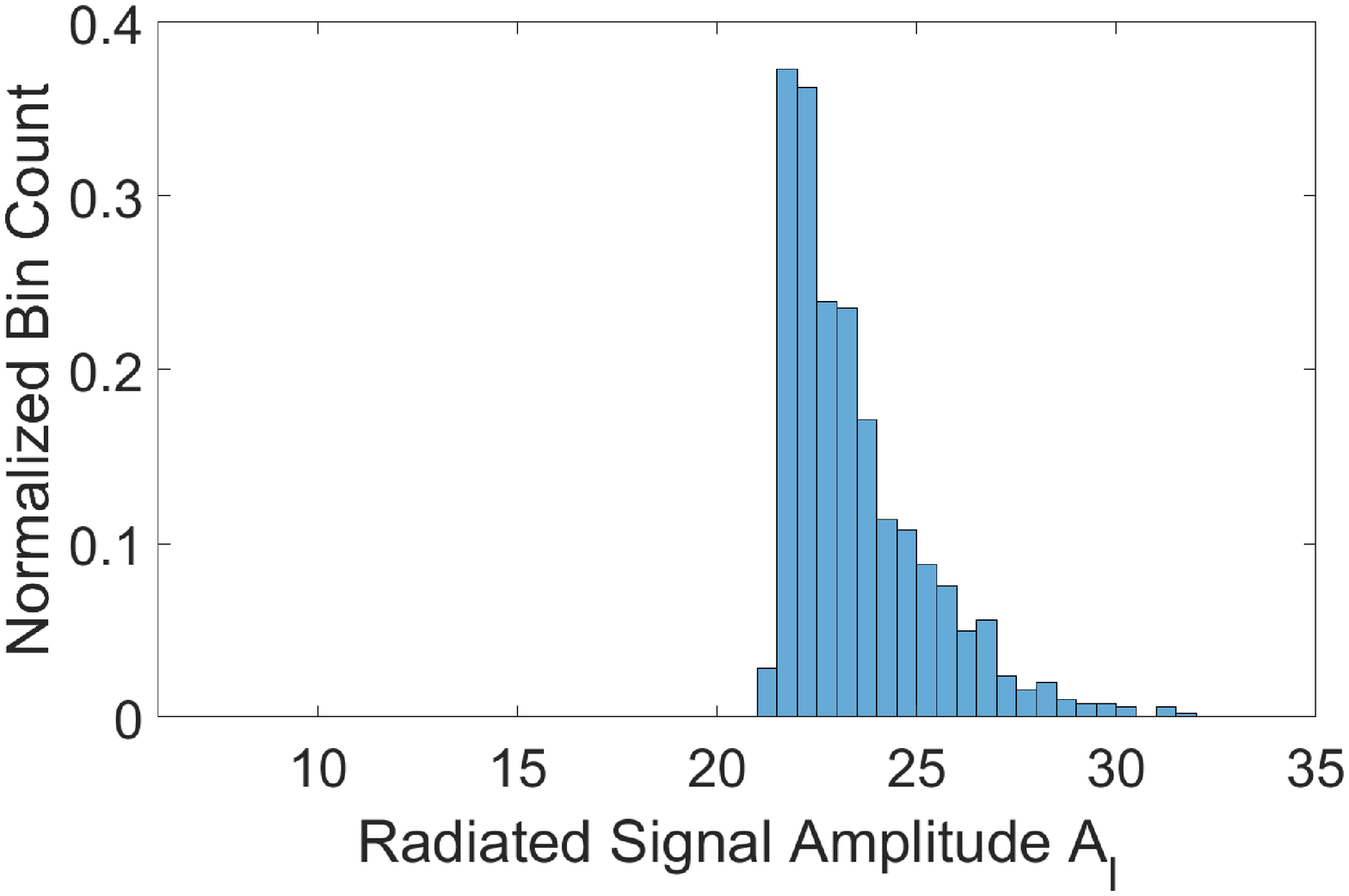}
		%		\caption{Unintentional Modulation Amplitude Distribution for $M=100$ with $p=0.01$.}
		\label{fig:PhDistM100p0d01}
	}
	\caption{Histogram of amplitude of the radiated signal $A_l$, when the constraints are (a) $\hat{\varDelta}=1dB$ and $p_0=0.1$, (a) $\hat{\varDelta}=3dB$ and $p_0=0.1$, (c) $\hat{\varDelta}=6dB$ and $p_0=0.1$ in $A_0\varDelta\le A_l\le A_0\varDelta$, (d) $p_0=0.1$ and (e) $p_0=0.01$ in $A_l\ge A_0$, where $A_0=\sqrt{-M\ln(p_0)}$, $m=1,2,...,M=100$ and $l=1,2,...,L$.}
	\label{fig:resultsAmplitudeDistConstraint}
\end{figure}

We have analyzed the phase distribution, $\phi_m(l)$ in Eq.(\ref{sumofexponentials}), with two different number of antennas $16$ and $100$, which are spaced with $0.5$ of a wavelength. In Fig.\ref{fig:resultsPhaseDistM16}, phase distribution for each case spans homogeneously over $(-\pi,\pi]$ interval and the values are scattered around $1/2\pi$. Moreover, the expression in Eq.(\ref{pdfDelta2}) for the pdf of $\phi_{m}(l)$ is evaluated in Fig.\ref{fig:PhDistM16f}. The histogram in Fig.\ref{fig:PhDistM16} and the curve in Fig.\ref{fig:PhDistM16f} show similar behavior, so we can make sure that the Eq.(\ref{pdfDelta2}) almost represents the phase distribution. Calculations are also evaluated for $M=100$ and illustrated in Fig.\ref{fig:resultsPhaseDistM16}. Phase distribution of $\phi_{m}(l)$ shows uniform distributions when $M$ takes large number. The curve in Fig.\ref{fig:PhDistM100f} converges to $1/2\pi$ line. 

After making sure about the phase distribution which converges to uniform distribution by \textcolor{blue}{examining} the figures. We have also present\textcolor{blue}{ed} the histograms for amplitude distribution of the unintentional modulation for two different number of antenna element cases, $M$ is equal to $16$ and $100$. The histogram \textcolor{blue}{in Fig.\ref{fig:AmpDistM16}} almost follows the Rayleigh distribution pdf curve for $M=16$. However, for the case $M=100$, the histogram shows exactly same behavior as the curve of the pdf in Fig.\ref{fig:AmpDistM100}.

\begin{figure}[t]
	\centering
	\includegraphics[width=0.45\textwidth]{./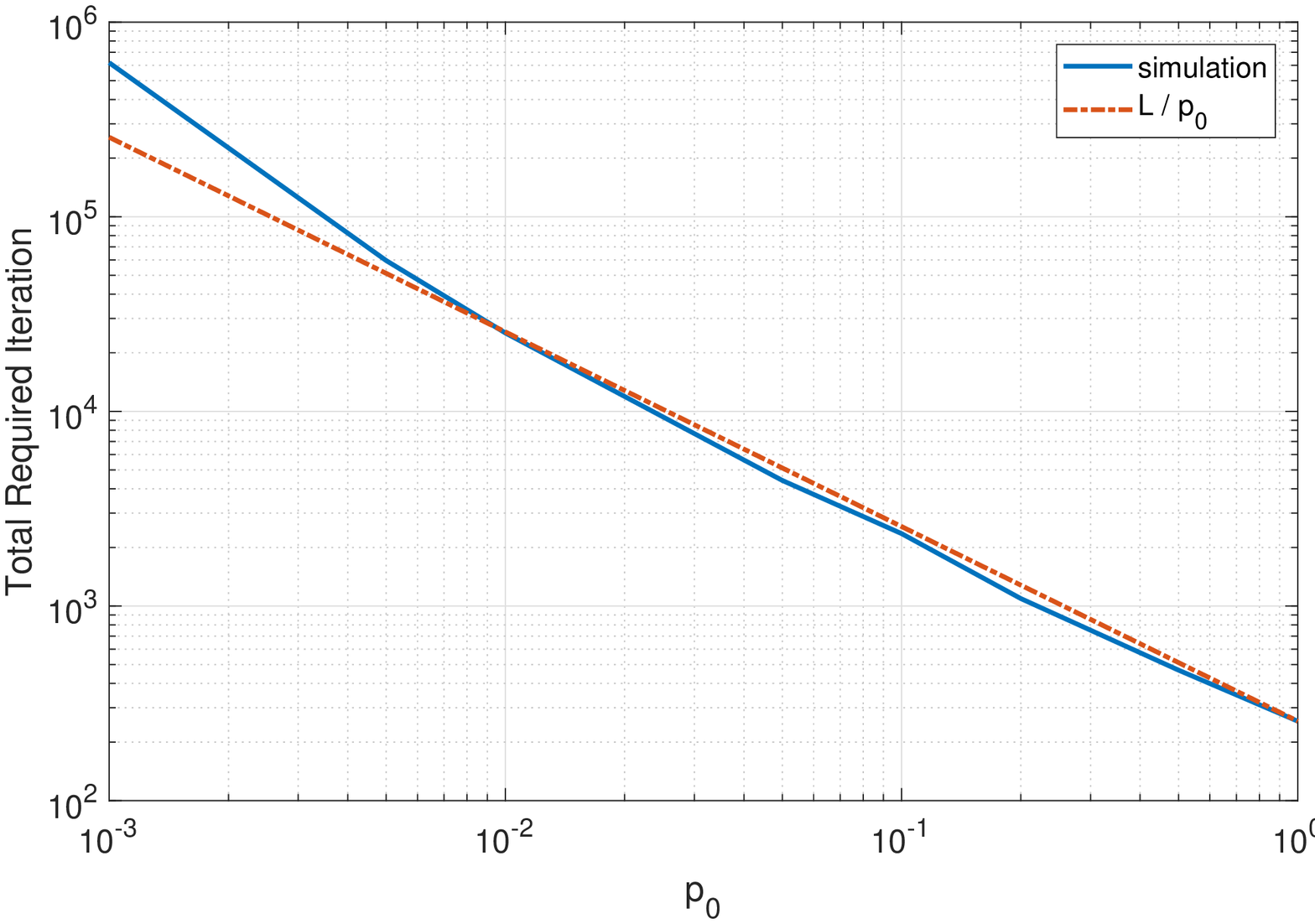}
	\caption{Total required iteration for generating MIMO waveforms with constraint $A_l\ge A_0$ where $A_0=\sqrt{-M\ln(p_0)}$ }
	\label{fig:TotalIteration}
\end{figure}

For different delta constraints, unintentional amplitude distributions are calculated and displayed in Fig.\ref{fig:resultsAmplitudeDistConstraint}. Each one of the figure include\textcolor{blue}{s} some part of the Rayleigh distribution. Null fixed waveform generation algorithm \textcolor{blue}{mainly contains} random permutations. Increase on the strictness of the constraints mean\textcolor{blue}{s} longer required waveform generation times. Let the radar waveform has $1024$ sub-pulses, $L=1024$ and $p_0$ is selected as $0.01$. Then, the required \textcolor{blue}{iterations} to reach $1024$ sub-pulses becomes approximated $L/p_0=102400$ cycles. In Fig.\ref{fig:TotalIteration}, total required random permutation iteration for null direction fixed waveform generation method is displayed when $M=16$ and $L=256$. Approximation to a $L/p_0$ curve can be easily seen from the simulation results in the \textcolor{blue}{Fig.\ref{fig:TotalIteration}.}

\begin{figure*}[htb]
	\centering
	\subfigure[]
	{
		\includegraphics[width=0.38\textwidth]{./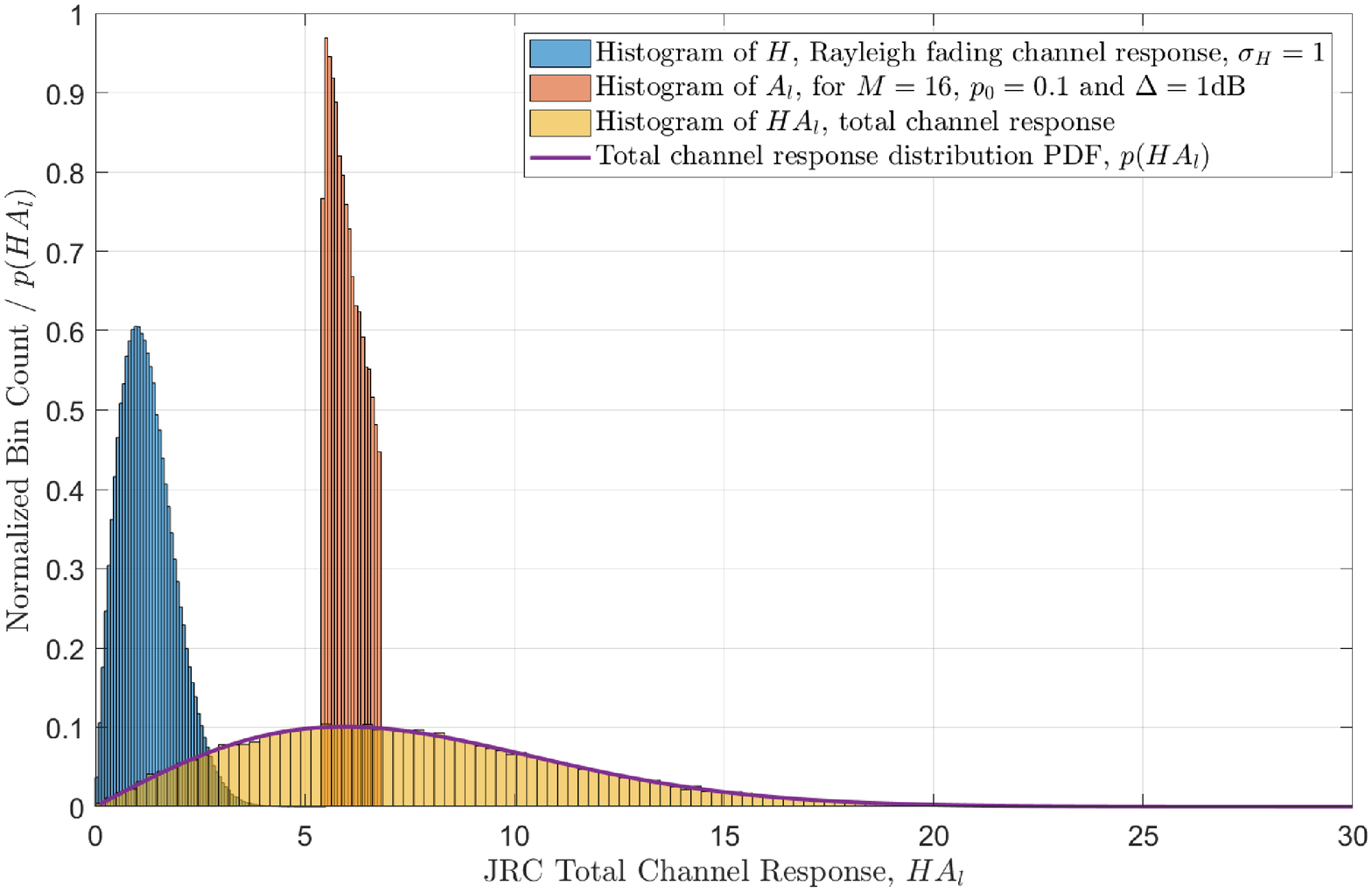}
		\label{fig:doubleRayCase3M16_1dB}
	}
	\subfigure[]
	{
		\includegraphics[width=0.38\textwidth]{./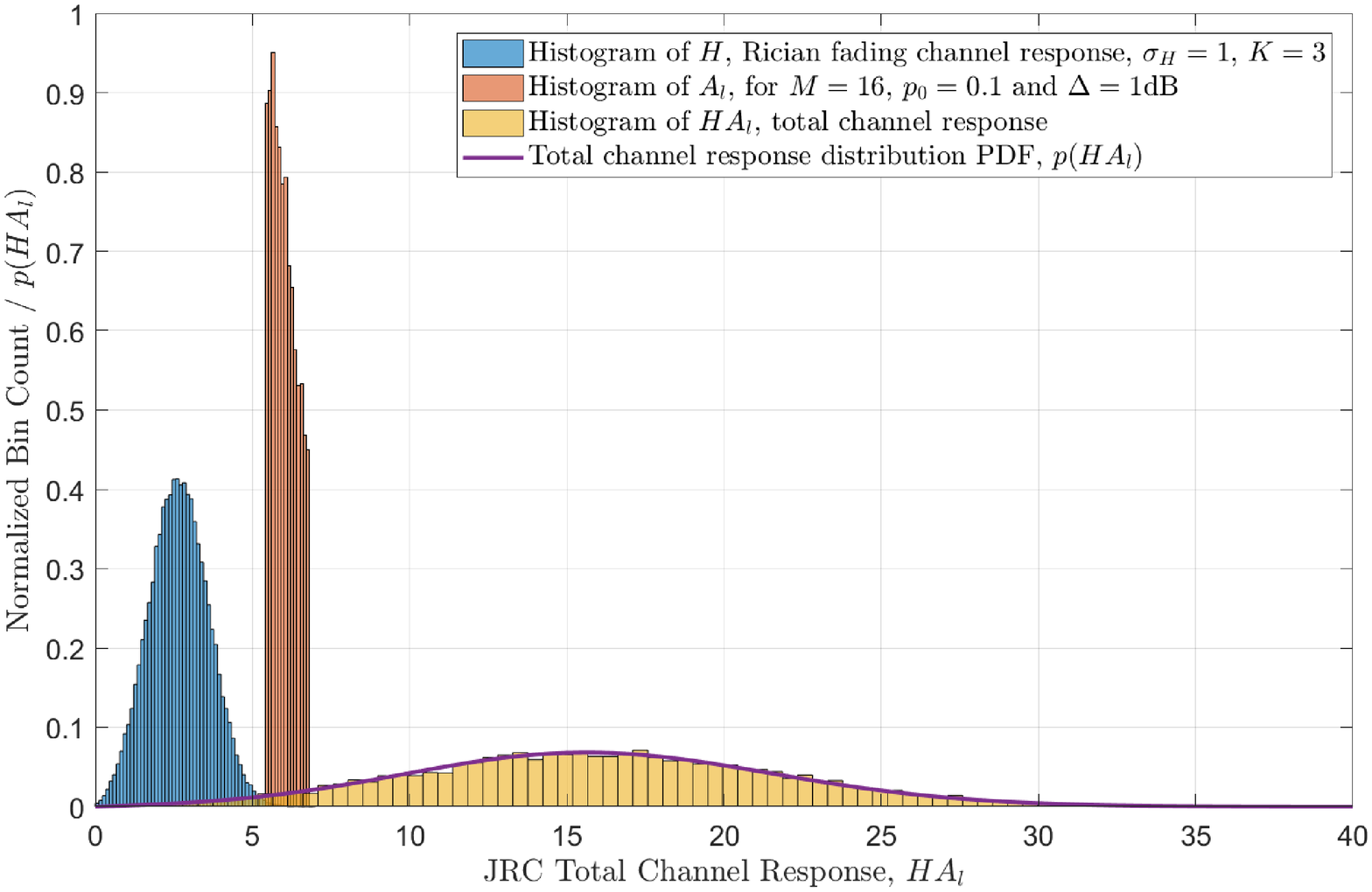}
		\label{fig:RayRiceCase3M16_1dB}
	}
	\subfigure[]
	{
		\includegraphics[width=0.38\textwidth]{./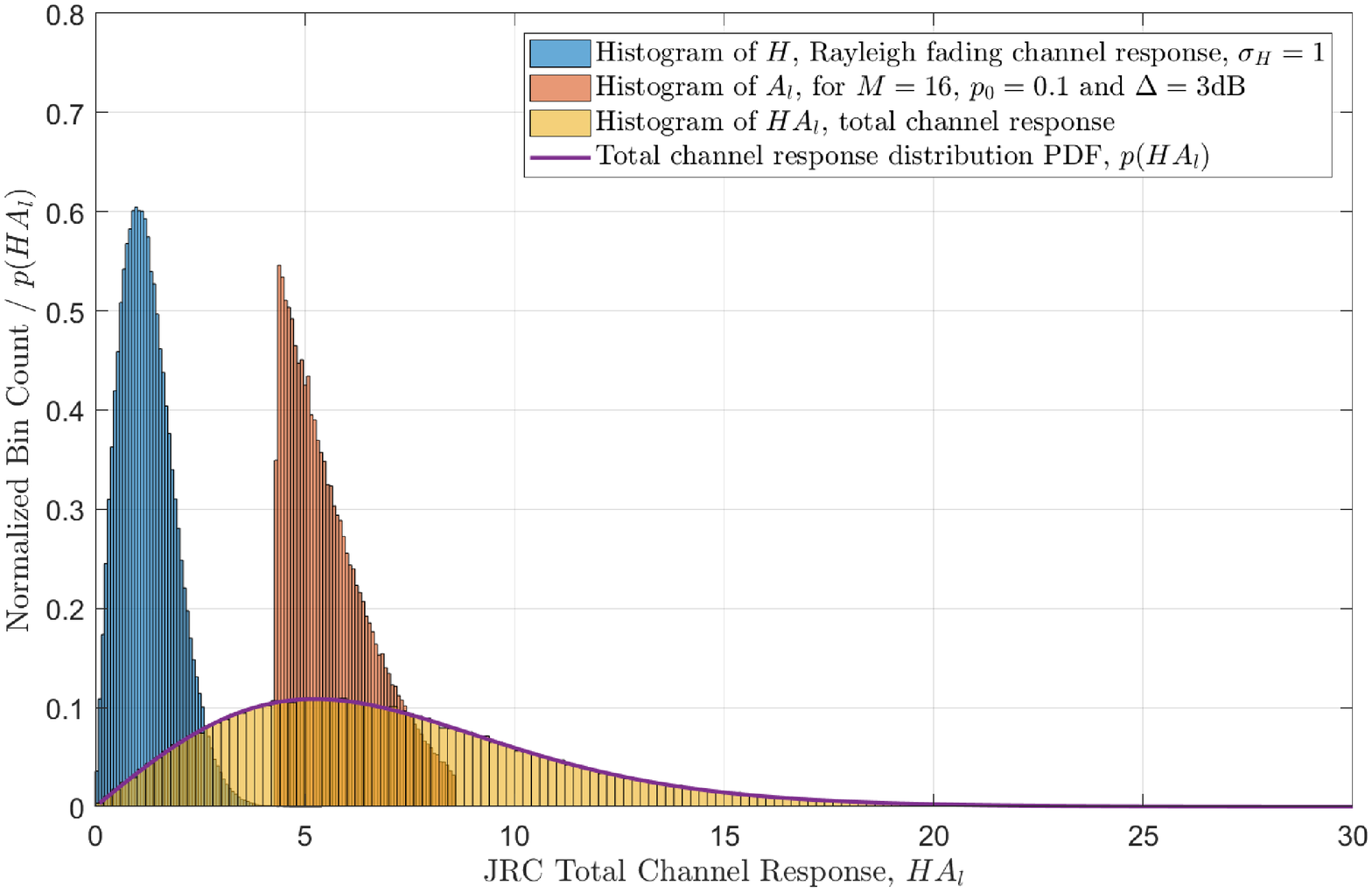}
		\label{fig:doubleRayCase3M16_3dB}
	}
	\subfigure[]
	{
		\includegraphics[width=0.38\textwidth]{./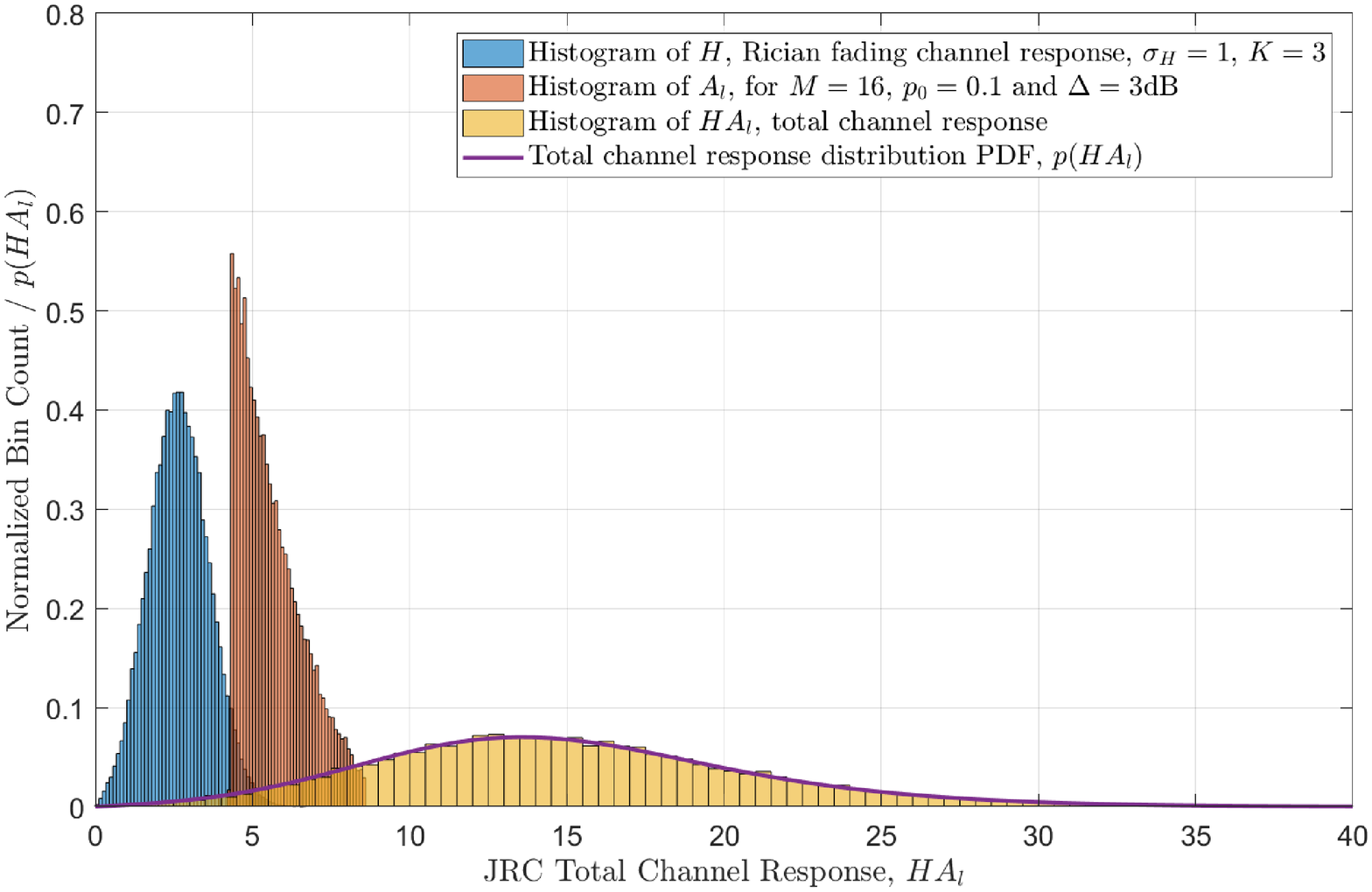}
		\label{fig:RayRiceCase3M16_3dB}
	}
	\subfigure[]
	{
		\includegraphics[width=0.38\textwidth]{./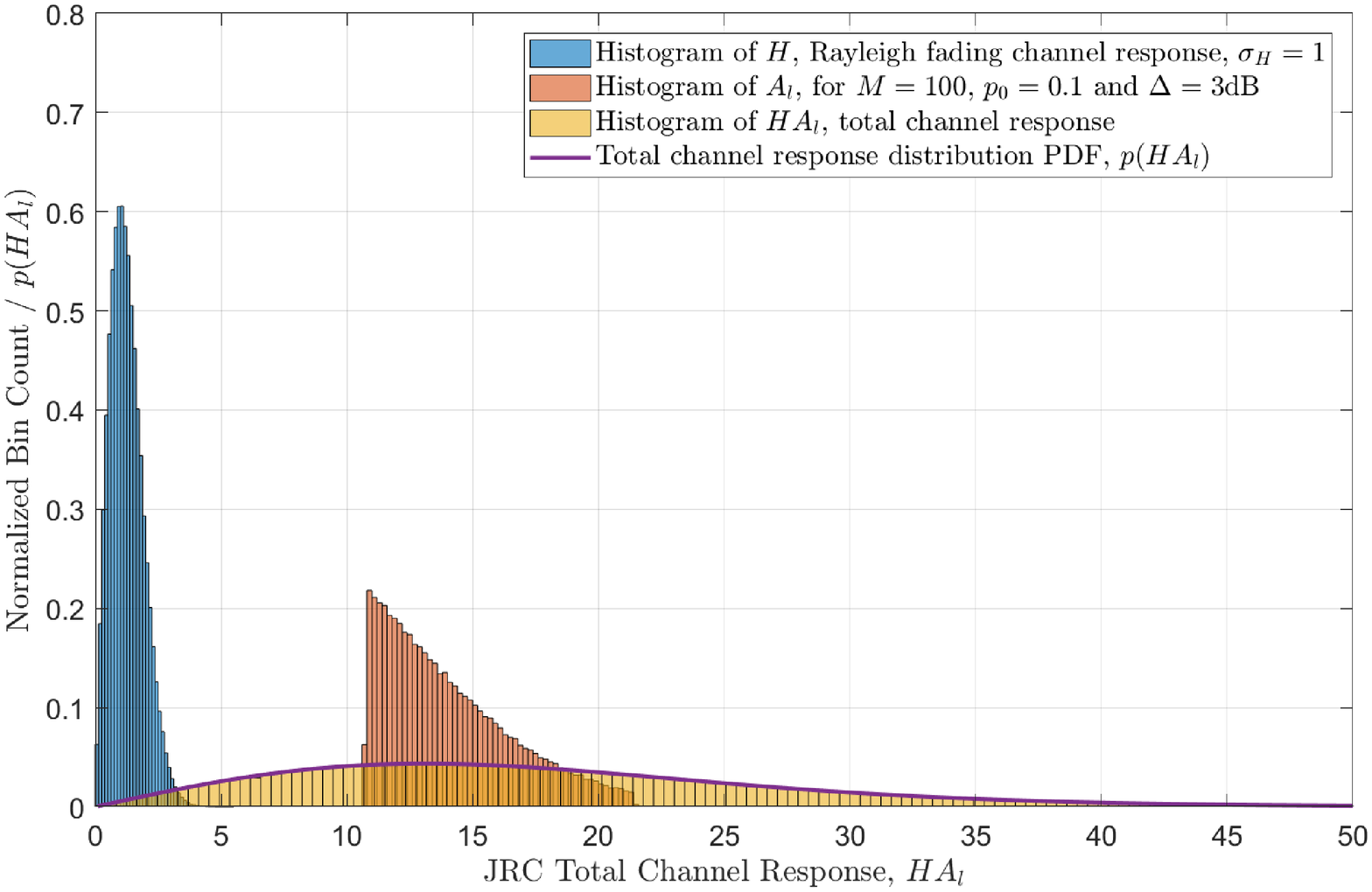}
		\label{fig:doubleRayCase3M100_3dB}
	}
	\subfigure[]
	{
		\includegraphics[width=0.38\textwidth]{./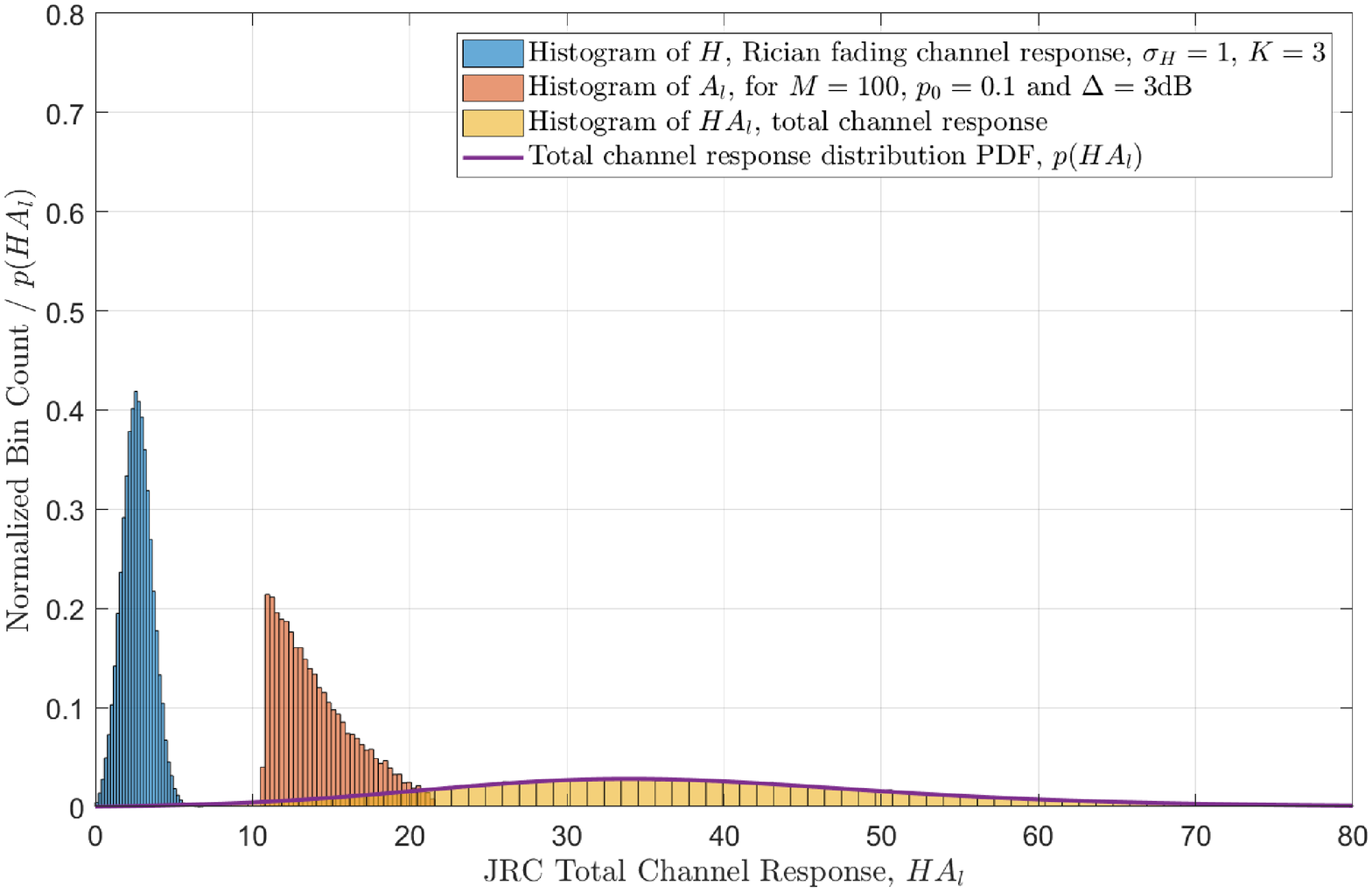}
		\label{fig:RayRiceCase3M100_3dB}
	}
	\caption{Histogram of $H$, $\sqrt{-M\ln(p_0)}/\varDelta\le A_l\le \sqrt{-M\ln(p_0)}\varDelta$ and $HA_l$, and probability density function $p_Z(z)$ in Eq.(\ref{LemmaDRayleigh}) \textcolor{blue}{when fading is Rayleigh (a,c,e) or Rician (b,d,f), for (a,b) when $M=16$, $\Delta=1dB$, (c,d) $M=16$, $\Delta=3dB$  and (e,f) $M=100$, $\Delta=3dB$ when $\sigma_H=1$, $p_0=0.1$ and $K=3$.}}
	\label{fig:doubleRayCase3}
\end{figure*}

\subsection{Amplitude Distribution under Rayleigh Fading}
Previously, we have proved that the unintentional modulation presents a Rayleigh distribution. Besides, numerical analysis in Sec.\ref{UnIntModAnalysis} is also proved that JRC systems configured with higher number of MIMO antennas shows very similar behavior with truncated Rayleigh distribution. Further, we have also investigated the channel distribution under Rayleigh \textcolor{blue}{ and Rician} fading conditions. Calculations are done for three cases.

\textcolor{blue}{While evaluating the pdf of amplitude distribution of the null direction fixed method under Rician fading channel, after $30$ summation steps, the resulting sum converges to a curve rapidly. Therefore, evaluations are terminated after $30$ summation iterations.}

\subsubsection{Case 1: With Delta constraint}
\textcolor{blue}{The} analysis is evaluated for $M=16$ and $M=100$, with the delta constraint for different values, i.e. $\sqrt{-M\ln(p_0)}/\varDelta\le A_l\le \sqrt{-M\ln(p_0)}\varDelta$ for $p_0=0.1$, $\sigma_H=1$ and $\Delta$ is $1dB$ and $3dB$. The histogram of the product distribution $HA_l$, and all the pdf curves in Fig.\ref{fig:doubleRayCase3} show exactly \textcolor{blue}{the} same behavior, \textcolor{blue}{hence} we can make sure that the Eq.(\ref{LemmaDRayleigh}) \textcolor{blue}{ and Eq.(\ref{LemmaDRayleigh2})} totally represent the \textcolor{blue}{corresponding} product distribution. For Rayleigh fading conditions, selecting smaller delta values results the product distribution more Rayleigh like shape, only with larger mean. \textcolor{blue}{ This feature is also repeated for Rician case and resulting distribution follows Rician like behavior.}

\subsubsection{Case 2: With Single Side constraint}
\begin{figure*}[htb]
	\centering
	\subfigure[]
	{
		\includegraphics[width=0.38\textwidth]{./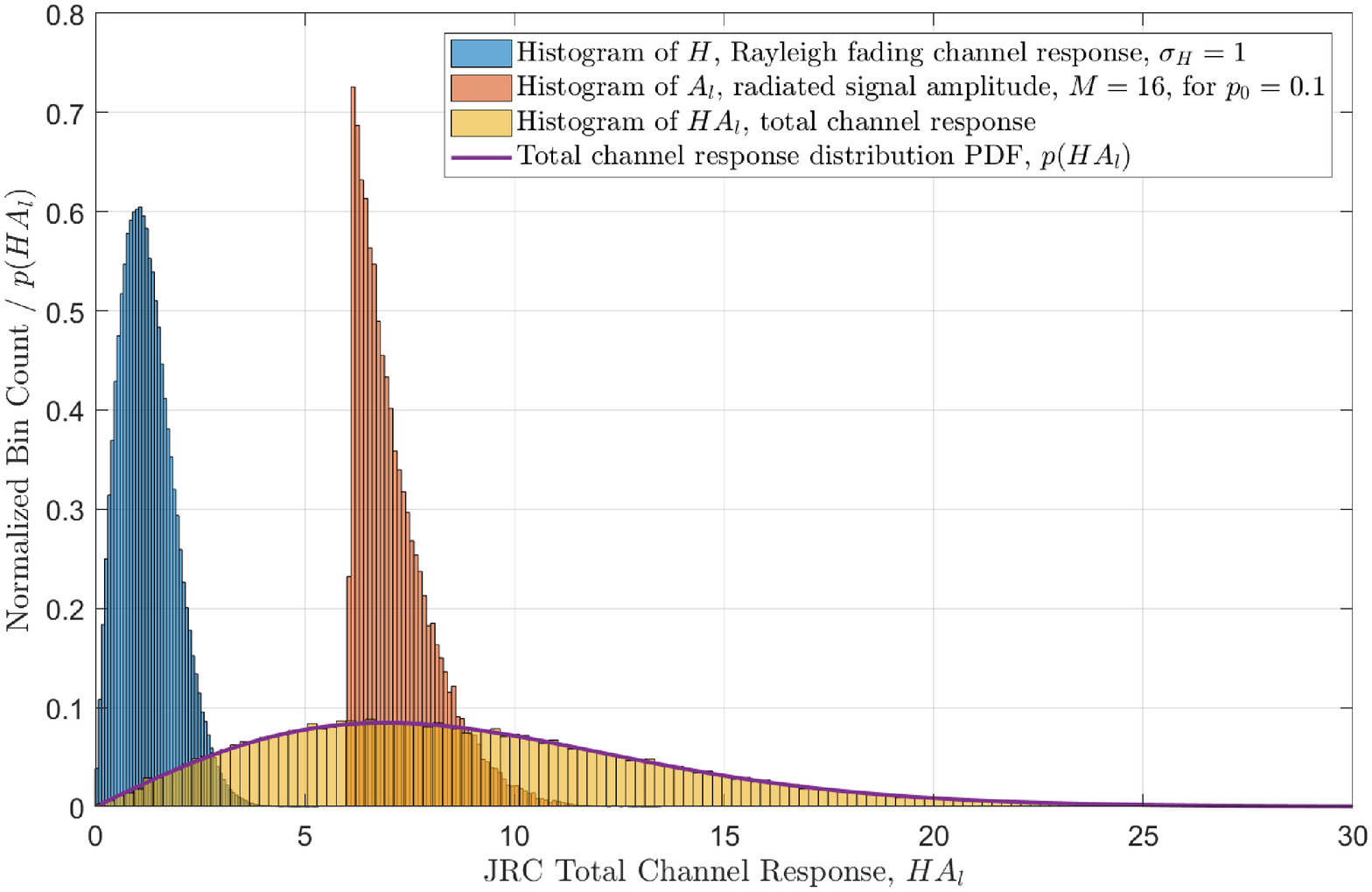}
		\label{fig:doubleRayCase2M16}
	}
	\subfigure[]
	{
		\includegraphics[width=0.38\textwidth]{./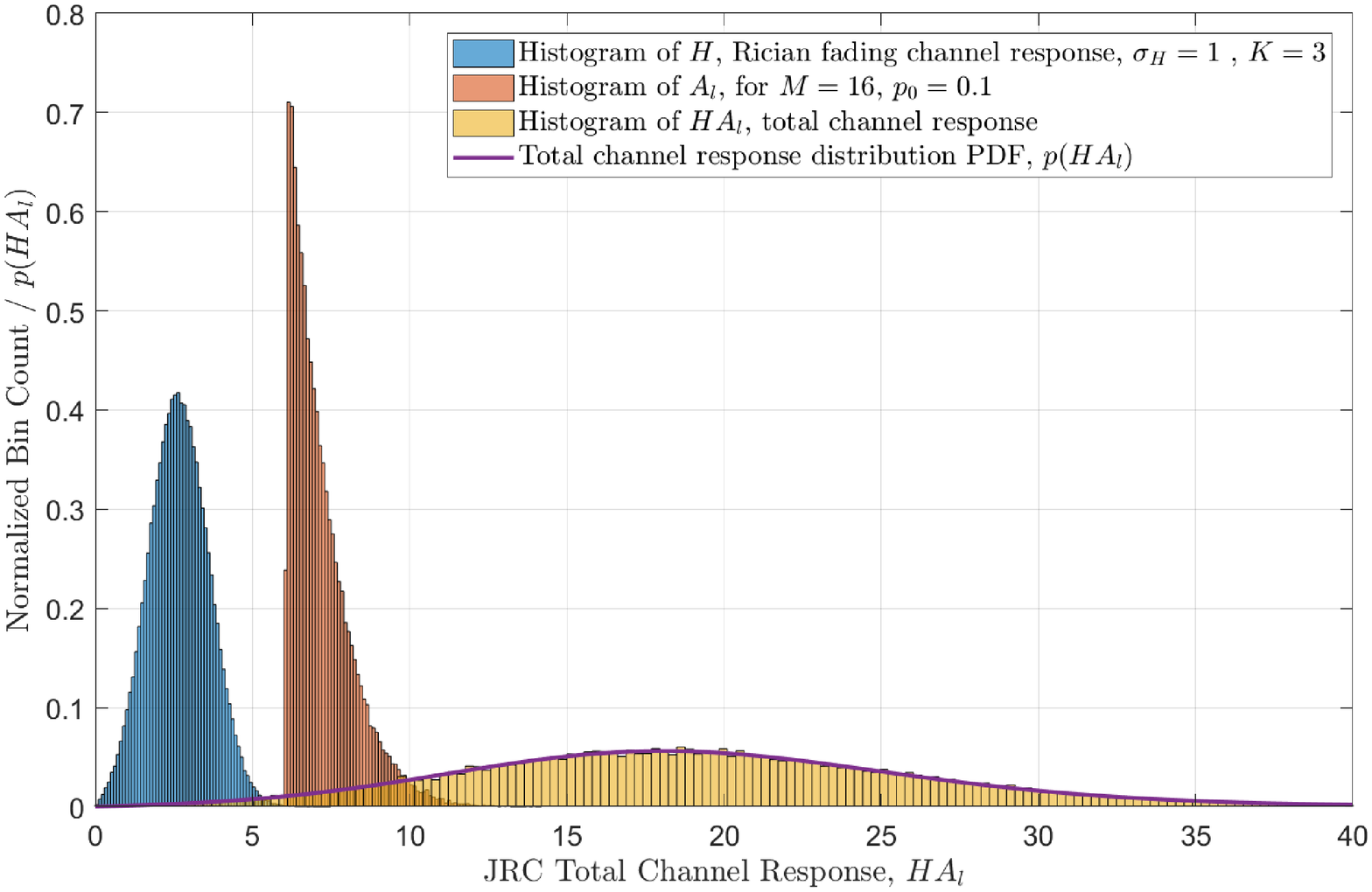}
		\label{fig:RayRiceCase2M16}
	}
	\subfigure[]
	{
		\includegraphics[width=0.38\textwidth]{./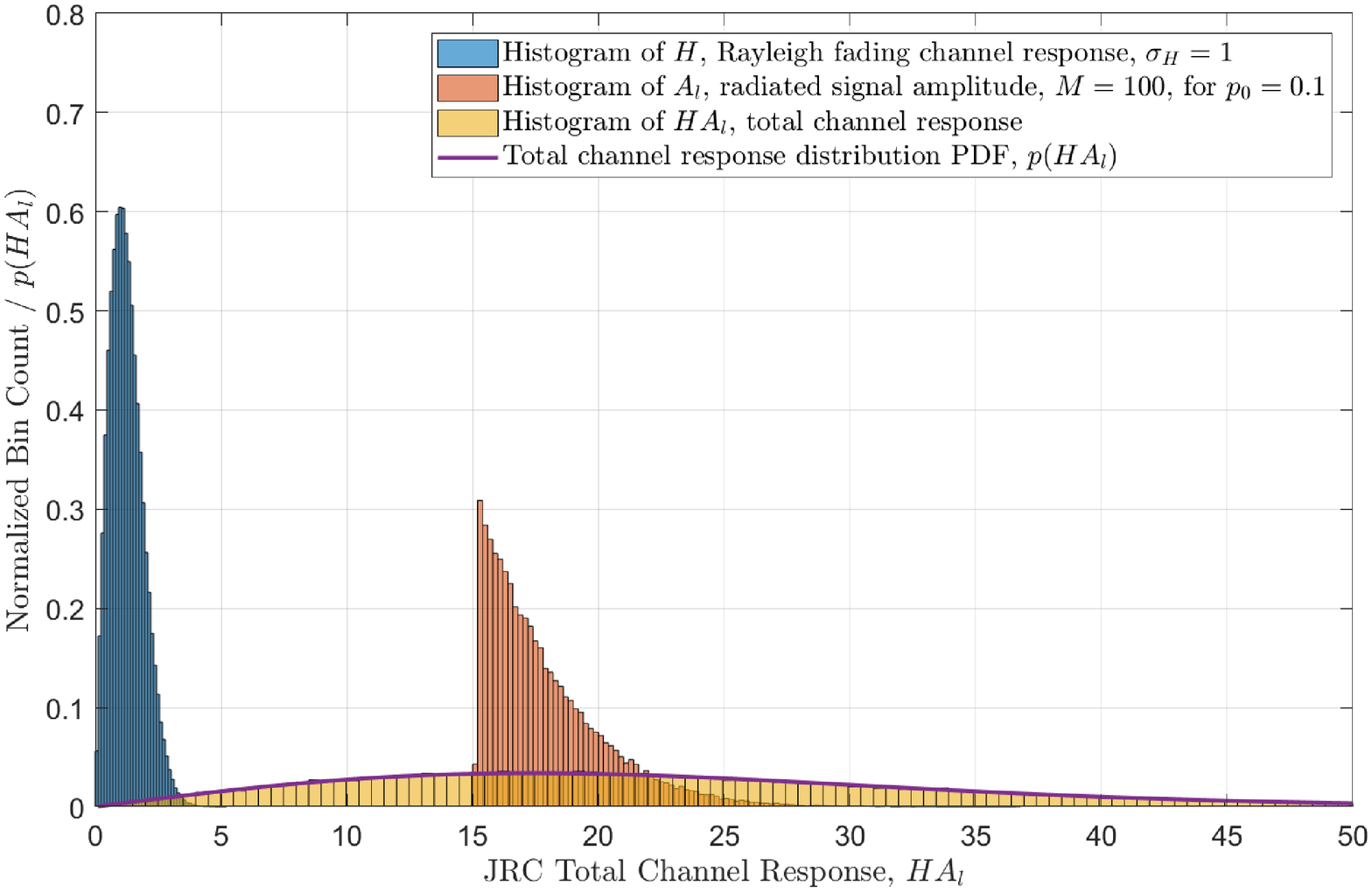}
		\label{fig:doubleRayCase2M100}
	}
	\subfigure[]
	{
		\includegraphics[width=0.38\textwidth]{./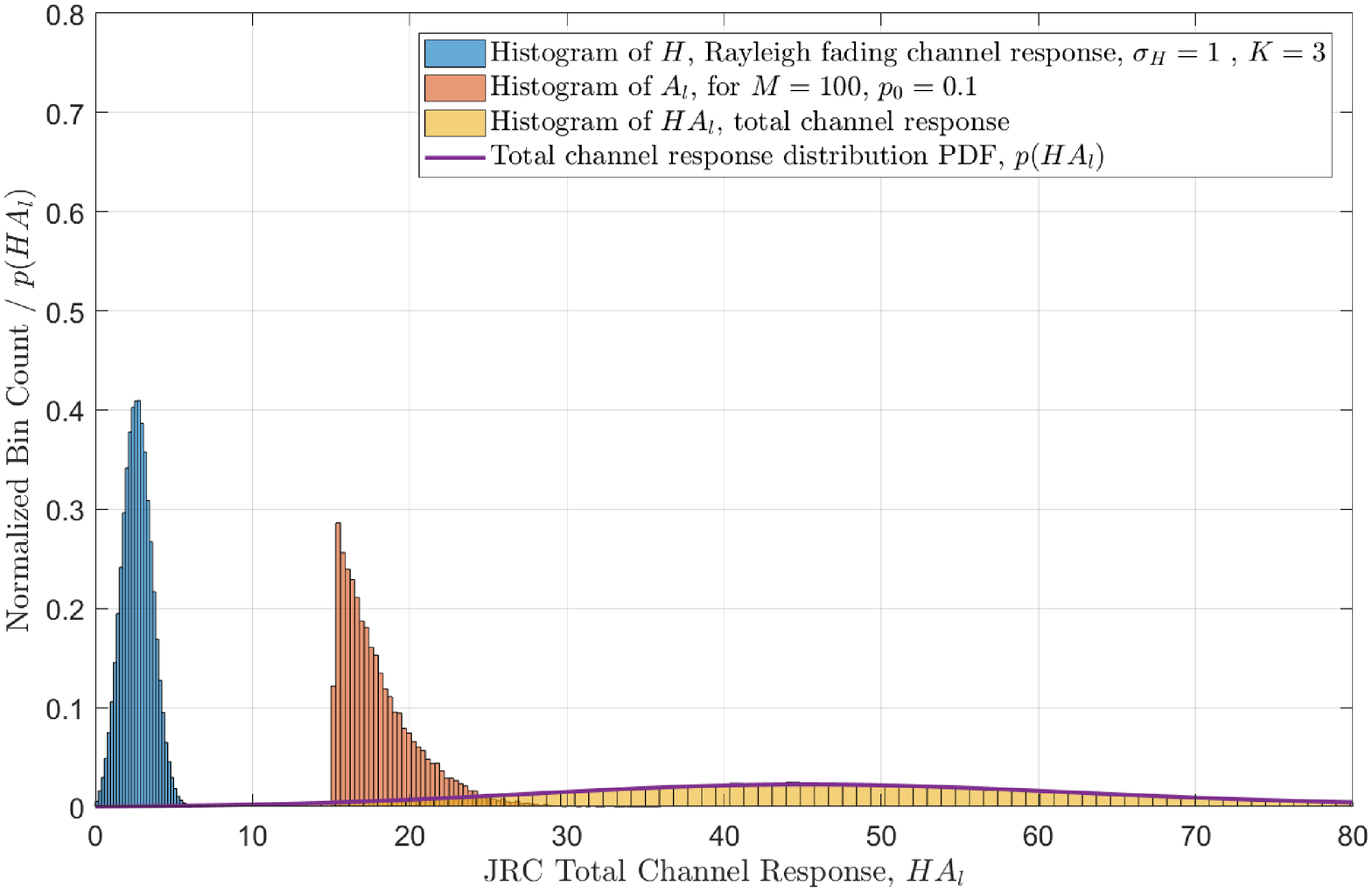}
		\label{fig:RayRiceCase2M100}
	}
	\caption{Histogram of $H$, $A_l\ge \sqrt{-M\ln(p_0)}$ and $HA_l$, and probability density function $p_Z(z)$ in Eq.(\ref{singleRaypdf}) \textcolor{blue}{when fading is Rayleigh (a,c) or Rician (b,d), for (a,b) when $M=16$ and (c,d) $M=100$ for $\sigma_H=1$, $p_0=0.1$ and $K=3$.}}
	\label{fig:doubleRayCase2}
\end{figure*}
The analysis is evaluated for $M=16$ and $M=100$, with the minimum constraint as $A_0=\sqrt{-M\ln(p_0)}$ when $p_0=0.1$. Then, the variance of the both real and the imaginary part of the complex channel gain for Rayleigh \textcolor{blue}{and Rician} channel, $\sigma_H$, is selected as $1$. \textcolor{blue}{For Rician cases, shape parameter $K$ is selected as $3$.} The histogram of the product distribution $HA_l$, and the pdf curves in Fig.\ref{fig:doubleRayCase2} show exactly same behavior, \textcolor{blue}{hence} we can make sure that the Eq.(\ref{singleRaypdf}) \textcolor{blue}{and Eq.(\ref{singleRaypdf2})} totally represent the product distribution. \textcolor{blue}{Also} Eq.(\ref{singleRaypdf}) \textcolor{blue}{ and Eq.(\ref{singleRaypdf2})} require a numerical calculation of generalized incomplete gamma function, $\Gamma(a,x;b)$. In \ref{NumericGenIncGamFunc}, a method for the numerical calculation which contains infinite summation is given. For our case, $a=0$ in $\Gamma(a,x;b)$, after $20$ step summation, the resulting sum converges to a curve rapidly. Determining a lower constraint results bigger mean value for the product distribution comparing with no constraint case \textcolor{blue}{in Fig.\ref{fig:doubleRayCase2}}.

\subsubsection{Case 3: Without any constraint}
\begin{figure*}[htb]
	\centering
	\subfigure[]
	{
		\includegraphics[width=0.38\textwidth]{./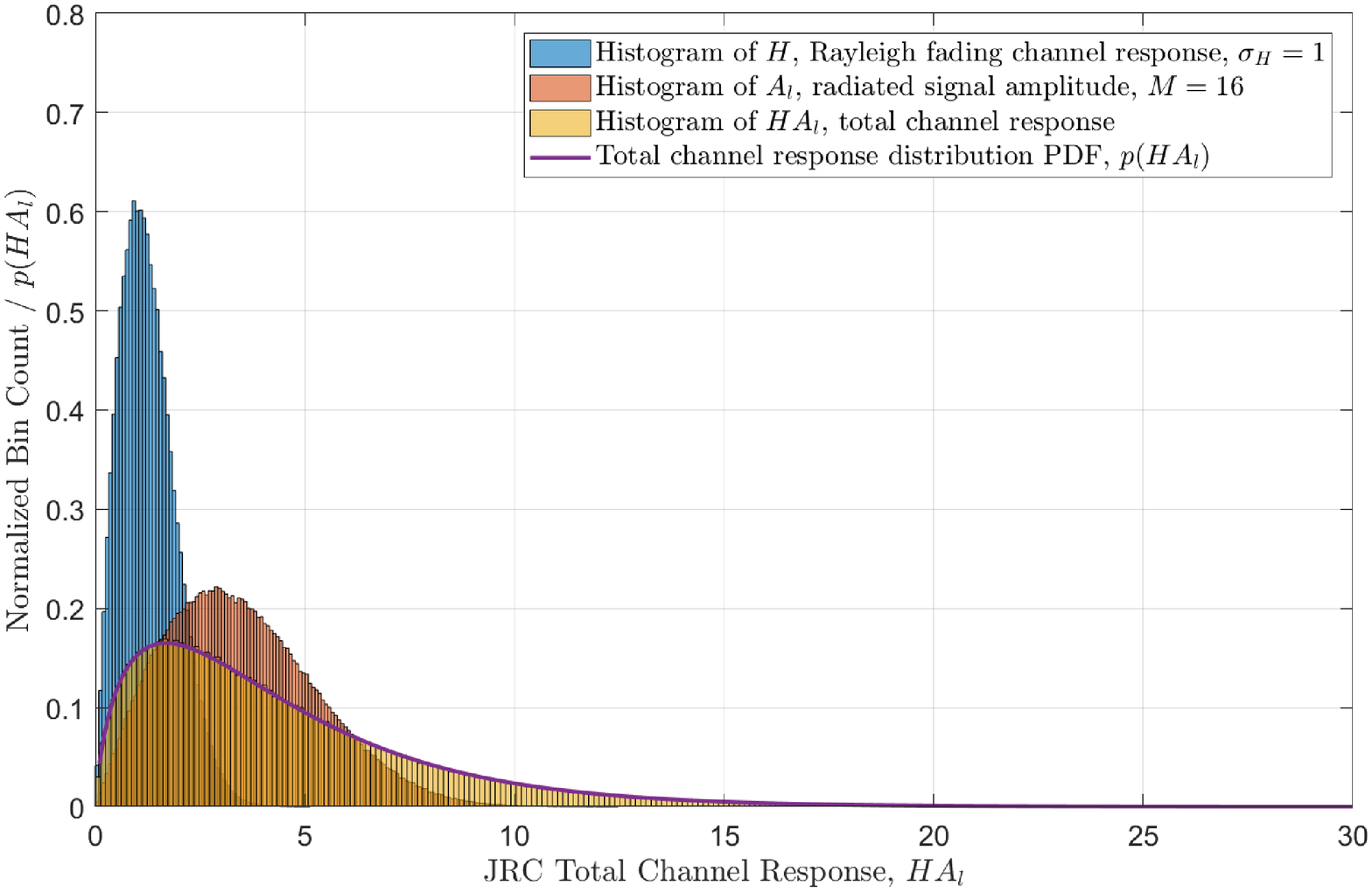}
		\label{fig:doubleRayCase1M16}
	}
	\subfigure[]
	{
		\includegraphics[width=0.38\textwidth]{./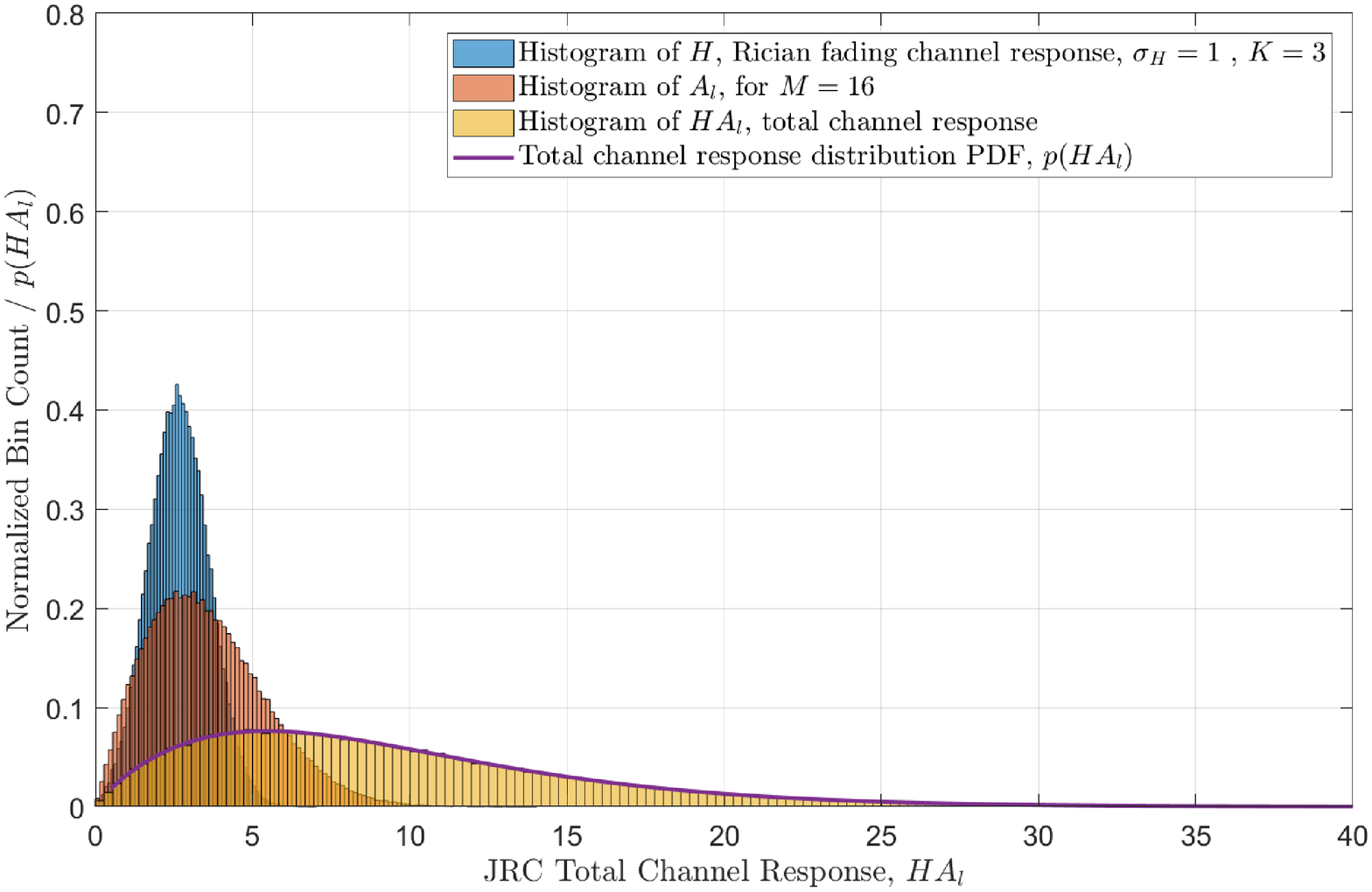}
		\label{fig:RayRiceCase1M16}
	}
	\subfigure[]
	{
		\includegraphics[width=0.38\textwidth]{./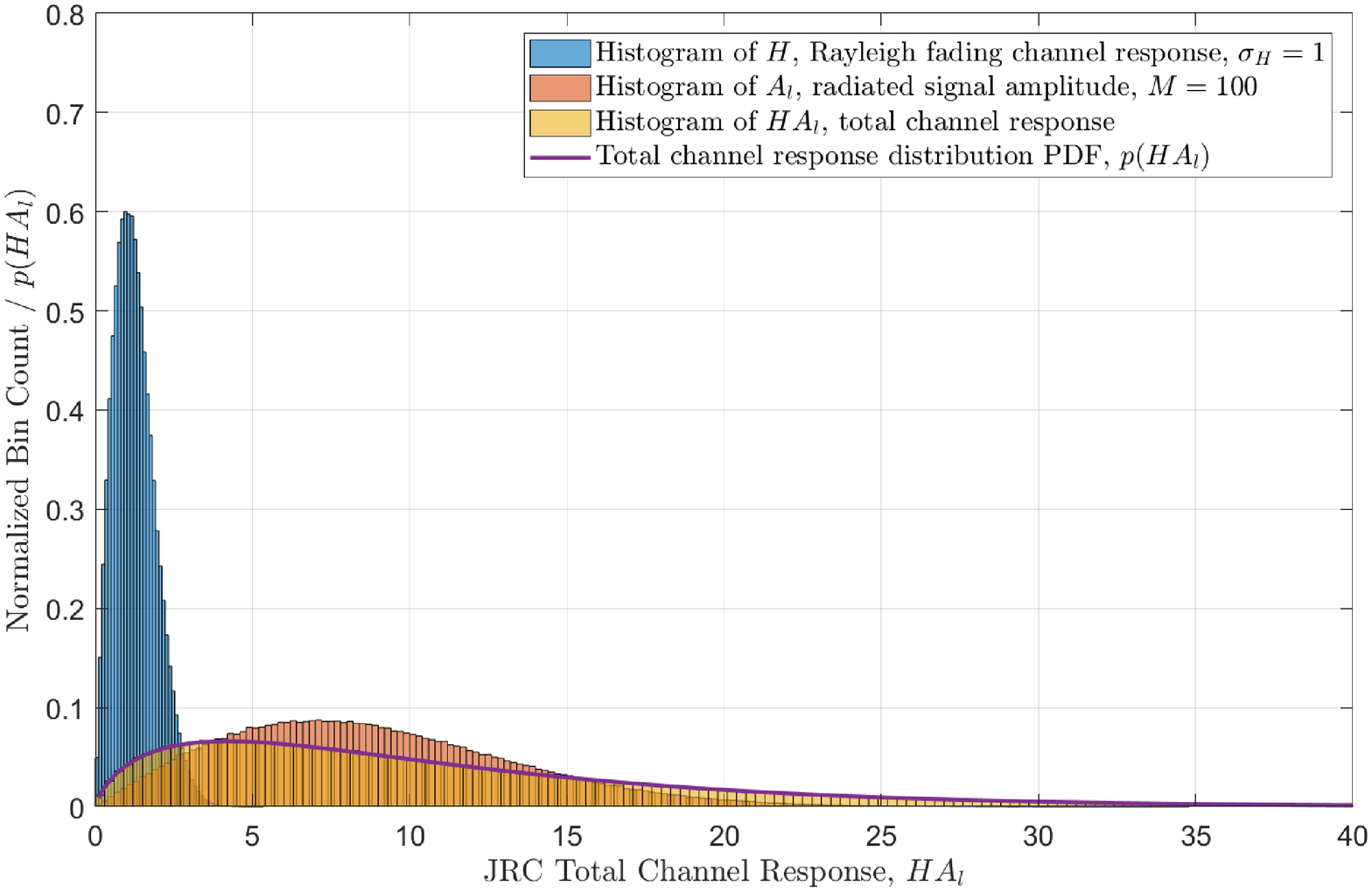}
		\label{fig:doubleRayCase1M100}
	}
	\subfigure[]
	{
		\includegraphics[width=0.38\textwidth]{./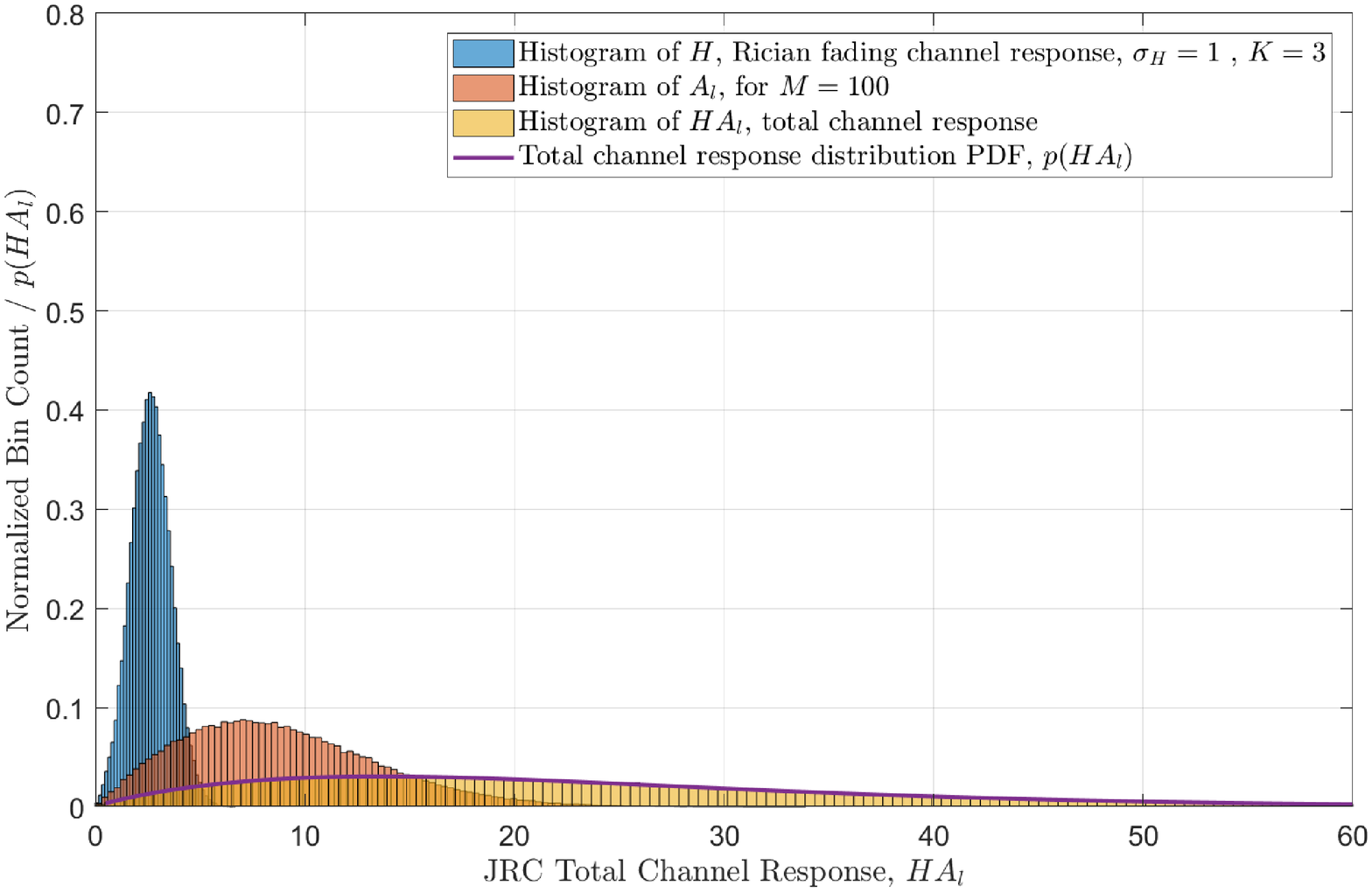}
		\label{fig:RayRiceCase1M100}
	}
	\caption{Histogram of $H$, $A_l$ and $HA_l$, and probability density function $p_Z(z)$ in Eq.(\ref{pdfDoubleRay}) \textcolor{blue}{when fading is Rayleigh (a,c) or Rician (b,d), for (a,b) when $M=16$ and (c,d) $M=100$ for $\sigma_H=1$ and $K=3$.}}
	\label{fig:doubleRayCase1}
\end{figure*}
The problem becomes easier without any constraint cases. \textcolor{blue}{For Rayleigh fading cases,} the distribution can be calculated as cascaded double Rayleigh channel. \textcolor{blue}{Similary, for Rician fading cases,} the distribution can be calculated as cascaded \textcolor{blue}{ Rician and} Rayleigh channel. The analysis is evaluated for $M=16$ and $M=100$. Then, mean channel gain for Rayleigh channel is selected as $\sigma_H\sqrt{\pi/2}$ which is $\sqrt{\pi/2}$ when $\sigma_H$ is $1$, where $\sigma_H$ is the variance of the both real and the imaginary part of the complex channel gain. The histogram of the product distribution $HA_l$, and the pdf curves in Fig.\ref{fig:doubleRayCase1} show exactly same behavior, \textcolor{blue}{hence} we can make sure that the Eq.(\ref{pdfDoubleRay}) totally represents the product distribution. Also, the product distribution for higher $M$ value displays more dispersive distribution as we have expected, hence it has bigger variance. 

\subsection{Coherent MIMO JRC Capacity Analysis} 
After determining the unintentional modulation distribution caused by the radiated signal towards communication direction, capacity formulas \textcolor{blue}{are} evaluated for three different cases.  

\subsubsection{Case 1: AWGN Channel} 
\begin{figure}[htb]
	\centering
	\includegraphics[width=0.45\textwidth]{./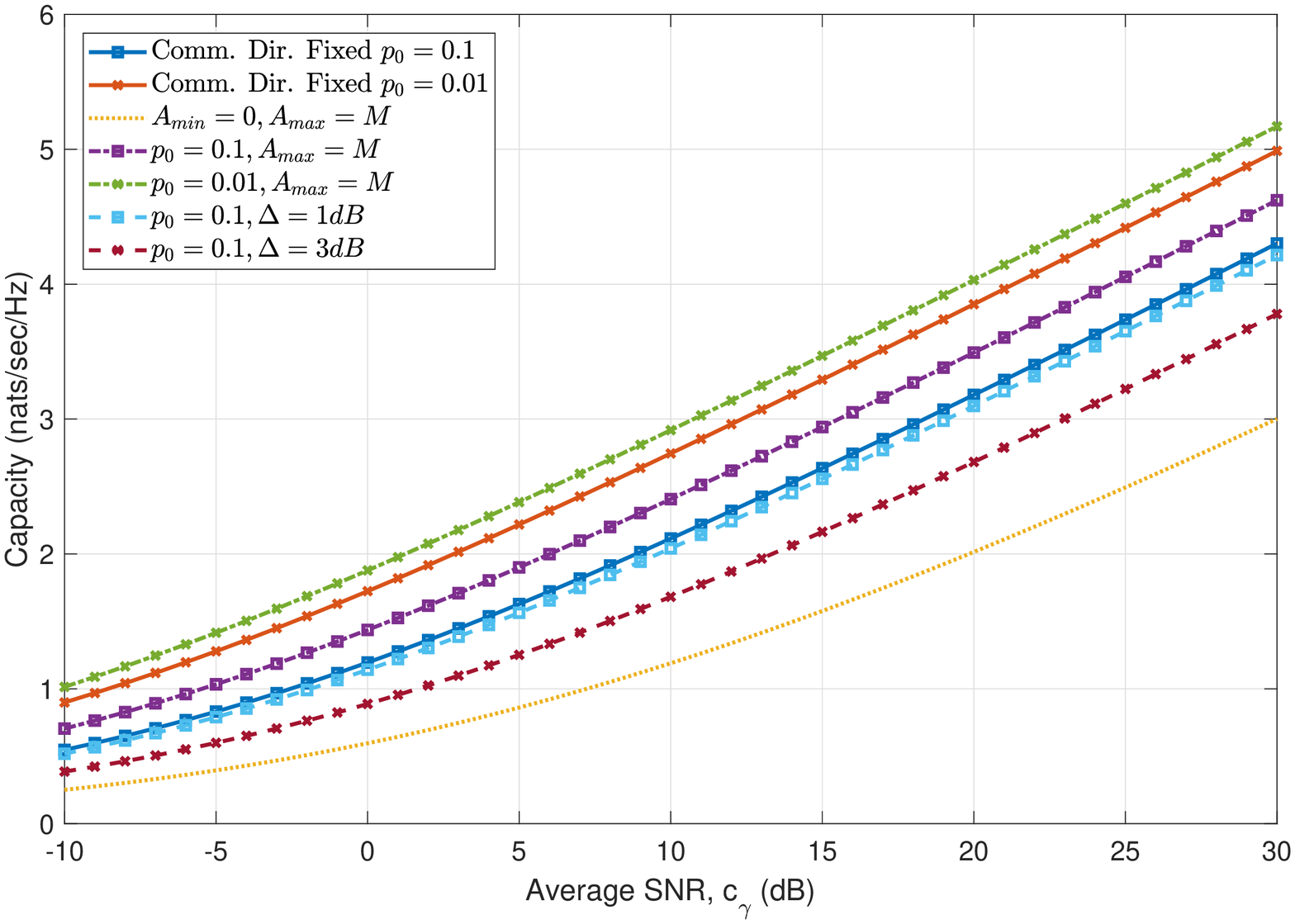}
	\caption{Capacity for JRC Capable Coherent MIMO Radar}
	\label{fig:Capacity}
\end{figure}
Calculations are done for $M=16$ using the Eq.(\ref{CsnrNull2}). Capacity of the communication direction fixed formula is evaluated when the radiated signal is constant at the level of $A_0=\sqrt{-16\ln(0.1)}=6.07$. Results are displayed in Fig.\ref{fig:Capacity}. As we have expected, applying no constraint, $A_{1}=0$ and $A_{2}=M$, presents the worst capacity, which is also the ergodic channel capacity of the Rayleigh fading channel with known CSI, $C_{RCSI}$. Between these capacity curves in Fig.\ref{fig:Capacity}, the techniques with a defined constraint with $\varDelta$ values, $6.07\varDelta\le A_l\le 6.07\varDelta$, are located. The capacity curves with higher $\varDelta$ values are getting closer to the no constraint case. More to that, increasing $A_0$ value with changing the probability $p_0$ always presents higher capacity due to its higher average power. On the other hand, smaller $p_0$ means higher search iterations, so higher search times \textcolor{blue}{during waveform generation process}. 

The capacity expressions in (\ref{CsnrNullspecialprop1}), (\ref{CsnrNullspecialprop2P1}) and (\ref{CsnrNullspecialprop2}) are calculated for three different $M$ values, as $8$, $16$ and $32$ when $SNR$ \textcolor{blue}{is} $10dB$. Then, the differences between approximated expressions and general capacity formula are computed in Table.(\ref{tab:differences}). If $M$ is greater than $8$, the difference converges to significantly smaller values. \textcolor{blue}{Therefore, selecting $M$ as $16$ is enough to show the channel capacity behaviors.}
\begin{table}[t]
	\caption{The differences between capacity expressions and the approximated formulas.}
	\label{tab:differences}
	\begin{center}
		\resizebox{.8\columnwidth}{!}{
			\begin{tabular}{c|c c c}
				\emph{Difference} & $M=8$ & $M=16$ & $M=32$ \\ \hline \\
				$|C_{RCSI}-C_{rp}(0,M)|$ & $8.7\times10^{-4}$ & $5.6\times10^{-7}$ & $2.1\times10^{-14}$ \\ \hline \\
				$|\bar{C}_{rp}(A_0,M)-C_{rp}(A_0,M)|$ & $6.2\times10^{-3}$ & $1.3\times10^{-6}$ & $3.2\times10^{-13}$ \\ \hline
			\end{tabular}
		}
	\end{center}
\end{table}
\subsubsection{Case 2: Slow Fading Rayleigh Channel} 
Before doing \textcolor{blue}{the corresponding} calculations, the major design parameter $p_{out}$ needs to be determined to reach outage capacity results. $p_{out}$ must be selected as \textcolor{blue}{a number which will maximize} the average rate $R_{out}$ in Eq.(\ref{OutRateSlow1}). For each communication method, we calculated the curves $R_{out}$ with respect to $p_{out}$ values. \textcolor{blue}{Capacity curves for Rayleigh fading is displayed} in Fig.\ref{fig:OutrateSlowFading} \textcolor{blue}{and for Rician fading in Fig.\ref{fig:OutrateSlowFadingRician}}. Outage probabilities for maximum rate values are displayed for each curves. These labeled $p_{out}$ values are used for the capacity with outage analysis \textcolor{blue}{are calculated} for \textcolor{blue}{the} slow fading case.
\begin{figure*}[htb]
	\centering
	\subfigure[]
	{
		\includegraphics[width=0.45\textwidth]{./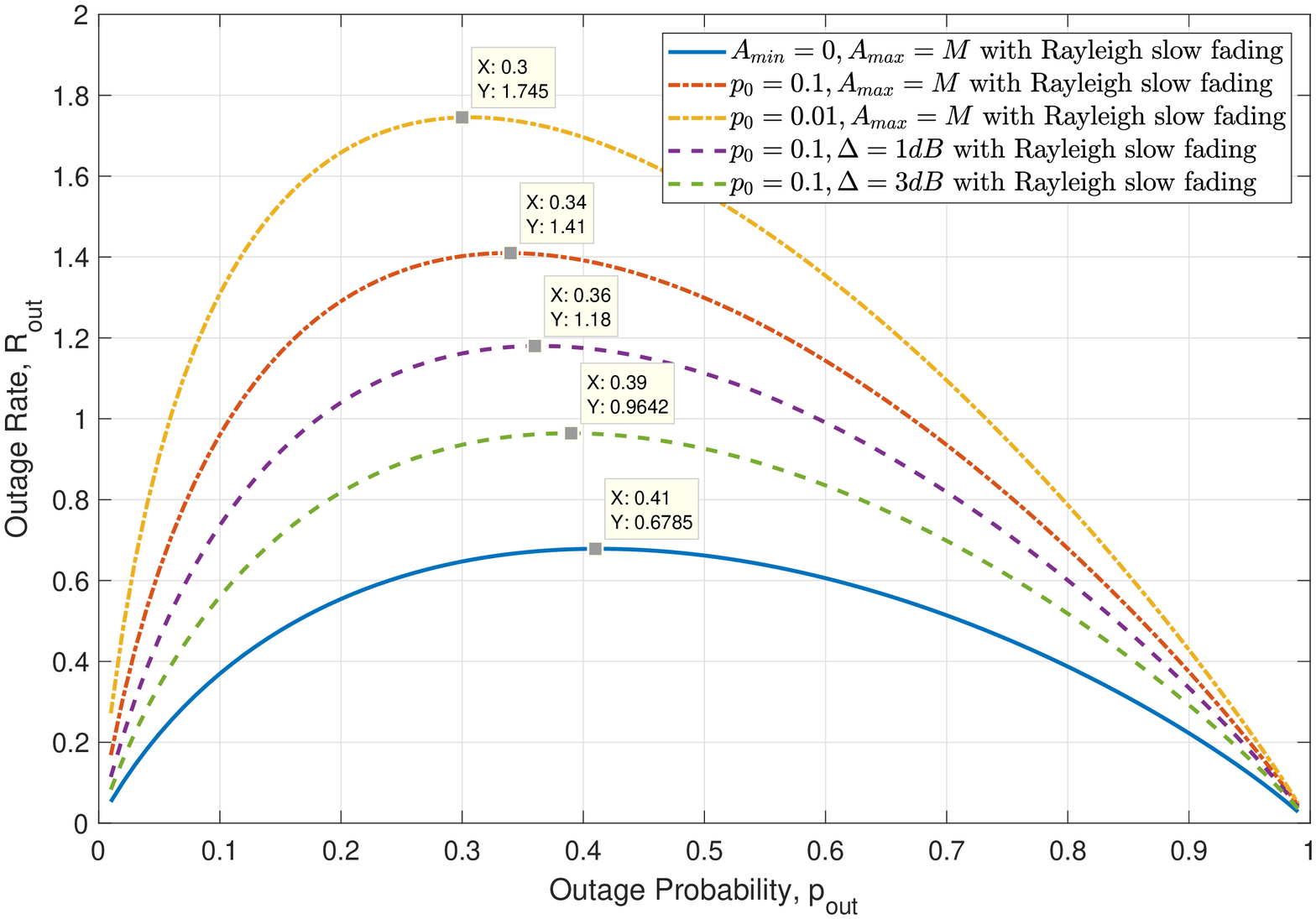}
		\label{fig:OutrateSlowFading}
	}
	\subfigure[]
	{
		\includegraphics[width=0.45\textwidth]{./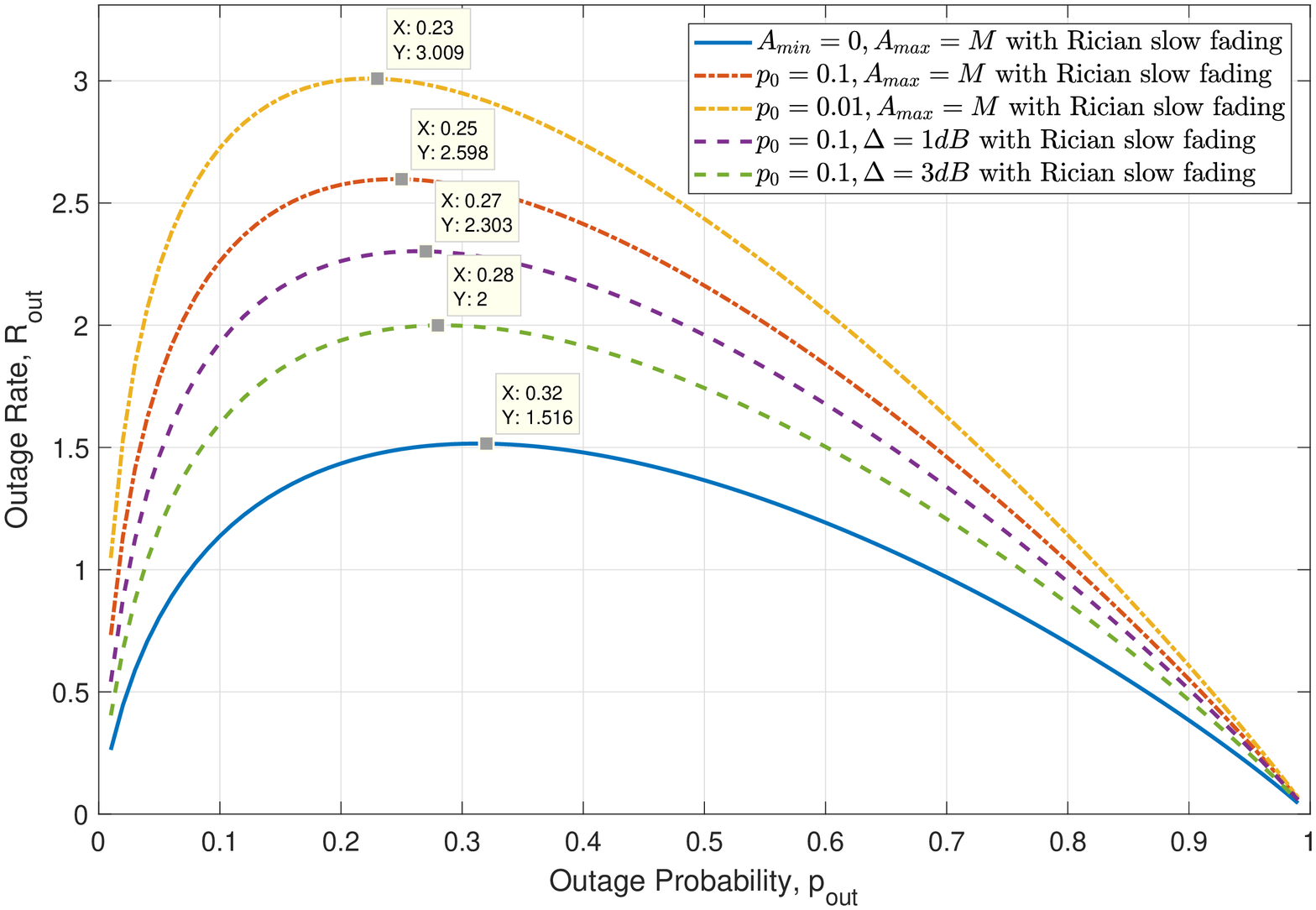}
		\label{fig:OutrateSlowFadingRician}
	}
	\caption{Average Rate with outage vs outage probability for JRC Capable Coherent MIMO Radar under \textcolor{blue}{Rayleigh (a) and Rician (b)} slow fading conditions}
	\label{fig:ALLOutrateSlowFading}
\end{figure*}

Capacity calculations are done for $M=16$, $\sigma_H^2=0.5$ \textcolor{blue}{and $K=3$} with using the Eq.(\ref{CsnrNull2DF}). Results are displayed in Fig.\ref{fig:ALLOutrateSlowFading}. Capacity of the communication direction fixed formula is evaluated when the radiated signal is constant at the level of $A_0=\sqrt{-16\ln(0.1)}=6.07$. Since, we \textcolor{blue}{have} consider\textcolor{blue}{ed} the system in outage when symbols are received below the minimum SNR limit over a long period of time. \textcolor{blue}{More to that, the} transmitter does not know instantaneous CSI. \textcolor{blue}{Then, the capacity with outage curves must display lower rate values.} Capacity curves in Fig.\ref{fig:ALLOutrateSlowFading} shows the same behavior with AWGN case, however \textcolor{blue}{Rayleigh fading case} presents lower capacity. \textcolor{blue}{On the other hand, Rician fading case displays more capacity due to its average power is derived by the shape parameter $K$.}

\begin{figure*}[htb]
	\centering
	\subfigure[]
	{
		\includegraphics[width=0.45\textwidth]{./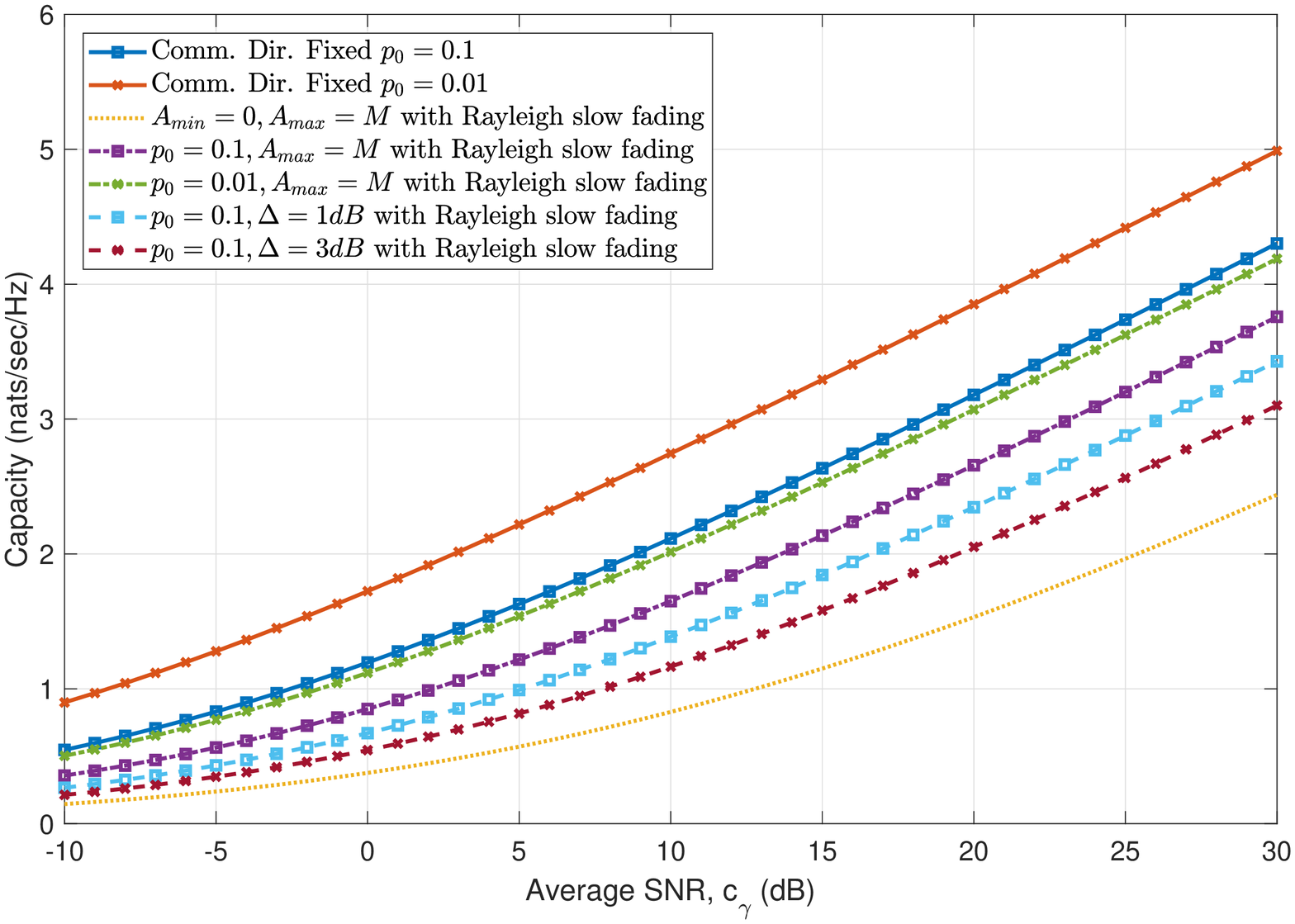}
		\label{fig:CapacitySlowFading}
	}
	\subfigure[]
	{
		\includegraphics[width=0.45\textwidth]{./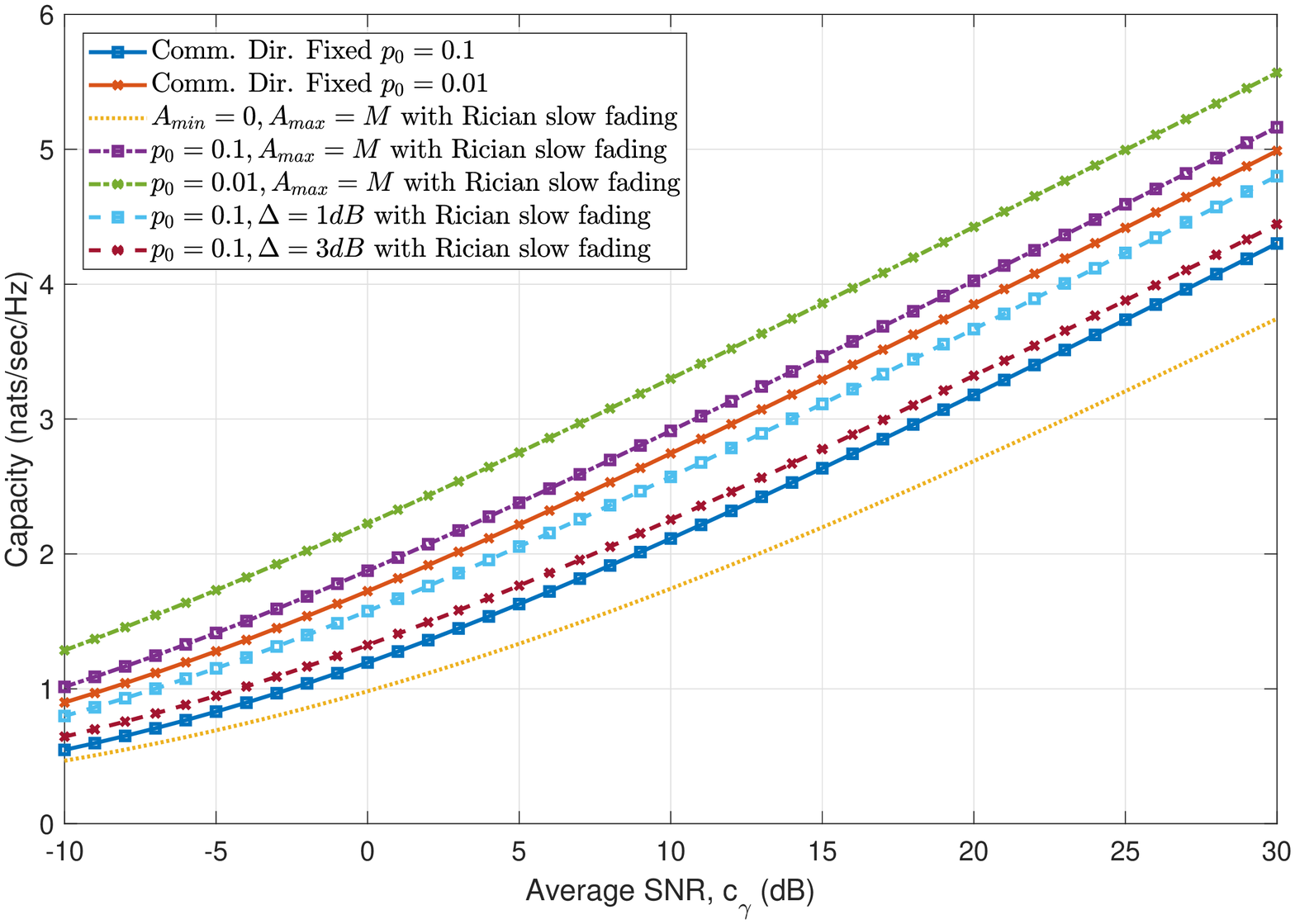}
		\label{fig:CapacitySlowFadingRician}
	}
	\caption{Capacity for JRC Capable Coherent MIMO Radar under \textcolor{blue}{Rayleigh (a) and Rician (b)} slow fading conditions.}
	\label{fig:ALLCapacitySlowFading}
\end{figure*}
\begin{figure*}[htb]
	\centering
	\subfigure[]
	{
		\includegraphics[width=0.45\textwidth]{./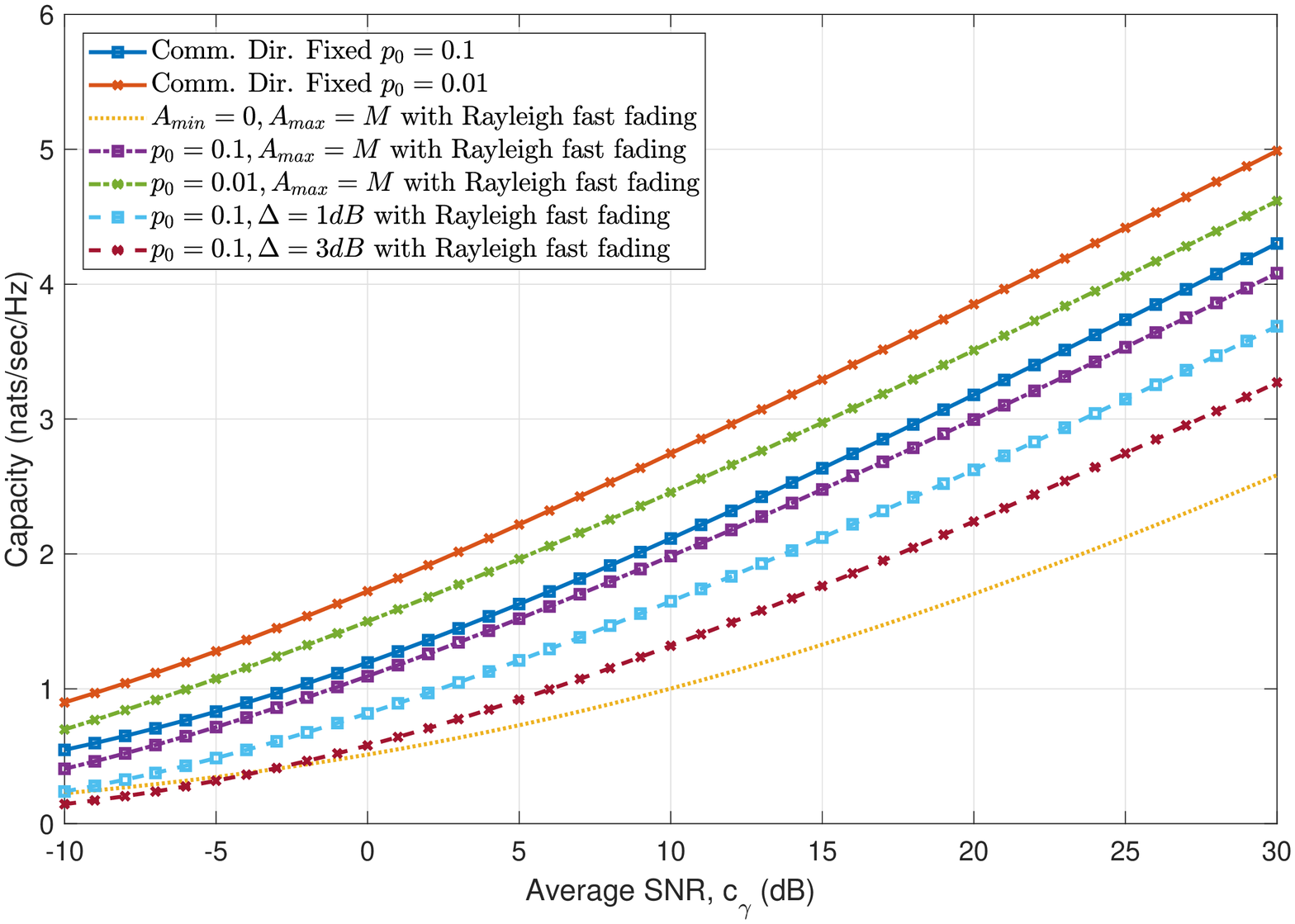}
		\label{fig:CapacityFastFading}
	}
	\subfigure[]
	{
		\includegraphics[width=0.45\textwidth]{./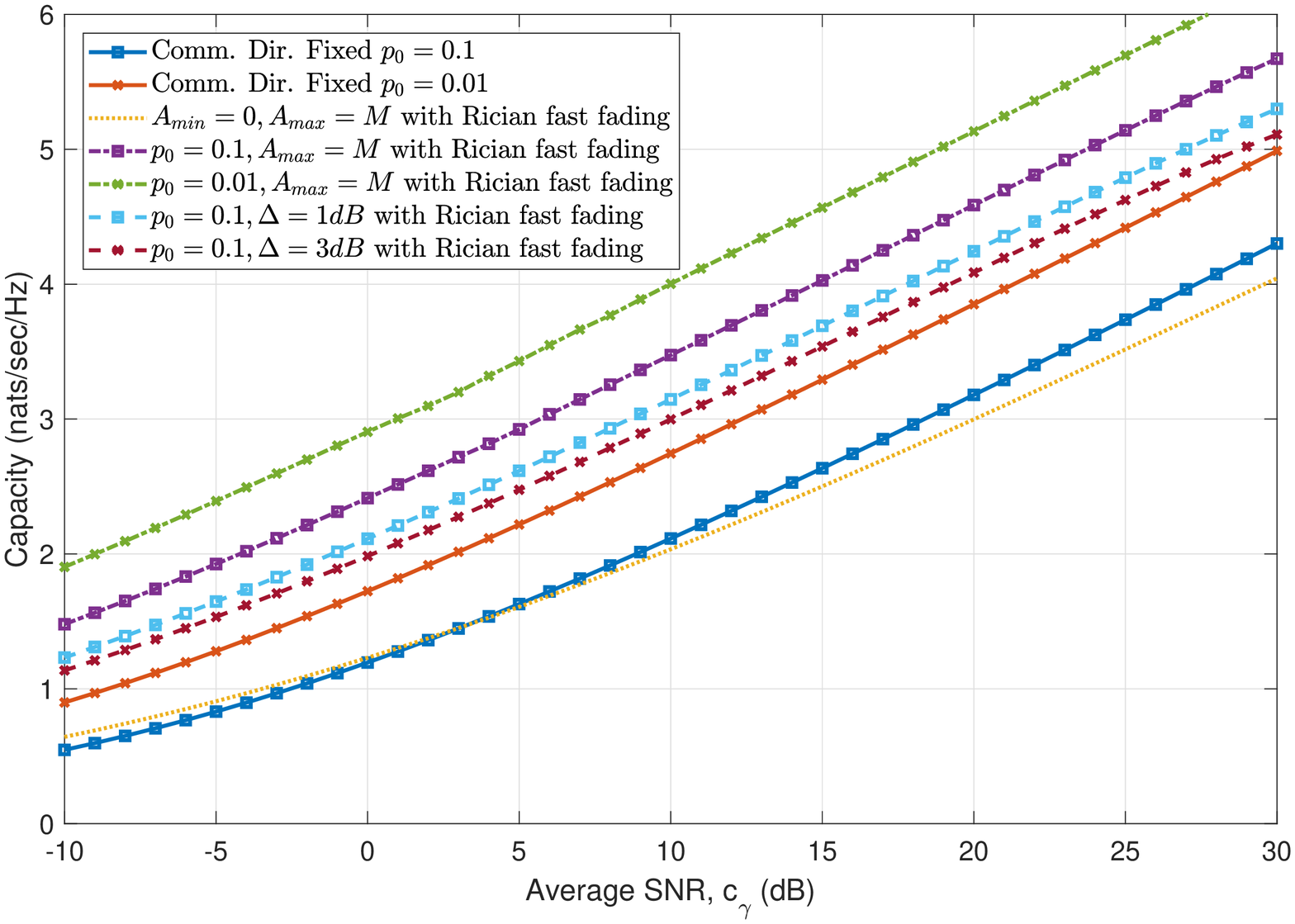}
		\label{fig:CapacityFastFadingRician}
	}
	\caption{Capacity for JRC Capable Coherent MIMO Radar under \textcolor{blue}{Rayleigh (a) and Rician (b)} fast fading conditions.}
	\label{fig:ALLCapacityFastFading}
\end{figure*}

\subsubsection{Case 3: Fast Fading Rayleigh Channel} 
In this case, we have assumed that channel has encountered all possible fades over a symbol period\textcolor{blue}{. C}hannel \textcolor{blue}{is} characterized by \textcolor{blue}{the} product distribution\textcolor{blue}{s} which \textcolor{blue}{are} introduced and analyzed \textcolor{blue}{in the} previous sections. This distribution has no truncated shape and it \textcolor{blue}{almost }spans over all values zero to infinity. This results relatively small degradation on the capacity for all methods under fast fading conditions as seen on Fig.\ref{fig:ALLCapacityFastFading}. Capacity calculations are done for $M=16$, $\sigma_H^2=0.5$ \textcolor{blue}{and $K=3$} using the Eq.(\ref{CtdrayFast}) and Eq.(\ref{CtdrayFast2}). Since \textcolor{blue}{all the capacity expressions are} not in closed form, we \textcolor{blue}{have} evaluate\textcolor{blue}{d} the integral with numerical methods using trapezoidal methods. \textcolor{blue}{These calculations are} evaluated with MATLAB. \textcolor{blue}{Specifically, capacity curves for Rician fast fading case displays highest capacity by comparing all the other capacity curves.} 

\section{Conclusion}
The increasing interest on connected devices causes crowded spectrum. To solve this problem, various efforts on developing crafty and compact technologies \textcolor{blue}{are proposed} as Joint Radar-Communication (JRC) systems. JRC ability can be established for coherent MIMO radar without disturbing the orthogonality and transmit beamforming requirements \textcolor{blue}{as in \cite{RealcMIMO}}. In this paper, the distribution characteristic of the unintentional modulation on signal amplitude is investigated. Then, the capacity expressions for different waveform generation methods under AWGN, slow and fast fading conditions are analyzed and evaluated. The results \textcolor{blue}{indicate} that, using the null direction fixed waveform generation method with single constraint as $A_l>A_0$ where $A_0$ is equal to $\sqrt{-M\ln(p_0)}$  displays the best capacity curves \textcolor{blue}{for under all channel conditions}. While reaching the capacity expressions, the pdf\textcolor{blue}{'s} of the product of a double truncated Rayleigh r.v. and a Rayleigh\textcolor{blue}{/Rician} distributed r.v. \textcolor{blue}{are} expressed for the first time in the literature. Lastly, the resulting expression\textcolor{blue}{s given in this paper are} also proven with \textcolor{blue}{the} numerical evaluation.

Proposed MIMO radar waveform generation process in \cite{RealcMIMO} under mobile scenarios may introduce computational complexity problems. To be more precise, the system must calculate the MIMO radar waveform in a very short period of time, since the direction of the receiver may change continuously due to the highly dynamic scenarios. The worst case may be that the MIMO waveform set changes from pulse to pulse. In order to reduce the computational cost, required number of iteration to generate MIMO waveform set can be decreased by loosing constraints. However, this will introduce some degradation on communication performance. This degradation is investigated in detail in \cite{RealcMIMO}. MIMO waveform generation algorithm in \cite{RealcMIMO} contains random process for each iteration. If the random permutation tables are generated off-line and stored to the system before the operation, the remaining process, i.e. summation and multiplication, will be fast and easy to implement on a system with a parallel processing capability. 

\appendix

\section{Proof of Amplitude Distribution of the Unintentional Modulation}
\label{ProofUnintAmpProof}

The Rayleigh distribution is the distribution of the sum of a large number of coplanar vectors with constant amplitudes and uniformly distributed phases \cite{RayleighBeckmann}, \cite{RayleighBeckmann2}. The sum in Eq.(\ref{sumofexponentials}) can be represented as a complex sum of random vectors as in Fig.\ref{fig:randomWalk}. This problem is very similar to the "random-walk problem" in mathematical statistics. Hence, if $\phi_m(l)$ presents a uniform distribution, $A_l$ will show Rayleigh distribution. Any phase distribution, $w_{\phi}(\phi)$, can be defined as a uniform distribution \cite{RayleighBeckmann2}, if it satisfies the condition below, 

\begin{figure}[t]
	\centering
	\includegraphics[width=0.3\textwidth]{./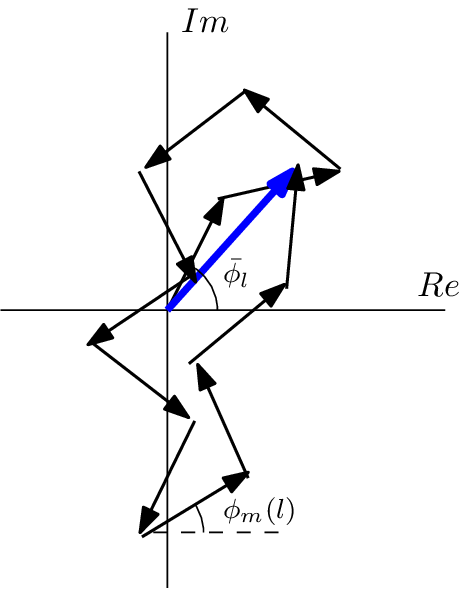}
	\caption{Random unit vector sum in the complex plane.}
	\label{fig:randomWalk}
\end{figure}

\begin{equation}
\label{uniformPr}
\sum_{k=-\infty}^{\infty}w_{\phi}(\phi+2k\pi)=\frac{1}{2\pi}+\epsilon(\phi),\qquad |\epsilon(\phi)|\ll\frac{1}{2\pi}.
\end{equation}

The first step is analyzing the phase component of $G_l(\theta_{c})$, which is $\phi_m(l)$. $\phi_m(l)$ can be written as a sum of two function as,
\begin{equation}
\begin{split}
\phi_m(l) =&\phi_{I_m(l)}+2\pi d\sin{\theta_{n}}I_m+2\pi d(\sin{\theta_{n}}\\
&-\sin{\theta_{c}})m-2\pi d\sin{\theta_{c}},
\end{split}
\end{equation}where $I_m\in\{I_{1},I_{2},...,I_{M}\}$ is the indexes of a random permutation sequence of $m\in\{1,2,...,M\}$ for each sub-pulse $l$ and $\phi_{I_{m}}$ is a phase component of the $m$-th antenna element and,

\begin{equation}
\label{fm}
\begin{split}
\tilde{h}(m,l)=&\phi_{I_m(l)}+2\pi d\sin{\theta_{n}}I_m(l)\\
=&\phi_{I_m(l)} + c_nI_m(l),\\
z(m)=&2\pi d(\sin{\theta_{n}}-\sin{\theta_{c}})m-2\pi d\sin{\theta_{c}}\\
=&mc_n-(m+1)c_c,
\end{split}
\end{equation}where $m=1,2,...,M$,  $l=1,2,...,L$, $c_n$ and $c_c$ are constants, equal to $2\pi d\sin{\theta_{n}}$, $2\pi d\sin{\theta_{c}}$, respectively.
Let $\phi_{m}$ is one time selected randomly from $(-\pi,\pi]$ and $M$ is a large number, $M\gg1$. $\tilde{h}(m,l)$ takes only $M$ different values, for each iteration these values are shuffled and summed with $z(m)$ which is a linear function. Hence, $\tilde{h}(m,l)$ takes values uniformly from the range $[-\pi+c_n,\pi+Mc_n]$. $z(m)$ is a linear discrete function and takes M different values from the range $[c_n+2c_c,(M+1)c_c]$. 

Let $\Phi$, $\tilde{H}$ and $Z$ are random variables which are selected from functions, $\phi_m(l)$, $\tilde{h}(m,l)$ and $z(m)$ respectively. Then, the distribution of $\phi_m(l)$ can be written as,
\begin{equation}
\label{dist1}
%\begin{split}
f_{\Phi}(\phi)=f_{\tilde{H}}(\tilde{h})*f_Z(z)=\int_{\mathbb{R}}f_{\tilde{H}}(\phi-z)f_Z(z)dz
%\end{split}
\end{equation}where ($*$) is convolution operation and probability density function of $z(m)$, $f_Z(z)$, given as,
\begin{equation}
f_Z(z)=\sum_{m=1}^{M}p_Z(mc_n-(m+1)c_c)\delta(z-(mc_n-(m+1)c_c)).
\end{equation}The distribution of $h(m,l)$, $f_{\tilde{H}}(\tilde{h})$ can be expressed as,
\begin{equation}
f_{\tilde{H}}(\tilde{h})=\sum_{l=1}^{L}\sum_{m=1}^{M}p_{\tilde{H}}(\phi_{I_m(l)}+c_nI_m(l))\delta(\tilde{h}-(\phi_{I_m(l)}+c_nI_m(l))).
\end{equation}
Let each value of $z(m)$ is distinct, then $p_Z(mc_n-(m+1)c_c)=1/M$. Then, $\tilde{h}(m,l)$ is said to be each value of it is distinct due to $\phi_{m}$ is selected randomly, then $p_{\tilde{H}}(\phi_{I_m}(l)+c_nI_m(l))$ becomes $1/M$. The probability density function $f_\Phi(\phi)$ can be rewritten as,
\begin{equation}
\label{pdfDelta1}
\begin{split}
f_\Phi(\phi)=&\int_{\mathbb{R}}f_{\tilde{H}}(\phi-g)\sum_{m=1}^{M}p_{\tilde{H}}(mc_n-(m+1)c_c)\\
&\times\delta(g-mc_n-(m+1)c_c)dg\\
=&\frac{1}{M}\sum_{m=1}^{M}\int_{\mathbb{R}}f_{\tilde{H}}(\phi-g)\delta(g-mc_n-(m+1)c_c)dg
\end{split}
\end{equation}and rewriting Eq.(\ref{pdfDelta1}),
\begin{equation}
\label{pdfDelta11}
\begin{split}
f_\Phi(\phi)=&\frac{1}{M}\sum_{m=1}^{M}f_{\tilde{H}}(\phi-mc_n-(m+1)c_c)\\
=&\frac{1}{M}\sum_{l=1}^{L}\sum_{m=1}^{M}p_{\tilde{H}}(\phi_{I_m}(l)+c_nI_m(l))\\
&\times\delta(\phi-(\phi_{I_m}(l)+c_nI_m(l)+mc_n-(m+1)c_c))\\
=&\frac{1}{L}\frac{1}{M}\sum_{l=1}^{L}\sum_{m=1}^{M}\delta(\phi-(\phi_{I_m}(l)+c_nI_m(l)+mc_n-(m+1)c_c))\\
=&\frac{1}{LM}\sum_{l=1}^{L}\sum_{m=1}^{M}\delta(\phi-\phi_m(l)),
\end{split}
\end{equation}and since the $\phi_m$ is a phase component, it has periodic behavior as, $\phi_m=2\pi+\phi_m$. Therefore, $\delta(\phi-\phi_m(l))$ is equal to $\delta(\phi+2\pi-\phi_m(l))$. Therefore, we can expand dirac delta function with Fourier series expansion, (\ref{pdfDelta11}) can be rewritten as,
\begin{equation}
\label{pdfDelta2}
\begin{split}
f_\Phi&(\phi)=\frac{1}{LM}\sum_{l=1}^{L}\sum_{m=1}^{M}\bigg[\frac{1}{2\pi}+\frac{1}{\pi}\sum_{n=1}^{\infty}\bigg[\cos(n\phi_m(l))\cos(n\phi)\\
&+\sin(n\phi_m(l))\sin(n\phi)\bigg]\bigg]\\
=&\frac{1}{LM}\sum_{l=1}^{L}\sum_{m=1}^{M}\left[\frac{1}{2\pi}+\frac{1}{\pi}\sum_{n=1}^{\infty}\cos(n(\phi-\phi_m(l)))\right]
\end{split}
\end{equation}and after reordering Eq.(\ref{pdfDelta2}),
\begin{equation}
\label{pdfDelta22}
\begin{split}
f_\Phi&(\phi)=\frac{1}{2\pi}+\frac{1}{\pi LM}\sum_{n=1}^{\infty}\sum_{l=1}^{L}\sum_{m=1}^{M}\cos(n(\phi-\phi_m(l)))\\
=&\frac{1}{2\pi}+\frac{1}{\pi LM}\sum_{n=1}^{\infty}\big[\cos(n(\phi-\phi_1(1)))+\cos(n(\phi-\phi_2(1)))\\
&+...+\cos(n(\phi-\phi_M(L)))\big].
\end{split}
\end{equation}Since $M$ and $L$ are large numbers, we can always find a phase component couple $\{\phi_{m_1}(l_1),\phi_{m_2}(l_2)\}$ for $m_1,m_2\in\{1,...,M\}$ and $l_1,l_2\in\{1,...,L\}$, with a relation of,
\begin{equation}
\cos(n(\phi-\phi_{m_1}(l_1)))+\cos(n(\phi-\phi_{m_2}(l_2)))\approx 0,
\end{equation}when $\phi_{m_1}(l_1)\approx\phi_{m_2}(l_2)+\pi$. In order to prove this, $\cos(n(\phi-\phi_{m_1}(l_1)))$ can be written in terms of $\phi_{m_2}(l_2)$ as,
\begin{equation}
\label{relationPhi}
\begin{split}
\cos&(n(\phi-\phi_{m_1}(l_1)))=\cos(n\phi)\cos(n\phi_{m_1}(l_1))\\
&+\sin(n\phi)\sin(n\phi_{m_1}(l_1))\\
\approxeq&\cos(n\phi)\cos(n\phi_{m_2}(l_2)+n\pi)\\
&+\sin(n\phi)\sin(n\phi_{m_2}(l_2)+n\pi)\\
\approxeq&-\cos(n\phi)\cos(n\phi_{m_2}(l_2))-\sin(n\phi)\sin(n\phi_{m_2}(l_2))\\
\approxeq&-\cos(n(\phi-\phi_{m_2}(l_2))).
\end{split}
\end{equation}Hence, (\ref{pdfDelta22}) can be rewritten as using the relation (\ref{relationPhi}),
\begin{equation}
\label{pdfDelta3}
\begin{split}
f_\Phi&(\phi)=\frac{1}{2\pi}+\frac{1}{\pi LM}\\
&\times\sum_{n=1}^{\infty}\bigg[\cos(n(\phi-\phi_1(1)))+\cos(n(\phi-\phi_1(1)-\pi))+...\\
&+\cos(n(\phi-\phi_{m_1}(l_1)))+\cos(n(\phi-\phi_{m_1}(l_1)-\pi))+...\\
&+\cos(n(\phi-\phi_{M}(L)))+\cos(n(\phi-\phi_{M}(L)-\pi))\bigg].
\end{split}
\end{equation}Therefore, when $M\to\infty$, the probability density function $f_\Phi(\phi)$ approximates uniform distribution, $U(-\pi,\pi)$, as,
\begin{equation}
\label{pdfDelta4}
\begin{split}
\lim\limits_{M\to\infty}f_\Phi(\phi)\approxeq&\frac{1}{2\pi}+\frac{1}{\pi LM}\sum_{n=1}^{\infty}\left[0+...+0\right]\\
\approxeq&\frac{1}{2\pi}.
\end{split}
\end{equation}Since the phase distribution approximates uniform distribution, the distribution of sum of exponentials will follows Rayleigh distribution. 

\section{Proof of Product Distribution of Truncated Rayleigh and Rayleigh Random Variables}
\label{ProofProductDistTrunRay}

There are two methods to reach product distribution. First is Mellin Convolution \cite{BookProbMath} which is defined as,
\begin{equation}
\label{MellinConvolution}
p_{X}(x)=\int_{-\infty}^{\infty}p_{X_1}(x_1)p_{X_2}\left(\frac{x}{x_1}\right)\frac{1}{|x_1|}dx_1.
\end{equation}where $X=X_1X_2$. Then, we can reach the pdf of the r.v. $X$ replacing the pdf functions with double truncated Rayleigh and Rayleigh distributions,
\begin{equation}
\begin{split}
p_{X}(x)&=\int_{X^{min}_1}^{X^{max}_1}\frac{x}{x_1\sigma_{X_2}^2}e^{\frac{-x^2}{x_1^22\sigma_{X_2}^2}}\frac{x_1}{\alpha(\beta_1,\beta_2)\sigma_{X_1}^2}e^{\frac{-x_1^2}{2\sigma_{X_1}^2}}\frac{1}{x_1}dx_1\\
&=\frac{x}{\alpha(\beta_1,\beta_2)\sigma_{X_1}^2\sigma_{X_2}^2}\int_{X^{min}_1}^{X^{max}_1}\frac{1}{x_1}e^{\left(-\frac{x^2}{x_1^22\sigma_{X_2}^2}-\frac{-x_1^2}{2\sigma_{X_1}^2}\right)}dx_1
\end{split}
\end{equation}where $\beta_1=\frac{(X^{min}_1)^2}{2\sigma_{X_1}^2}$, $\beta_2=\frac{(X^{max}_1)^2}{2\sigma_{X_1}^2}$, and by substituting $v=\frac{x_1^2}{2\sigma_{X_1}^2}$ using the transformation theorem we have,
\begin{equation}
\label{mellinconvint1}
\begin{split}
p_{X}(x)&=\frac{x}{\alpha(\beta_1,\beta_2)2\sigma_{X}^2}\int_{\beta_1}^{\beta_2} \frac{1}{v} e^{\left(-v-\left(\frac{x^2}{4\sigma_{X}^2}\right)\frac{1}{v}\right)}dv\\
&=\frac{x}{\alpha(\beta_1,\beta_2)2\sigma_{X}^2}\bigg[\int_{\beta_1}^{\infty} \frac{1}{v} e^{\left(-v-\left(\frac{x^2}{4\sigma_{X}^2}\right)\frac{1}{v}\right)}dv\\
&-\int_{\beta_2}^{\infty} \frac{1}{v} e^{\left(-v-\left(\frac{x^2}{4\sigma_{X}^2}\right)\frac{1}{v}\right)}dv\bigg]\\
&=\frac{x\left[\Gamma\left(0,\beta_1;\frac{x^2}{4{\sigma_X}^2}\right) - \Gamma\left(0,\beta_2;\frac{x^2}{4{\sigma_X}^2}\right)\right]}{\alpha(\beta_1,\beta_2)2\sigma_{X}^2},
\end{split}
\end{equation}where $\Gamma(a,x;b)$ is the generalized incomplete gamma function \cite{GenIncGamma} as $\Gamma(a,x;b)=\int_{x}^{\infty}t^{a-1}e^{-t-bt^{-1}}dt$. 

Second method uses Mellin transform to reach product distribution. Mellin transform of the product of two independent r.v. from different distributions is equal to the product of their Mellin transforms, $\mathcal{M}_{X_1X_2}=\mathcal{M}_{X_1}\mathcal{M}_{X_2}$. Mellin transform of a distribution $p_X(x)$ can be given as \cite{IntegTrans},
\begin{equation}
\mathcal{M}{p_{X}(x)}=\psi(s)=\int_{0}^{\infty}x^{s-1}p_{X}(x)dx,
\end{equation}and the inverse Mellin transform is,
\begin{equation}
\mathcal{M}^{-1}{\psi(s)}=p_{X}(x)=\frac{1}{2\pi j}\oint_{c-j\infty}^{c+j\infty}x^{-s}\psi(s)ds.
\end{equation}where $j=\sqrt{-1}$ and the integration is along any path $Re(s)=c$, such that $\psi(s)$ exists. Then, Mellin transform of the r.v. $X_1$ can be expressed as,
\begin{equation}
\begin{split}
&\mathcal{M}{p_{X_1}(x_1)}=\frac{1}{\alpha(\beta_1,\beta_2)}\int_{X^{min}_1}^{X^{max}_1}x^{s-1}\frac{x_1}{\sigma_{X_1}^2}e^{\frac{-x_1^2}{2\sigma_{X_1}^2}}dx_1\\
&=\frac{1}{\alpha(\beta_1,\beta_2)}\bigg[\int_{X^{min}_1}^{\infty}x^{s-1}\frac{x_1}{\sigma_{X_1}^2}e^{\frac{-x_1^2}{2\sigma_{X_1}^2}}dx_1-\int_{X^{max}_1}^{\infty}x^{s-1}\frac{x_1}{\sigma_{X_1}^2}e^{\frac{-x_1^2}{2\sigma_{X_1}^2}}dx_1\bigg]\\
&=\frac{1}{\alpha(\beta_1,\beta_2)}\bigg[\left(\frac{1}{2\sigma_{X_1}^2}\right)^{\frac{1-s}{2}}\Gamma\left(\frac{s+1}{2},\beta_1\right)-\left(\frac{1}{2\sigma_{X_1}^2}\right)^{\frac{1-s}{2}}\Gamma\left(\frac{s+1}{2},\beta_2\right)\bigg]
\end{split}
\end{equation}
\begin{equation}
%\begin{split}
\mathcal{M}{p_{X_2}(x_2)}=\int_{0}^{\infty}x_2^{s-1}\frac{x_2}{\sigma_{X_2}^2}e^{\frac{-x_2^2}{2\sigma_{X_2}^2}}dx_2=\left(\frac{1}{2\sigma_{X_2}^2}\right)^{\frac{1-s}{2}}\Gamma\left(\frac{s+1}{2}\right)
%\end{split}
\end{equation}where $\Gamma(.)$ denotes the gamma function \cite{BookHandMath} and  $\Gamma(a,x)$ is the incomplete gamma integral \cite{GenIncGamma}, $\Gamma(a,x)=\int_{x}^{\infty}t^{a-1}e^{-t}dt$. Hence, Mellin transform of the product becomes,
\begin{equation}
%\begin{split}
\mathcal{M}{p_{X}(x)}=\psi(s)=\frac{\left(\frac{1}{4\sigma_{X}^2}\right)^{\frac{1-s}{2}}\Gamma\left(\frac{s+1}{2}\right)\left[\Gamma\left(\frac{s+1}{2},\beta_1\right) -\Gamma\left(\frac{s+1}{2},\beta_2\right)\right]}{\alpha(\beta_1,\beta_2)}.
%\end{split}
\end{equation}Then, by taking the inverse Mellin transform,
\begin{equation}
\begin{split}
&p_{X}(x)=\mathcal{M}^{-1}\left[\psi(s)\right]\\
&=\frac{\frac{1}{\alpha(\beta_1,\beta_2)}}{2\pi j}\bigg[\oint\limits_{\mathrm{C}}x^{-s}\left(\frac{1}{4\sigma_{X}^2}\right)^{\frac{1-s}{2}}\Gamma\left(\frac{s+1}{2}\right)\Gamma\left(\frac{s+1}{2},\beta_1\right)ds\\
& - \oint\limits_{\mathrm{C}}x^{-s}\left(\frac{1}{4\sigma_{X}^2}\right)^{\frac{1-s}{2}}\Gamma\left(\frac{s+1}{2}\right)\Gamma\left(\frac{s+1}{2},\beta_2\right)ds\bigg],
\end{split}
\end{equation}product distribution can be defined with a contour integral above. If we replace $(s+1)/2$ by $v$ in the above equation, this	does not affect the path of integration and the integral becomes,
\begin{equation}
\begin{split}
p_{X}(x)&=\frac{\frac{1}{\alpha(\beta_1,\beta_2)}}{2\pi j}\bigg[\oint\limits_{\mathrm{C}}x^{1-2v}\left(\frac{1}{4\sigma_{X}^2}\right)^{1-v}\Gamma\left(v\right)\Gamma\left(v,\beta_1\right)2dv\\
& - \oint\limits_{\mathrm{C}}x^{1-2v}\left(\frac{1}{4\sigma_{X}^2}\right)^{1-v}\Gamma\left(v\right)\Gamma\left(v,\beta_2\right)2dv\bigg]\\
&=\frac{\frac{1}{\alpha(\beta_1,\beta_2)}}{2\pi j}\left(\frac{x}{2\sigma_{X}^2}\right) \bigg[\oint\limits_{\mathrm{C}}\left(\frac{x^2}{4\sigma_{X}^2}\right)^{-v}\Gamma\left(v\right)\Gamma\left(v,\beta_1\right)dv\\
& - \oint\limits_{\mathrm{C}}\left(\frac{x^2}{4\sigma_{X}^2}\right)^{-v}\Gamma\left(v\right)\Gamma\left(v,\beta_2\right)dv\bigg].
\end{split}
\end{equation}Generalized incomplete gamma functions have inverse Mellin transform representations from \cite[Eq.~7.1]{ExtendGenIncGamma} which can be given by,
\begin{equation}
\label{GenincompToIncomp}
\Gamma(a,x;b)=\frac{1}{2\pi j}\oint_{c-j\infty}^{c+j\infty}\Gamma\left(s\right)\Gamma\left(a+s,x\right)b^{-s}ds.
\end{equation}After applying the Eq.(\ref{GenincompToIncomp}), $p_{X}(x)$ can be rewritten as,
\begin{equation}
p_{X}(x)=\frac{x\left[\Gamma\left(0,\beta_1;\frac{x^2}{4{\sigma_X}^2}\right) - \Gamma\left(0,\beta_2;\frac{x^2}{4{\sigma_X}^2}\right)\right]}{\alpha(\beta_1,\beta_2)2\sigma_{X}^2}.
\end{equation}

\section{Proof of Product Distribution of Truncated Rayleigh and Rician Random Variables}
\label{ProofProductDistTrunRayAndRice}
\textcolor{blue}{
Mellin Convolution in Eq.\ref{MellinConvolution} is also used for the derivation of the joint pdf of the product distribution of truncated Rayleigh and Rician r.v's. We can reach the pdf of the r.v. $X$ replacing the pdf functions with double truncated Rayleigh and Rician distributions,}
\textcolor{blue}{
\begin{equation}
\begin{split}
p_{X}(x)&=\int_{X^{min}_1}^{X^{max}_1}\frac{xe^{\left(\frac{-x^2}{x_1^22\sigma_{X_2}^2}+\frac{-\mu^2}{2\sigma_{X_2}}\right)}}{x_1\sigma_{X_2}^2}I_0\left(\frac{\mu x}{x_1\sigma_{X_2}}\right)\frac{x_1e^{\frac{-x_1^2}{2\sigma_{X_1}^2}}}{\alpha(\beta_1,\beta_2)\sigma_{X_1}^2}\frac{1}{x_1}dx_1\\
&=\frac{xe^{\frac{-\mu^2}{2\sigma_{X_2}}}}{\alpha(\beta_1,\beta_2)\sigma_{X_1}^2\sigma_{X_2}^2}\int_{X^{min}_1}^{X^{max}_1}\frac{1}{x_1}e^{\left(-\frac{x^2}{x_1^22\sigma_{X_2}^2}-\frac{-x_1^2}{2\sigma_{X_1}^2}\right)}I_0\left(\frac{\mu x}{x_1\sigma_{X_2}}\right)dx_1
\end{split}
\end{equation}where $\beta_1=\frac{(X^{min}_1)^2}{2\sigma_{X_1}^2}$, $\beta_2=\frac{(X^{max}_1)^2}{2\sigma_{X_1}^2}$. After applying the series representation of the $I_0(h)$ which is $\sum_{k=0}^{\infty}\frac{\left(\frac{h^2}{4}\right)^k}{k!\Gamma(k+1)}$, $p_X(x)$ becomes,}
\textcolor{blue}{
\begin{equation}
\begin{split}
p_{X}(x)=&\frac{xe^{\frac{-\mu^2}{2\sigma_{X_2}}}}{\alpha(\beta_1,\beta_2)2\sigma_{X}^2}\int_{X^{min}_1}^{X^{max}_1}\frac{1}{x_1}e^{\left(-\frac{x^2}{x_1^22\sigma_{X_2}^2}-\frac{x_1^2}{2\sigma_{X_1}^2}\right)}\sum_{i=0}^{\infty}\frac{\left(\frac{\mu^2x^2}{4x_1^2\sigma_{X_2}^2}\right)^i}{i!\Gamma(i+1)}dx_1\\
=&\frac{xe^{\frac{-\mu^2}{2\sigma_{X_2}}}}{\alpha(\beta_1,\beta_2)2\sigma_{X}^2}\sum_{i=0}^{\infty}\frac{\left(\frac{\mu^2x^2}{4\sigma_{X_2}^2}\right)^i}{i!i!}\int_{X^{min}_1}^{X^{max}_1}\frac{1}{x_1^{(2i+1)}}e^{\left(-\frac{x^2}{x_1^22\sigma_{X_2}^2}-\frac{x_1^2}{2\sigma_{X_1}^2}\right)}dx_1
\end{split}
\end{equation}and by substituting $v=\frac{x_1^2}{2\sigma_{X_1}^2}$ using the transformation theorem we have,}
\textcolor{blue}{
\begin{equation}
\label{mellinconvint12}
\begin{split}
p_{X}(x)&=\frac{xe^{\frac{-\mu^2}{2\sigma_{X_2}}}}{\alpha(\beta_1,\beta_2)2\sigma_{X}^2}\sum_{i=0}^{\infty}\frac{\left(\frac{x\mu}{2}\right)^{2i}\left(\frac{1}{2\sigma_{X_2}^2\sigma_{X}^2}\right)^i}{i!i!}\int_{\beta_1}^{\beta_2} \frac{e^{\left(-v-\left(\frac{x^2}{4\sigma_{X}^2}\right)\frac{1}{v}\right)}}{v^{i+1}}dv\\
&=\frac{xe^{\frac{-\mu^2}{2\sigma_{X_2}}}}{\alpha(\beta_1,\beta_2)2\sigma_{X}^2}\sum_{i=0}^{\infty}\frac{\left(\frac{x\mu}{2}\right)^{2i}}{i!i!\left(2\sigma_{X_2}^2\sigma_{X}^2\right)^i}\bigg[\int_{\beta_1}^{\infty} \frac{e^{\left(-v-\left(\frac{x^2}{4\sigma_{X}^2}\right)\frac{1}{v}\right)}}{v^{i+1}}dv\\
&-\int_{\beta_2}^{\infty} \frac{e^{\left(-v-\left(\frac{x^2}{4\sigma_{X}^2}\right)\frac{1}{v}\right)}}{v^{i+1}}dv\bigg]\\
&=\frac{xe^{\frac{-\mu^2}{2\sigma_{X_2}^2}}}{\alpha(\beta_1,\beta_2)2{\sigma_X}^2}\sum_{i=0}^{\infty}\frac{\Gamma\left(-i,\beta_1;\frac{x^2}{4{\sigma_X}^2}\right)-\Gamma\left(-i,\beta_2;\frac{x^2}{4{\sigma_X}^2}\right)}{i!i!(2\sigma_{X_2}^2\sigma_{X}^2)^{i}\left(\frac{x\mu}{2}\right)^{-2i}},
\end{split}
\end{equation}where $\Gamma(a,x;b)$ is the generalized incomplete gamma function \cite{GenIncGamma} as $\Gamma(a,x;b)=\int_{x}^{\infty}t^{a-1}e^{-t-bt^{-1}}dt$.}

\section{A Numerical Method for the Calculation of Generalized Incomplete Gamma Functions}
\label{NumericGenIncGamFunc}

In \cite{GenIncGammaBessel}, authors present a numerical method to compute generalized incomplete gamma functions based on Exponential Integrals with a relative precision of $10^{-6}$ and an absolute precision of $10^{-25}$. This approach is also given in \cite{AsymGenIncGamFnc}. This method can be written as,

If $a$, in $\Gamma(a,x;b)$, is an integer, $a=q\le 0$:

\begin{equation}
\Gamma(q,x;b)=\sum_{n=0}^{\infty}\frac{(-b/x)^n}{n!}x^qE_{-q+n+1}(x),\quad x\ge\sqrt{b}, 
\end{equation}and,
\begin{equation}
%\begin{split}
\Gamma(q,x;b)=2b^{\frac{q}{2}}K_q\left(2\sqrt{b}\right)-\sum_{n=0}^{\infty}\frac{-x^n}{n!}x^qE_{q+n+1}(x),\quad 0\le x\le\sqrt{b},
%\end{split}
\end{equation}where $E_{n}(x)=\int_{1}^{\infty}t^{-n}e^{-xt}dt,\quad x>0, n=0,1,...$

\end{document}